\documentclass[11pt,a4paper]{article}

\usepackage[a4paper,margin=1.2in]{geometry}
\usepackage{amsmath,amssymb,amsthm,mathtools}
\usepackage[dvipsnames]{xcolor}
\usepackage[colorlinks=true,citecolor=Blue,urlcolor=Blue,linkcolor=Blue]{hyperref}
\usepackage[margin,draft]{fixme}
\usepackage{enumerate}
\usepackage{natbib}

  \usepackage{tikz}
\usetikzlibrary {shapes}
\usetikzlibrary {arrows}
\usetikzlibrary {positioning}

\usepackage[OT2,OT1]{fontenc}
 \newcommand\cyr{%
 \renewcommand\rmdefault{wncyr}%
 \renewcommand\sfdefault{wncyss}%
 \renewcommand\encodingdefault{OT2}%
 \normalfont
 \selectfont}
 \DeclareTextFontCommand{\textcyr}{\cyr} 

\title{The Markov Marginal Problem for Density Operators}
\author{Steffen Lauritzen\\University of Copenhagen \and Piotr Zwiernik\\Universitat Pompeu Fabra}
\date{}

\newcommand{\im}{\operatorname{im}}

\newcommand{\graph}{\mathcal{G}}
\newcommand{\hilb}{\mathcal{H}}
\newcommand{\cliques}{\mathcal{C}}
\newcommand{\sep}{\mathcal{S}}
\newcommand{\gen}{\mathcal{A}}
\newcommand{\trace}{\mathrm{Tr}}

\newcommand{\R}{\mathbb{R}}
\newcommand{\C}{\mathbb{C}}
\newcommand{\cL}{\mathcal{L}}

\renewcommand{\>}{\rangle}
\newcommand{\<}{\langle}
\newcommand{\cR}{\mathcal{R}}

\newcommand{\wcomb}{\circledast}

\newcommand{\gse}{\mbox{\,$\perp\!\!\!\perp_\graph$\,}}
\newcommand{\ciph}{\mbox{\,$\perp\!\!\!\perp_Q$\,}}
\newcommand{\cd}{\,|\,}

\newtheorem{thm}{Theorem}[section]
\newtheorem{prop}[thm]{Proposition}
\newtheorem{lem}[thm]{Lemma}

\theoremstyle{definition}
\newtheorem{definition}[thm]{Definition}
\newtheorem{ex}[thm]{Example}
\newtheorem{rem}[thm]{Remark}

\begin{document}
\maketitle

\begin{abstract}
We study when local reduced density operators, viewed as quantum marginals, can
be assembled into a global quantum state with a prescribed Markov structure.
The starting point is a canonical logarithmic construction $T(\mathcal R)$, the
noncommutative analogue of the junction-tree formula for decomposable graphical
models. Unlike in the classical case, this formal construction may fail:
noncommutativity can prevent it from being a normalized state with the
prescribed marginals. We prove that this obstruction is captured exactly by a
trace condition. For two overlapping marginals, and for clique marginals on a
chordal graph, the condition $\trace(T(\mathcal R))=1$ is equivalent to the
existence of a quantum Markov completion. When it exists, the completion is
unique, equal to $T(\mathcal R)$, and selected by the maximum entropy principle.
In the two-clique case, we also give an equivalent conditional reconstruction
characterization: the two natural one-sided sandwich reconstructions agree if
and only if the trace condition holds. We introduce the global quantum
information $g{\rm I}(\graph)_\rho$ associated with a chordal graph $\graph$ and show
that it is a relative-entropy discrepancy from $\rho$ to the logarithmic
candidate, with a trace correction when the candidate is not normalized. We
also prove an intersection property for strictly positive quantum conditional
independence. Three-qubit Pauli examples illustrate how the quantum obstructions are
real: local consistency, feasibility, Markov feasibility, and maximum entropy
can all separate.
\end{abstract}

\noindent\textbf{Keywords:}
quantum conditional independence; chordal graph; density operator; quantum marginal problem;
maximum entropy; quantum Markov property; total correlation.


\section{Introduction}

A recurring problem in quantum information is to understand what can be inferred
from local views of a global state. A multipartite density operator determines
reduced density operators on its subsystems. Conversely, one may ask whether a
given family of reduced density operators is compatible with a global state,
and, if so, whether there is a canonical way to choose such a state. This is
the quantum marginal problem. It is the noncommutative analogue of a familiar
problem in probability and statistics: reconstructing, or approximating, a
joint distribution from overlapping marginals.

In the classical case, the maximum entropy principle gives a natural answer.
Among all distributions with the prescribed marginals, it selects the one that
adds as little extra information as possible. For decomposable graphical models
this principle has a particularly simple form. If the prescribed marginals are
the clique marginals of a chordal graph and agree on overlaps, then the unique
maximum entropy completion always exists and is given by the junction-tree formula. Moreover, it is Markov
with respect to the graph; see, for example, \citet{lauritzen:26}. This is one
of the reasons chordal graphs play a central role in graphical models. It also
parallels the positive definite completion theorem for chordal graphs
\citep{grone:etal:84}. More broadly, marginal problems and their dual
formulations have a long history in probability and optimization; see, for example, 
\cite{kellerer:64,kellerer:64zfw,kellerer_1984} and \cite{vorobev:62,vorobev:63,vorobev:67}.

The quantum marginal problem has been studied from several complementary
viewpoints. One prominent direction concerns spectral compatibility: which
spectra of reduced density operators can arise from a common global state. This
line of work is closely related to representation theory;
see, for instance, \citet{klyachko2004quantum},
\citet{christandl_mitchison_2006}, and the overview in
\citet{walter_2014}. Explicit compatibility criteria are known in several
special cases, including constraints for one-qubit marginals of pure
multiqubit states \citep{higuchi_sudbery_szulc_2003} and low-dimensional
bipartite settings \citep{bravyi_2004}. From a computational point of view,
the general consistency problem for local density matrices is hard; see
\citet{liu_2006} for a precise statement. More recent work gives other forms
of compatibility criteria, including a countable family of inequalities that
is necessary and sufficient for compatibility \citep{fraser_2022}, and
entropy-based criteria for marginal compatibility motivated by many-body
systems \citep{kim_2021,kim_2021_simplification}.

The present paper studies a different but related question. Rather than
characterizing all compatible marginal families, we ask when locally consistent
marginals admit a completion with a prescribed quantum Markov structure. The
quantum case is more delicate than the classical decomposable case: reduced
density operators need not commute, and consistency on overlaps is no longer
enough for the junction-tree reconstruction to work. Even when a global
completion exists, the formal logarithmic expression suggested by the
classical formula may fail to be normalized, may fail to have the prescribed
marginals, or may fail to satisfy the expected Markov properties. Our main
results identify the additional noncommutative obstruction: the logarithmic
junction-tree candidate is a valid Markov completion precisely under a
trace-one condition.

The central object is a canonical logarithmic operator $T(\mathcal R)$
associated with a family $\mathcal R=\{\rho_C:C\in\mathcal C\}$ of prescribed
reduced density operators. In the
two-clique case with prescribed marginals on $A\cup C$ and $B\cup C$, where $A, B, C$ are disjoint and $V=A\cup B\cup C$, it is
$$
T(\mathcal R)
=
\exp\{\log\rho_{A\cup C}+\log\rho_{B\cup C}-\log\rho_C\}.
$$
For clique marginals on a chordal graph, the analogous construction adds the
logarithms of the clique marginals and subtracts the logarithms of the
separator marginals with their multiplicities. This is the direct
noncommutative analogue of the classical junction-tree formula. The main
question is when this formal logarithmic reconstruction is a genuine density
operator with the prescribed marginals.

Our first result answers this question for two overlapping marginals. We prove
that $\trace(T(\mathcal R))\le 1$ and that the trace-one condition
$\trace(T(\mathcal R))=1$ is equivalent to the existence of a quantum
conditionally independent completion. When this happens, the completion is
unique and is equal to $T(\mathcal R)$. We then give an equivalent
conditional reconstruction form of the same criterion. In classical notation,
a Markov distribution satisfies
$$
p(a,b,c)=p(a\mid c)p(b,c)=p(b\mid c)p(a,c).
$$
For density operators, the two corresponding one-sided reconstructions need
not agree. We show that their agreement is equivalent to normality of a simple
operator
\begin{equation}\label{eq:K}
K=\rho_{A\cup C}^{1/2}\rho_C^{-1/2}\rho_{B\cup C}^{1/2},    
\end{equation}
and, in this case, $T(\mathcal R)=KK^*=K^*K$. This gives a concrete
multiplicative counterpart to the logarithmic trace criterion and links the
completion problem to the equality case in monotonicity of quantum relative
entropy.

We then extend the trace criterion to clique marginals on chordal graphs. For
a pairwise consistent family $\mathcal R$ of strictly positive clique
marginals, we prove that $\trace(T(\mathcal R))=1$ is equivalent to the
existence of a quantum Markov completion with respect to the graph. When it
exists, this completion is unique, equal to $T(\mathcal R)$, and also the
unique maximum entropy element among all completions with the prescribed clique
marginals. Thus the classical chordal graph construction has an
exact quantum analogue, but only under an additional trace-one condition that
captures a genuinely noncommutative obstruction.

A guiding quantity in the paper is the \emph{global quantum information}
$g{\rm I}(\graph)_\rho$ of a state $\rho$ relative to a chordal graph $\graph$.
It compares the entropy of $\rho$ with the entropy predicted by the chordal
Markov formula from its clique and separator reductions. In the two-clique
case this is the conditional mutual information ${\rm I}(A:B\mid C)_\rho$, while
for the empty graph it becomes the quantum analogue of multiinformation
\citep{studeny:05}. We show that $g{\rm I}(\graph)_\rho$ is a relative-entropy
discrepancy between $\rho$ and the logarithmic candidate determined by its
clique marginals, with a trace correction when the candidate is not normalized.
When the trace-one condition holds, this reduces to the relative entropy from
$\rho$ to its canonical graphical Markov reconstruction.

We also record a structural fact about quantum conditional independence that is
needed for graphical arguments. In the strictly positive finite-dimensional
setting considered here, quantum conditional independence satisfies the
intersection axiom. This property is automatic for classical conditional
independence under positivity, but it is not formally shown in the quantum setting; in
particular, \citet{leifer_poulin_2008} left open whether the entropic quantum
conditional independence relation has the full graphoid property. Our proof is
based on the equality case in monotonicity of quantum relative entropy under
partial trace. We include the argument in
Appendix~\ref{app:petz-partial-trace}. The appendix also derives the
corresponding Petz reconstruction formula in the partial-trace case, since this
formula is used repeatedly in the paper: in the proof of the intersection property, in the
two-clique reconstruction criterion, and in the comparison with one-sided
conditional reconstructions.

{Our results are also related to the literature on equality in strong
subadditivity and quantum conditional independence.} Equality in strong
subadditivity, or equivalently vanishing quantum conditional mutual information,
has several known characterizations. \citet{ruskai_2002} gave the logarithmic equality
condition, \cite{petz2003monotonicity} related equality to sufficiency and recovery for monotonicity of
relative entropy, and \citet{hayden_jozsa_petz_winter_2004} gave a structural
decomposition theorem for states saturating strong subadditivity. The
conditional-density and graphical-model point of view was developed by
\citet{leifer_poulin_2008}, who introduced quantum analogues of Markov
networks, factor graphs, and belief propagation, and emphasized that several
classical graphical-model equivalences break down in the quantum setting.
Related equivalent conditions, including sandwich formulae involving the
operator $K$ in \eqref{eq:K}, 
were studied by \citet{zhang:13}. Our focus is different: we use these
two-clique identities as local building blocks, but our main results concern
prescribed marginals on chordal graphs. We identify the trace condition under
which the logarithmic chordal construction is a valid density operator, is
Markov, and is the unique maximum entropy completion.

The paper is organized as follows. Section~\ref{sec:preliminaries} collects
notation and basic facts about density operators, entropy, relative entropy,
and quantum conditional independence. Section~\ref{sec:markov-completions}
develops the main theory: the two-clique trace criterion, its
conditional reconstruction form, the chordal trace criterion, the global
information identity, and the maximum entropy characterization. Section~\ref{sec:examples} presents examples based on Pauli expansions, illustrating
the noncommutative obstructions to the classical chordal formula. Appendix
\ref{app:petz-partial-trace} contains a self-contained proof of equality in
monotonicity and the associated sandwich formula for the partial trace.

\section{Preliminaries}\label{sec:preliminaries}

In this section we describe our notation and collect a few elementary facts that will be
used repeatedly throughout the paper. Since several of our later arguments rely
on concrete matrix manipulations, we keep the presentation explicit.

\subsection{Basic setup}

Let $V$ be a finite set and let $\hilb_v$, $v\in V$, be finite-dimensional
complex Hilbert spaces. For $A\subseteq V$, write
$$\hilb_A=\bigotimes_{v\in A}\hilb_v,$$
with the convention $\hilb_\varnothing=\C$. Here $\bigotimes$ denotes the tensor product.
Let $\cL(\hilb_A)$ denote the space
of linear operators on $\hilb_A$. We equip $\cL(\hilb_A)$ with the
Hilbert--Schmidt inner product
$\<M,N\>:=\trace(MN^*)$. Write $\mathbb S(\hilb_A)$ for the real vector space of self-adjoint operators
on $\hilb_A$, $\mathbb S^+(\hilb_A)$ for the cone of positive definite
operators, and
$$\mathbb S_1^+(\hilb_A):=
\{\rho\in\cL(\hilb_A):\rho=\rho^*,\ \rho\succ0,\ \trace(\rho)=1\}$$
for the set of strictly positive density operators. 
We shall use two natural operations on the positive cone.

\begin{definition}
For $M,N\in\mathbb S^+(\hilb)$ define $M\odot N,M\star N\in \mathbb S^+(\hilb)$ by
$$M\odot N:=\exp(\log M+\log N)\qquad\mbox{and}\quad\qquad M\star N:=N^{1/2}MN^{1/2}.$$
\end{definition}

The operation $\odot$ is commutative and associative and corresponds to
addition after applying the matrix logarithm. The operation $\star$ is neither commutative nor associative in general, but it
will be useful later for conditional reconstruction and behaves well under
partial trace. Note also that
$M\odot N=M\star N$
if and only if $M$ and $N$ commute in which case both expressions are also equal to $MN$. 

\subsection{Partial trace}

The quantum analogue of marginalization is the partial trace. Thus, whenever we speak of a marginal of a density operator, we mean the corresponding reduced density operator obtained by tracing out the complementary subsystem. We recall the basic facts we need; see, for example, Section~2.4.3 in \cite{nielsen:chuang:00} for more details. 

Let $A,B\subseteq V$ be disjoint finite sets. The \emph{partial trace over $A$} is the
unique linear map
$\trace_A:\cL(\hilb_{A\cup B})\to \cL(\hilb_B)$
such that
\begin{equation}\label{eq:partial-trace-def}
\trace\bigl((I_A\otimes M)\rho\bigr)
=
\trace\bigl(M\,\trace_A(\rho)\bigr)
\end{equation}
for all $\rho\in\cL(\hilb_{A\cup B})$ and all $M\in\cL(\hilb_B)$. Equivalently,
under the identification
$\cL(\hilb_{A\cup B})\cong \cL(\hilb_A)\otimes \cL(\hilb_B)$,
it is the linear map determined by
\begin{equation}\label{eq:partial-trace-simple}
\trace_A(X\otimes Y)=\trace(X)\,Y,
\qquad
X\in\cL(\hilb_A),\ Y\in\cL(\hilb_B).
\end{equation}

We note that the partial trace is linear, positive, and trace-preserving. 
In the rest of the paper we use the words \emph{marginal} and \emph{reduced density operator} interchangeably. We shall also use repeatedly that marginalization can be iterated.

\begin{lem}\label{lem:iterated-partial-trace}
Let $D\subseteq E\subseteq V$ and let $\rho\in\mathbb S_1^+(\hilb_V)$. Then
$$
\trace_{E\setminus D}\bigl(\trace_{V\setminus E}(\rho)\bigr)
=
\trace_{V\setminus D}(\rho) .
$$
\end{lem}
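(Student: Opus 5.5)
We will verify the identity against the characterization \eqref{eq:partial-trace-def}, which is enough because the partial trace is uniquely determined by that property. Write $\sigma:=\trace_{V\setminus E}(\rho)\in\cL(\hilb_E)$ and $\tau:=\trace_{E\setminus D}(\sigma)\in\cL(\hilb_D)$. The goal is to show that $\tau$ has the property defining $\trace_{V\setminus D}(\rho)$: for every $M\in\cL(\hilb_D)$,
$$
\trace\bigl((I_{V\setminus D}\otimes M)\rho\bigr)=\trace(M\tau).
$$

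First apply the defining property of $\trace_{E\setminus D}$ to $\sigma$ with the operator $M\in\cL(\hilb_D)$:
$$\trace(M\tau)=\trace\bigl((I_{E\setminus D}\otimes M)\sigma\bigr).$$
Next apply the defining property of $\trace_{V\setminus E}$ to $\rho$, this time with the operator $I_{E\setminus D}\otimes M\in\cL(\hilb_E)$:
$$\trace\bigl((I_{E\setminus D}\otimes M)\sigma\bigr)=\trace\bigl((I_{V\setminus E}\otimes I_{E\setminus D}\otimes M)\rho\bigr).$$

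The sets $V\setminus E$ and $E\setminus D$ are disjoint because $D\subseteq E\subseteq V$, and their union is $V\setminus D$. Hence, under the canonical identification $\hilb_{V\setminus E}\otimes\hilb_{E\setminus D}\cong\hilb_{V\setminus D}$, we have $I_{V\setminus E}\otimes I_{E\setminus D}=I_{V\setminus D}$. Chaining the two displays gives exactly the required identity, and uniqueness then yields $\tau=\trace_{V\setminus D}(\rho)$.

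The only point needing care is the ordering of tensor factors, since the spaces $\hilb_A$ are defined through $\bigotimes_{v\in A}$. We will fix the convention that all identifications are the canonical reorderings of factors indexed by $V$. With that convention, the equality of identities in the last step is a matter of bookkeeping rather than a real obstacle. As a cross-check, one can also verify the claim on elementary tensors $X\otimes Y\otimes Z$ using \eqref{eq:partial-trace-simple}: both sides give $\trace(X)\trace(Y)Z$, and linearity then extends this to all of $\cL(\hilb_V)$. The hypothesis that $\rho$ be strictly positive plays no role.
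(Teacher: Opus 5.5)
Your proof is correct and follows essentially the same route as the paper. Both test the iterated partial trace against an arbitrary $M\in\cL(\hilb_D)$, apply the defining property \eqref{eq:partial-trace-def} twice, merge $I_{V\setminus E}\otimes I_{E\setminus D}$ into $I_{V\setminus D}$, and conclude from the fact that an operator on $\hilb_D$ is determined by its trace pairings with all such $M$.
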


\begin{proof}
It suffices to test both sides against an arbitrary operator
$M\in\mathcal L(\hilb_D)$. By the defining property of the partial trace,
$$
\begin{aligned}
\trace\left\{
\trace_{E\setminus D}\bigl(\trace_{V\setminus E}(\rho)\bigr)M
\right\}
&=
\trace\left\{
(\trace_{V\setminus E}(\rho))(M\otimes I_{E\setminus D})
\right\}  =
\trace\left\{
\rho(M\otimes I_{E\setminus D}\otimes I_{V\setminus E})
\right\} \\
&=
\trace\left\{
\rho(M\otimes I_{V\setminus D})
\right\} =
\trace\left\{
(\trace_{V\setminus D}(\rho))M
\right\}.
\end{aligned}
$$
Since this holds for all $M\in\mathcal L(\hilb_D)$, the two operators are equal.
\end{proof}

The following pull-out property is one of the most useful identities in the
paper.

\begin{lem}[Pull-out property]\label{lem:pull-out} 
Let $A,B,C$ be pairwise disjoint, let $\rho\in\cL(\hilb_{A\cup B\cup C})$, and
let $M\in\cL(\hilb_{B\cup C})$. Then
$$\trace_A\bigl((I_A\otimes M)\rho\bigr)=M\,\trace_A(\rho),\qquad
\trace_A\bigl(\rho(I_A\otimes M)\bigr)=\trace_A(\rho)\,M.$$
\end{lem}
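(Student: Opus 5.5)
The plan is to prove the first identity by duality, exactly as in the proof of Lemma~\ref{lem:iterated-partial-trace}, and then obtain the second from the first by taking adjoints. Both sides of the first identity are operators on $\hilb_{B\cup C}$. Since the Hilbert--Schmidt pairing is nondegenerate, it suffices to show that both sides have the same trace against every $N\in\cL(\hilb_{B\cup C})$.

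\emph{Step 1 (first identity).} For the left-hand side, the defining property \eqref{eq:partial-trace-def} gives
$$\trace\bigl(N\,\trace_A((I_A\otimes M)\rho)\bigr)=\trace\bigl((I_A\otimes N)(I_A\otimes M)\rho\bigr)=\trace\bigl((I_A\otimes NM)\rho\bigr),$$
where we used $(I_A\otimes N)(I_A\otimes M)=I_A\otimes NM$. For the right-hand side, we apply \eqref{eq:partial-trace-def} to the operator $NM\in\cL(\hilb_{B\cup C})$:
$$\trace\bigl(N M\,\trace_A(\rho)\bigr)=\trace\bigl((I_A\otimes NM)\rho\bigr).$$
The two expressions agree for every $N$, so the first identity holds.

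\emph{Step 2 (second identity).} The partial trace commutes with taking adjoints, $\trace_A(X^*)=\trace_A(X)^*$. This follows from \eqref{eq:partial-trace-simple} by linearity, because $(X\otimes Y)^*=X^*\otimes Y^*$ and $\trace(X^*)=\overline{\trace(X)}$. Applying the first identity to $\rho^*$ and $M^*$ gives $\trace_A((I_A\otimes M^*)\rho^*)=M^*\trace_A(\rho^*)$. Taking adjoints of both sides yields $\trace_A(\rho(I_A\otimes M))=\trace_A(\rho)M$. Alternatively, one can repeat Step 1 directly: use cyclicity of the trace to write $\trace((I_A\otimes N)\rho(I_A\otimes M))=\trace((I_A\otimes MN)\rho)$, and compare this with $\trace(N\trace_A(\rho)M)=\trace(MN\trace_A(\rho))$.

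There is no real obstacle here. The only points needing care are the ordering of the tensor factors, which we identify implicitly with $\hilb_A\otimes\hilb_{B\cup C}$, and the use of cyclicity of the trace on the full space. As an independent check, one can verify both identities on product operators $\rho=X\otimes Y$ using \eqref{eq:partial-trace-simple}: for instance, $(I\otimes M)(X\otimes Y)=X\otimes MY$, whose partial trace over $A$ is $\trace(X)\,MY=M\trace_A(X\otimes Y)$. The general case then follows by linearity.
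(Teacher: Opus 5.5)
Your proof is correct. The paper's proof is the computation you list only as an ``independent check'': it verifies the first identity on simple tensors $\rho=X\otimes Y$ via \eqref{eq:partial-trace-simple}, says the second is proved similarly, and leaves the extension by linearity implicit. Your main argument works instead from the defining duality \eqref{eq:partial-trace-def}, testing both sides against an arbitrary $N\in\cL(\hilb_{B\cup C})$. This is the style the paper itself uses for Lemma~\ref{lem:iterated-partial-trace}.

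The two routes trade off as follows. Your Step~1, and your cyclicity variant of Step~2, rely only on the characterizing property of $\trace_A$ and nondegeneracy of the trace pairing, so they never need the fact that simple tensors span the operator space. The simple-tensor route is a one-line computation that needs no test operators. Your adjoint argument adds one thing the paper does not make explicit: the second identity is a formal consequence of the first, given $\trace_A(X^*)=\trace_A(X)^*$, rather than a separate computation. Either route is complete.
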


\begin{proof}
It is enough to verify the claim on simple tensors. If
$\rho=X\otimes Y$ with $X\in\cL(\hilb_A)$ and
$Y\in\cL(\hilb_{B\cup C})$, then
\begin{align*}
\trace_A\bigl((I_A\otimes M)(X\otimes Y)\bigr)
&=
\trace_A(X\otimes MY)
=
\trace(X)\,MY
=
M\,\trace(X)Y
=
M\,\trace_A(X\otimes Y),
\end{align*}
and the second identity is proved similarly.
\end{proof}

\subsection{Entropy and divergence}

We now recall the basic information-theoretic quantities used later. The
von Neumann entropy of a density operator $\rho\in \mathbb S_1^+(\hilb)$ is
$$
{\rm S}(\rho):=-\trace(\rho\log\rho).
$$
For our logarithmic constructions it is useful to extend this entropy to
positive definite operators whose trace is not necessarily one. The associated von Neumann divergence is
\begin{equation}\label{eq:divergence}
{\rm D}(X\|Y)
:=
\trace(X\log X)-\trace(X\log Y)-\trace(X)+\trace(Y).
\end{equation}
In particular, by Klein's inequality \citep[Theorem~3]{ruskai_2002}
\[
{\rm D}(X\|Y)\geq 0,
\qquad
{\rm D}(X\|Y)=0 \quad\Longleftrightarrow\quad X=Y.
\]
When $X$ and $Y$ have the same trace, and in particular when they are density
operators, the last two terms cancel. Thus for density operators
\[
{\rm D}(X\|Y)=\trace\{X(\log X-\log Y)\},
\]
which is the usual Umegaki quantum relative entropy; see, for example,
\citet[Section~11.3.1]{nielsen:chuang:00}.

\subsection{Chordal graphs}

We consider simple finite undirected graphs $\graph=(V,E)$. A subset
$A\subseteq V$ is \emph{complete} if every pair of distinct vertices in $A$ is
joined by an edge. A maximal complete subset is called a \emph{clique}, and we
write $\cliques$ for the set of cliques of $\graph$.

A subset $S\subseteq V$ is said to \emph{separate} $A\subseteq V$ from
$B\subseteq V$ in the graph $\graph$ if every path from a vertex in $A$ to a vertex in $B$ meets $S$.
We then write
$$A\gse B\cd S.$$
A \emph{decomposition} of $\graph$ is a triple $(A,B,S)$ such that
$V=A\cup B\cup S$,
the set $S$ is complete, and $A\gse B\cd S$.

We shall be particularly interested in \emph{chordal} graphs, that is, graphs
in which every induced cycle of length at least four has a chord, meaning an
edge joining two nonconsecutive vertices of the cycle. A basic fact is that the
cliques of a chordal graph can be arranged in a \emph{junction tree}: this is a
tree $\mathcal T$ with vertex set $\cliques$ such that whenever $C_1,C_2\in
\cliques$, every clique on the unique path between $C_1$ and $C_2$ contains
$C_1\cap C_2$. The \emph{separators} of $\graph$ are the intersections of pairs of
adjacent cliques in a junction tree,
$$\sep=\{C_1\cap C_2:\ C_1\sim C_2\text{ in }\mathcal T\},$$
and for $S\in\sep$, its multiplicity $\nu(S)$ is the number of times it appears
as such an intersection in any junction tree for $\cliques$. 
These notions are standard; see Fig.~\ref{fig:chordal} for an illustration and, for example,
\citet{lauritzen:26} for details. 
\begin{figure}[htb]
\begin{center}
\begin{tikzpicture}
   [node distance = 6mm and 6mm, minimum width = 4mm]
    \begin{scope}
      \tikzstyle{every node} = [shape = circle, 
      font = \scriptsize,
      minimum height = 4mm,
      inner sep = 0pt,
      draw = black, 
      fill = white, 
      anchor = center, 
      text centered] 

      \node(a) at (0,0) {1};
      \node(b) [right =  of a] {2};
	\node(c) [below right = 6mm and 2.5mm of a] {3};
	\node(d) [above right = 6mm and 2.5mm of a] {4};  
    \node(e) [ right =of d] {5};
	\node(f) [right = of b] {6};
    \node(g) [right = of  c] {7};
      \end{scope}

      \begin{scope}
      \tikzstyle{every node} = [shape = ellipse, 
      font = \scriptsize,
      minimum height = 5mm,
      minimum width = 8mm,
      inner sep = 0pt,
      draw = black, 
      fill = white, 
      anchor = center], 
      text centered] 

      \node (bde)[right  =  30mm of d] {245};
      \node (bef)[right = 10 mm of bde] {256};
      \node (bfg)[right =10mm of bef] {267};
      \node (abc)[below=10mm of bef] {123};
      \end{scope}
      
\begin{scope}
       \tikzstyle{every node} = [shape = rectangle, 
      font = \scriptsize,
      minimum height = 3mm,
      minimum width = 5mm,
      inner sep = 0pt,
      draw = black, 
      fill = white, 
      anchor = center], 
      text centered] 
      \node(be)[right = 2.5mm of  bde] {25};
      \node(bf)[right =2.5mm of bef] {26};
      \node (q) [below = 3mm of bef] {2};
\end{scope}

      \begin{scope}
          \draw (a)--(b);
          \draw (a)--(c);
          \draw (b)--(c);
          \draw (b)--(d);
          \draw (d)--(e);
          \draw(b)--(e);
          \draw (b)--(f);
          \draw (b)-- (g);
          \draw (e)--(f);
          \draw (f)-- (g);

          \draw(bde)--(be);
          \draw (be)--(bef);
          \draw (bef)--(bf);
          \draw (bf)--(bfg);
          \draw (q)--(bef);
          \draw (q)--(abc);
      \end{scope}

\end{tikzpicture}
\end{center}
\caption{A chordal graph and  associated junction tree with separators.}
\label{fig:chordal}
\end{figure}

\subsection{Quantum conditional independence}

Let $A,B,C\subseteq V$ be pairwise disjoint and let
$\rho\in\mathbb S_1^+(\hilb_{A\cup B\cup C})$. The \emph{quantum conditional
mutual information} is
$${\rm I}(A:B\cd C)_\rho:={\rm S}(\rho_{A\cup C})+{\rm S}(\rho_{B\cup C})-{\rm S}(\rho_C)-{\rm S}(\rho).$$
By strong subadditivity \citep{lieb_ruskai_1973}, this quantity is always
nonnegative.

\begin{definition}
We say that $A$ and $B$ are \emph{quantum conditionally independent given $C$}
with respect to $\rho$ if ${\rm I}(A:B\cd C)_\rho=0$. In this case we write
$$A\ciph B\cd C\,[\rho]$$ or just $A\ciph B\cd C$, when there is no ambiguity.
\end{definition}
Quantum conditional independence satisfies the so-called semi-graphoid axioms for an independence model \citep{pearl:88,studeny:05,lauritzen:26}, as also shown in \cite{leifer_poulin_2008}, so we have
\begin{enumerate}[(Q1)]
    \item $A\ciph B\cd C\implies B\ciph A\cd C$ (symmetry); 
    \item $A\ciph B\cd C \text{ and } D\subseteq (B\cup C) \implies A\ciph D\cd C$ (reduction);
   \item $A \ciph (B \cup D) \cd C\implies A \ciph B \cd (C\cup D)$ {(weak union)};
   \item  $A\ciph B \cd C \text{ and } A \ciph D \cd (B\cup C)\implies A \ciph (B\cup D) \cd C$ (contraction);
\end{enumerate}
In fact, the semi-graphoid properties follow directly from the strong subadditivity of the von Neumann entropy \citep[Lemma~5.1]{studeny:05}. 

Whether $\ciph$ satisfies intersection for positive density operators---i.e.\ is a graphoid independence model---was unclear \citep{leifer_poulin_2008}, but we can now confirm this positively:
\begin{prop}\label{prop:intersection}
For strictly positive density operators, quantum conditional independence satisfies intersection:
$$\text{\rm{(Q5)}} \quad A\ciph B \cd (C \cup D) \text{ and } A \ciph D\cd (B\cup C)\implies A \ciph (B\cup D) \cd C.$$ 
\end{prop}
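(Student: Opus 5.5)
The plan is to convert both hypotheses into operator identities via the equality case of monotonicity of relative entropy (Petz recovery, Appendix~\ref{app:petz-partial-trace}), combine them into an identity that forces $\rho$ to have the recovered form with respect to $C$ alone, and conclude ${\rm I}(A:B\cup D\cd C)_\rho=0$. Throughout, $\rho$ is a strictly positive state on $A\cup B\cup C\cup D$, and all logarithms are well defined.

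\textbf{Step 1: logarithmic form of the hypotheses.} I use the known characterization of vanishing conditional mutual information for strictly positive states: ${\rm I}(X:Y\cd Z)_\rho=0$ if and only if
\[
\log\rho_{XYZ}=\log\rho_{XZ}+\log\rho_{YZ}-\log\rho_Z
\]
(with identity factors suppressed). This is the equality case in monotonicity of relative entropy under partial trace, as derived in the appendix. Applied to the two hypotheses, it gives
\[
\log\rho=\log\rho_{ACD}+\log\rho_{BCD}-\log\rho_{CD}
=\log\rho_{ABC}+\log\rho_{BCD}-\log\rho_{BC}.
\]
Equating the two right-hand sides, and writing the operator on $\hilb_{ACD}$ as $X:=\log\rho_{ACD}-\log\rho_{CD}$, we obtain
\[
X=\log\rho_{ABC}-\log\rho_{BC}=:Y,
\]
where $Y$ acts on $\hilb_{ABC}$.

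\textbf{Step 2: the key lemma.} An operator that acts as $X\otimes I_B$ and simultaneously as $Y\otimes I_D$ must be of the form $Z\otimes I_{BD}$ with $Z\in\cL(\hilb_{AC})$. To see this, take the partial trace over $B$ and use \eqref{eq:partial-trace-simple}. This gives $d_B X=\trace_B(Y\otimes I_D)=(\trace_B Y)\otimes I_D$, so $X$ has trivial $D$-part. Hence
\[
\log\rho=Z+\log\rho_{BCD}.
\]

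\textbf{Step 3: identifying $Z$ and concluding.} Now exponentiate, $\rho=\exp(Z+\log\rho_{BCD})$, with $Z$ on $AC$ and $\log\rho_{BCD}$ on $BCD$ overlapping only in $C$. I want to show that this forces $\log\rho=\log\rho_{AC}+\log\rho_{BCD}-\log\rho_C$. The intended route is to compare entropies. With $\sigma:=\exp(\log\rho_{AC}+\log\rho_{BCD}-\log\rho_C)$, a Golden--Thompson/Lieb-type argument gives $\trace\sigma\le1$, the same bound the paper proves for $\trace T(\mathcal R)$. Then compute
\[
{\rm I}(A:BD\cd C)_\rho={\rm D}(\rho\|\sigma)+\trace\sigma-1
\]
by expanding $\trace\rho\log\sigma$ through marginals.

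Alternatively, and more directly, I would apply the Gibbs variational principle. Since $\log\rho=Z+\log\rho_{BCD}$, the state $\rho$ maximizes ${\rm S}(\tau)+\trace\tau(Z+\log\rho_{BCD})$ over states $\tau$. This quantity depends on $\tau$ only through ${\rm S}(\tau)$ and the marginals $\tau_{AC}$ and $\tau_{BCD}$. Hence $\rho$ is the maximum-entropy state with marginals $\rho_{AC}$ and $\rho_{BCD}$. By strong subadditivity, ${\rm S}(\tau)\le {\rm S}(\rho_{AC})+{\rm S}(\rho_{BCD})-{\rm S}(\rho_C)$ for every such $\tau$. I would then still need to show that this bound is attained.

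\textbf{Main obstacle.} The hard part is exactly this attainment: showing that the maximum-entropy element of the form $\exp(Z+W_{BCD})$ saturates strong subadditivity. I would first try to use the Petz recovery formula from the appendix. Tracing out $A$ from the identity $\rho=\exp(Z+\log\rho_{BCD})$ should show that $\exp(Z)$, conjugated appropriately, acts as a conditional operator whose $A$-trace equals the identity on $BCD$-compatible terms. From this, $\rho_{BCD}$ is recovered from $\rho_{BC D}$'s $C$-marginal via $\rho_{AC}\rho_C^{-1}$. That gives equality in monotonicity of ${\rm D}(\rho\|\rho_C^{-1}$-twisted$)$ under $\trace_{BD}$, which is equivalent to ${\rm I}(A:BD\cd C)=0$.
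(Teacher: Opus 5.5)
Your Steps~1 and~2 are correct. The logarithmic form of ${\rm I}=0$ is the standard equality condition the paper quotes after Theorem~\ref{thm:main-trace2}. The appendix itself proves the sandwich form, but its interpolation argument, run with $\log s$ in place of $s^{-1/2}$, gives the logarithmic one. Your partial-trace argument does yield $\log\rho=Z\otimes I_{B\cup D}+\log\rho_{B\cup C\cup D}$ with $Z\in\mathbb S(\hilb_{A\cup C})$.

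The genuine gap is Step~3. You leave it open, and none of your three sketches closes it:

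\begin{enumerate}[(i)]
\item Take $\sigma=\exp(\log\rho_{A\cup C}+\log\rho_{B\cup C\cup D}-\log\rho_C)$. The identity ${\rm I}(A:B\cup D\cd C)_\rho={\rm D}(\rho\|\sigma)+1-\trace\sigma$ is Lemma~\ref{lem:two-clique-divergence-identity-general} with $\omega=\rho$; note the sign of the trace term. It holds for \emph{every} state and never uses $Z$. It writes ${\rm I}$ as a difference of two nonnegative terms, so $\trace\sigma\le1$ gives no upper bound.
\item The Gibbs route uses only $\log\rho\in\R I+\mathbb S(\hilb_{A\cup C})+\mathbb S(\hilb_{B\cup C\cup D})$. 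Every maximum entropy completion satisfies this, and Example~\ref{ex:basic-qubit} shows such completions need not be Markov.
\item The final Petz paragraph is not an argument.
\end{enumerate}

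What must be exploited is that the $B\cup C\cup D$-component is exactly $\log\rho_{B\cup C\cup D}$. This is not a formality: for a general reference state $\tau$, the condition $\log\rho-\log\tau\in\cL(\hilb_{A\cup C})\otimes I$ does not force equality in monotonicity.

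The missing idea is Lieb's three-matrix inequality, used as in the proof of Theorem~\ref{thm:main-trace2} but with $\rho,\ \eta:=\rho_{B\cup C\cup D},\ \rho_C$ in the roles of $\rho_{A\cup C},\rho_C,\rho_{B\cup C}$. Let $\sigma'=\exp(Z+\log\rho_C)\in\mathbb S^+(\hilb_{A\cup C})$, and write $d_B=\dim\hilb_B$, $d_D=\dim\hilb_D$. Since $\log\rho-\log\eta+\log\rho_C=(Z+\log\rho_C)\otimes I_{B\cup D}$, the pull-out property, cyclicity and $\int_0^\infty\eta(tI+\eta)^{-2}\,dt=I$ give
\[
d_Bd_D\,\trace(\sigma')\le\int_0^\infty\trace\bigl(\rho\,(tI+\eta)^{-1}\rho_C(tI+\eta)^{-1}\bigr)\,dt=\int_0^\infty\trace\bigl(\rho_C\,\eta\,(tI+\eta)^{-2}\bigr)\,dt=d_Bd_D .
\]
Hence $\trace(\sigma')\le1$.

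From your Step~2 identity, ${\rm S}(\rho)=-\trace(\rho_{A\cup C}Z)+{\rm S}(\eta)$. Combining this with Klein's inequality ${\rm D}(\rho_{A\cup C}\|\sigma')\ge0$ gives
\[
{\rm I}(A:B\cup D\cd C)_\rho=\trace\bigl\{\rho_{A\cup C}(\log\sigma'-\log\rho_{A\cup C})\bigr\}\le\trace(\sigma')-1\le0 .
\]
Strong subadditivity then forces ${\rm I}=0$.

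With this lemma added, your route is a valid alternative proof. The paper never needs it. It intersects at the level of the whole resolvent family $X_t=(tI+\Delta_{\tau,\rho})^{-1}I$, for the single reference $\tau=\rho_A\otimes\rho_{B\cup C\cup D}$. Because Proposition~\ref{prop:app-equality-criterion} is a pure membership criterion, $X_t\in\cL(\hilb_{A\cup C})\otimes I_{B\cup D}$ for all $t$ immediately gives equality for $\trace_{B\cup D}$. Your logarithmic condition controls only the single vector $(\log\Delta_{\tau,\rho})I=\log\tau-\log\rho$, which is why the extra Lieb step is needed.
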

    \begin{proof}See Appendix~\ref{sec:proof-inters}.
    \end{proof}

The intersection property ensures that different variants of Markov properties (pairwise, local, and global) are equivalent. The framework in \cite{leifer_poulin_2008} is based on the so-called local Markov property, whereas we are focusing on the global version, see Definition~\ref{def:quantum_markov} below.
\section{Markov and maximum entropy completions}\label{sec:markov-completions}

 We  introduce the logarithmic
construction $T(\mathcal R)$ for a pair of marginals and show that its trace detects the existence of a
Markov-type completion.

\subsection{The marginal problem for pairs}

We now apply this language to the first nontrivial marginal problem. The data are two reduced density operators, one on $A\cup C$ and one on $B\cup C$, where $A\cap B=\emptyset$. 

A necessary condition that the pair $\mathcal R=\{\rho_{A\cup C},\rho_{B\cup C}\}$ of reduced density operators can be obtained from a global state $\rho$ is that they are \emph{consistent}, meaning  that $\trace_{A}(\rho_{A\cup C})=\trace_{B}(\rho_{B\cup C})=:\rho_C$. 
Now let $M(\mathcal R)$ be the set of solutions to the marginal problem:
$$M(\mathcal R):=
\{\omega\in\mathbb S_1^+(\hilb_{A\cup B\cup C}):\omega_{A\cup C}=\rho_{A\cup C},\ \omega_{B\cup C}=\rho_{B\cup C}\}$$
and
\begin{equation}\label{eq:T-two-clique}
T(\mathcal R):=\exp\bigl(\log\rho_{A\cup C}+\log\rho_{B\cup C}-\log\rho_C\bigr)= \rho_{A\cup C}\odot\rho_{B\cup C}\odot\rho_{C}^{-1}.
\end{equation}

The operator $T(\mathcal R)$  should be viewed as the noncommutative analogue
of the classical conditional-independence extension. In the commuting case,
where $\rho_{A\cup C}$, $\rho_{B\cup C}$, and $\rho_C$ are simultaneously
diagonal, it reduces exactly to the
classical formula $p_{A\cup C}p_{B\cup C}p_C^{-1}$. In the noncommutative case,
however, the exponential expression need not behave like an ordinary product. In particular, it typically does not define a density operator. When it does, however, the expected Markov and maximum entropy properties coincide.

\begin{thm}[Two-clique trace criterion]\label{thm:main-trace2}
Let $\mathcal R=\{\rho_{A\cup C},\rho_{B\cup C}\}$ be a consistent pair of
strictly positive marginals. Then $\trace (T(\mathcal R))\le 1$. Moreover, the
following are equivalent:
\begin{enumerate}[(i)]
\item $\trace (T(\mathcal R))=1$;
\item $T(\mathcal R)\in M(\mathcal R)$;
\item there exists $\omega\in M(\mathcal R)$ such that
      $A\ciph B\cd C\,[\omega]$.
\end{enumerate}
When these conditions hold, $T(\mathcal R)$ is the unique Markov completion.
\end{thm}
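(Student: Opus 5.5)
The plan is to use a variational identity for the divergence ${\rm D}$ in \eqref{eq:divergence} with $Y=T:=T(\mathcal R)$. For any $\omega\in M(\mathcal R)$ we have $\log T=\log\rho_{A\cup C}+\log\rho_{B\cup C}-\log\rho_C$, where each term is understood tensored with identities. By the defining property \eqref{eq:partial-trace-def} of the partial trace, $\trace(\omega\log T)$ depends only on the marginals of $\omega$:
$$\trace(\omega\log T)=-{\rm S}(\rho_{A\cup C})-{\rm S}(\rho_{B\cup C})+{\rm S}(\rho_C).$$
Hence
\begin{equation*}
{\rm D}(\omega\|T)=-{\rm S}(\omega)+{\rm S}(\rho_{A\cup C})+{\rm S}(\rho_{B\cup C})-{\rm S}(\rho_C)-1+\trace(T)={\rm I}(A:B\cd C)_\omega-1+\trace(T).
\end{equation*}
This identity drives everything. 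The inequality $\trace(T)\le 1$ does not follow from it directly. Nonnegativity of ${\rm D}$ only gives $\trace(T)\ge 1-{\rm I}$, which points the wrong way. For the upper bound I would instead invoke the Lieb-type trace inequality
$$\trace\exp(\log X-\log Y+\log Z)\le \int_0^\infty \trace\bigl(X(Y+t)^{-1}Z(Y+t)^{-1}\bigr)\,dt.$$
I would apply it with $X=\rho_{A\cup C}$, $Z=\rho_{B\cup C}$ and $Y=\rho_C$. Tracing out $A$ first via Lemma~\ref{lem:pull-out} turns the integrand into $\trace\bigl(\rho_C(\rho_C+t)^{-1}\rho_{B\cup C}(\rho_C+t)^{-1}\bigr)$. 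Tracing out $B$ in the same way gives $\trace\bigl(\rho_C^2(\rho_C+t)^{-2}\bigr)$, and integrating in $t$ yields $\trace(\rho_C)=1$. This is exactly how Ruskai-type proofs of strong subadditivity go. It is the one external ingredient, and the step I regard as the main obstacle: justifying Lieb's theorem cleanly, or citing it.

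With $\trace(T)\le1$ in hand, the equivalences follow. For (i)$\Rightarrow$(ii), set $\omega^*=T$, which is a density operator. Then ${\rm D}(\omega\|T)={\rm I}(A:B\cd C)_\omega$ for any $\omega\in M(\mathcal R)$. To show $T$ has the right marginals, I would use the Gibbs variational principle. Since $\trace(T)=1$, the operator $T$ maximizes $\trace(\sigma\log T)+{\rm S}(\sigma)$ over states $\sigma$, with maximal value $\log\trace T=0$. For any $\sigma$, write $\trace(\sigma\log T)$ in terms of its marginals $\sigma_{AC},\sigma_{BC},\sigma_C$. Strong subadditivity plus Klein's inequality then give $\trace(\sigma\log T)+{\rm S}(\sigma)\le -{\rm D}(\sigma_{AC}\|\rho_{AC})-{\rm D}(\sigma_{BC}\|\rho_{BC})+{\rm D}(\sigma_C\|\rho_C)$. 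This is not obviously $\le 0$, so this route is awkward. A cleaner alternative is to perturb: $\frac{d}{ds}\trace\exp(\log T+sH)|_{s=0}=\trace(TH)$. Take $H=X\otimes I_B$ with $X\in\mathbb S(\hilb_{A\cup C})$ and replace $\rho_{A\cup C}$ by $\exp(\log\rho_{A\cup C}+sX)$ renormalized, so that the trace bound still applies. The trace bound applied to this family shows that $s\mapsto\trace\exp(\log T+sX)-\trace\exp(\log\rho_{AC}+sX)+\trace\exp(\log \rho_{AC})$ is maximized at $s=0$. Differentiating gives $\trace(T(X\otimes I_B))=\trace(\rho_{A\cup C}X)$, that is, $T_{A\cup C}=\rho_{A\cup C}$, and symmetrically $T_{B\cup C}=\rho_{B\cup C}$. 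The $\rho_C$ in the bound must also be perturbed consistently, which needs care; this is the delicate point I would check first.

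For (ii)$\Rightarrow$(iii), take $\omega=T$. Then ${\rm I}(A:B\cd C)_T={\rm D}(T\|T)-1+\trace T=0$. For (iii)$\Rightarrow$(i), take $\omega\in M(\mathcal R)$ with ${\rm I}=0$. Then $0\le{\rm D}(\omega\|T)=\trace(T)-1\le0$, so $\trace T=1$, and moreover ${\rm D}(\omega\|T)=0$ gives $\omega=T$. The same computation gives uniqueness: any Markov completion $\omega$ has ${\rm D}(\omega\|T)=0$, hence $\omega=T$. Notably, (iii)$\Rightarrow$(ii) then follows directly, bypassing the perturbation argument, since $\omega=T\in M(\mathcal R)$.
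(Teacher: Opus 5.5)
Your trace bound, your identity ${\rm D}(\omega\|T)+1-\trace(T)={\rm I}(A:B\cd C)_\omega$ for $\omega\in M(\mathcal R)$, and your proofs of (iii)$\Rightarrow$(i), (iii)$\Rightarrow$(ii) and uniqueness coincide with the paper's. In (ii)$\Rightarrow$(iii) you flipped the sign of the trace terms, which is harmless since $\trace T=1$ there.

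The genuine gap is (i)$\Rightarrow$(ii), the one implication with real content: neither of your two routes is carried through. Your first route is in fact the paper's proof, abandoned one step early. At $\sigma=T$ the quantity $\trace(T\log T)+{\rm S}(T)$ is exactly $0$, so $0={\rm I}(A:B\cd C)_T+\Delta_{\mathcal R}(T)$, where
$\Delta_{\mathcal R}(\sigma)={\rm D}(\sigma_{A\cup C}\|\rho_{A\cup C})+{\rm D}(\sigma_{B\cup C}\|\rho_{B\cup C})-{\rm D}(\sigma_C\|\rho_C)$.
What you are missing is monotonicity of relative entropy under partial trace: ${\rm D}(\sigma_{A\cup C}\|\rho_{A\cup C})\ge{\rm D}(\sigma_C\|\rho_C)$, and likewise for $B\cup C$. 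This makes $\Delta_{\mathcal R}\ge0$, so your bound ``not obviously $\le 0$'' actually is. Then $\Delta_{\mathcal R}(T)=0$. Inserting both monotonicity bounds gives $0\ge{\rm D}(T_C\|\rho_C)$, so $T_C=\rho_C$. Then ${\rm D}(T_{A\cup C}\|\rho_{A\cup C})+{\rm D}(T_{B\cup C}\|\rho_{B\cup C})=0$, which forces $T\in M(\mathcal R)$.

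Your second route, as written, applies ``the trace bound'' to the pair $\{e^{\log\rho_{A\cup C}+sX},\rho_{B\cup C}\}$. This pair is not consistent, so the bound you proved does not cover it, and perturbing $\rho_C$ as well is not the fix. The route can nevertheless be completed, and then gives a genuinely different proof that bypasses monotonicity. In the Lieb/pull-out computation, trace out $B$ first rather than $A$. This uses only $\trace_B\rho_{B\cup C}=\rho_C$ and yields $\trace\exp(\log Y+\log\rho_{B\cup C}-\log\rho_C)\le\trace Y$ for \emph{every} $Y\succ0$ on $\hilb_{A\cup C}$. Take $Y=e^{\log\rho_{A\cup C}+sX}$ with $X$ self-adjoint on $\hilb_{A\cup C}$. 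The function $s\mapsto\trace e^{\log T+s(X\otimes I_B)}-\trace e^{\log\rho_{A\cup C}+sX}$ is then $\le0$, and under (i) it vanishes at $s=0$. Hence its derivative $\trace(T(X\otimes I_B))-\trace(\rho_{A\cup C}X)$ vanishes there for all such $X$, i.e.\ $T_{A\cup C}=\rho_{A\cup C}$; symmetrically $T_{B\cup C}=\rho_{B\cup C}$. Without one of these two completions, the central implication remains unproved.
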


The proof of Theorem~\ref{thm:main-trace2} uses the following divergence identity.
\begin{lem}\label{lem:two-clique-divergence-identity-general}
Let $\mathcal R=\{\sigma_{A\cup C},\sigma_{B\cup C}\}$ be a consistent pair of
density operators, and let $\omega\in\mathbb S_1^+(\hilb_{A\cup B\cup C})$.
Then
\begin{equation}\label{eq:T-divergence-general}
{\rm D}(\omega\|T(\mathcal R))+1-\trace(T(\mathcal R))
=
{\rm I}(A:B\cd C)_\omega+\Delta_{\mathcal R}(\omega),
\end{equation}
where
$$
\Delta_{\mathcal R}(\omega):=
{\rm D}(\omega_{A\cup C}\|\sigma_{A\cup C})+
{\rm D}(\omega_{B\cup C}\|\sigma_{B\cup C})-
{\rm D}(\omega_C\|\sigma_C).
$$
Moreover, $\Delta_{\mathcal R}(\omega)\ge 0$.
\end{lem}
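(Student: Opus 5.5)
The identity is a direct expansion, and the nonnegativity follows from monotonicity of relative entropy under partial trace. The only inputs are the definition \eqref{eq:divergence}, the defining property \eqref{eq:partial-trace-def} of the partial trace, and the fact that $\log(T(\mathcal R))=\log\sigma_{A\cup C}+\log\sigma_{B\cup C}-\log\sigma_C$. Here each logarithm is tacitly tensored with the identity on the missing factors, $\sigma_C$ is the common marginal, and all $\sigma$'s are strictly positive so that the logarithms are defined.

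For the identity, $\omega$ has trace one, so
$$
{\rm D}(\omega\|T(\mathcal R))=\trace(\omega\log\omega)-\trace(\omega\log T(\mathcal R))-1+\trace(T(\mathcal R)).
$$
The left-hand side of \eqref{eq:T-divergence-general} therefore equals $-{\rm S}(\omega)-\trace(\omega\log T(\mathcal R))$. Substituting the logarithm of $T(\mathcal R)$ and applying \eqref{eq:partial-trace-def} term by term (together with Lemma~\ref{lem:iterated-partial-trace} for $\omega_C$) gives
$$
\trace(\omega\log T(\mathcal R))=\trace(\omega_{A\cup C}\log\sigma_{A\cup C})+\trace(\omega_{B\cup C}\log\sigma_{B\cup C})-\trace(\omega_C\log\sigma_C).
$$
Each reduced state $\omega_{A\cup C}$, $\omega_{B\cup C}$, $\omega_C$ is a density operator, as is each $\sigma$. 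Hence
$$
-\trace(\omega_{X}\log\sigma_{X})={\rm D}(\omega_X\|\sigma_X)+{\rm S}(\omega_X),
$$
with no trace correction. Summing the three terms with signs $+,+,-$ and adding $-{\rm S}(\omega)$ collects the entropies into ${\rm S}(\omega_{A\cup C})+{\rm S}(\omega_{B\cup C})-{\rm S}(\omega_C)-{\rm S}(\omega)={\rm I}(A:B\cd C)_\omega$. The divergences collect into $\Delta_{\mathcal R}(\omega)$, which proves \eqref{eq:T-divergence-general}.

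For $\Delta_{\mathcal R}(\omega)\ge0$, consistency gives $\trace_A(\sigma_{A\cup C})=\sigma_C$, and of course $\trace_A(\omega_{A\cup C})=\omega_C$. Monotonicity of Umegaki relative entropy under the partial trace $\trace_A$ (a consequence of strong subadditivity / Lieb's theorem, as used in Appendix~\ref{app:petz-partial-trace}) then gives
$$
{\rm D}(\omega_{A\cup C}\|\sigma_{A\cup C})\ge {\rm D}(\omega_C\|\sigma_C).
$$
Adding ${\rm D}(\omega_{B\cup C}\|\sigma_{B\cup C})\ge0$ (Klein's inequality) yields $\Delta_{\mathcal R}(\omega)\ge0$.

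The only step needing care is this monotonicity. It is not among the results stated so far, so I would cite it as the standard data-processing inequality, or derive it from the appendix's treatment of the partial-trace case. Everything else is bookkeeping with the pull-out and partial-trace identities.
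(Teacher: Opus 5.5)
Your proposal is correct and matches the paper's proof: you expand ${\rm D}(\omega\|T(\mathcal R))$ using $\log T(\mathcal R)$ and the partial-trace identity, regroup the entropies into ${\rm I}(A:B\cd C)_\omega$, and get $\Delta_{\mathcal R}(\omega)\ge0$ from monotonicity under partial trace together with nonnegativity of the remaining divergence. The only difference is that the paper applies monotonicity on the $B\cup C$ term where you use the $A\cup C$ term, which is immaterial by symmetry.
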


\begin{proof}
Using \eqref{eq:T-two-clique} and the defining property of partial trace, we get
\begin{align*}
{\rm D}(\omega\|T(\mathcal R))
&=
\trace(\omega\log\omega)-\trace(\omega\log T(\mathcal R))-1+\trace(T(\mathcal R)) \\
&=
-{\rm S}(\omega)-\trace(\omega_{A\cup C}\log\sigma_{A\cup C})
-\trace(\omega_{B\cup C}\log\sigma_{B\cup C}) \\
&\qquad
+\trace(\omega_C\log\sigma_C)-1+\trace(T(\mathcal R)).
\end{align*}
Adding and subtracting
${\rm S}(\omega_{A\cup C})+{\rm S}(\omega_{B\cup C})-{\rm S}(\omega_C)$ yields
\eqref{eq:T-divergence-general}. The nonnegativity follows from monotonicity of relative entropy under partial
trace: for example,
$
{\rm D}(\omega_{B\cup C}\|\sigma_{B\cup C})
\ge {\rm D}(\omega_C\|\sigma_C).
$
This monotonicity is the partial-trace case of the data-processing inequality
for quantum relative entropy, following from the proof of strong
subadditivity in \cite{lieb_ruskai_1973}. For convenience, we recall it in Proposition~\ref{prop:app-monotonicity-partial-trace}.
\end{proof}

\begin{proof}[Proof of Theorem~\ref{thm:main-trace2}]
The trace inequality is now a consequence of Lieb's three-matrix inequality
\citep[Theorem~5]{ruskai_2002}:
\begin{align*}
\trace(T(\mathcal R))
&\le
\trace\!\left(
\int_0^\infty
\rho_{A\cup C}(tI+\rho_C)^{-1}\rho_{B\cup C}
(tI+\rho_C)^{-1}\,dt
\right).
\end{align*}
Since $(tI+\rho_C)^{-1}$ acts nontrivially only on $\hilb_C$,
Lemma~\ref{lem:pull-out} gives
\begin{align*}
&\trace\bigl(
\rho_{A\cup C}(tI+\rho_C)^{-1}\rho_{B\cup C}
(tI+\rho_C)^{-1}
\bigr) \; =
\;\trace\bigl(\rho_C^2(tI+\rho_C)^{-2}\bigr).
\end{align*}
Hence
$$
\trace(T(\mathcal R))
\le
\int_0^\infty \trace\bigl(\rho_C^2(tI+\rho_C)^{-2}\bigr)\,dt.
$$
Diagonalizing $\rho_C$ with eigenvalues $\lambda_i>0$, the right-hand side is
$$
\sum_i\int_0^\infty \frac{\lambda_i^2}{(t+\lambda_i)^2}\,dt
=
\sum_i\lambda_i
=
\trace(\rho_C)
=
1.
$$
This proves $\trace(T(\mathcal R))\le 1$.

For the second part of the theorem, set $\tau=T(\mathcal R)$. Suppose first that $\trace(T(\mathcal R))=1$.
Since $\tau\succ0$, $\tau$ is a density operator. Applying
\eqref{eq:T-divergence-general} with $\omega=\tau$ gives
$$
0
=
{\rm D}(\tau\|\tau)+1-\trace(\tau)
=
{\rm I}(A:B\cd C)_\tau+\Delta_{\mathcal R}(\tau).
$$
Both terms on the right-hand side  are nonnegative and so their sum is zero if and only if they are both zero. By the definition of $\Delta_{\mathcal R}$, this means
\begin{equation}\label{eq:delta-zero-two-clique}
{\rm D}(\tau_{A\cup C}\|\rho_{A\cup C})
+
{\rm D}(\tau_{B\cup C}\|\rho_{B\cup C})
-
{\rm D}(\tau_C\|\rho_C)
=
0.
\end{equation}
By monotonicity of relative entropy under partial trace (see \citet{lieb_ruskai_1973} or Proposition~\ref{prop:app-monotonicity-partial-trace}), we have
$$
{\rm D}(\tau_{A\cup C}\|\rho_{A\cup C})\geq {\rm D}(\tau_C\|\rho_C),
\qquad
{\rm D}(\tau_{B\cup C}\|\rho_{B\cup C})\geq {\rm D}(\tau_C\|\rho_C).
$$
Substituting these lower bounds into \eqref{eq:delta-zero-two-clique} gives
$0
\geq
{\rm D}(\tau_C\|\rho_C)$. Hence ${\rm D}(\tau_C\|\rho_C)=0$, and so $\tau_C=\rho_C$. Substituting this back into
\eqref{eq:delta-zero-two-clique} gives
$
{\rm D}(\tau_{A\cup C}\|\rho_{A\cup C})
+
{\rm D}(\tau_{B\cup C}\|\rho_{B\cup C})
=
0$. Thus
$
\tau_{A\cup C}=\rho_{A\cup C}$ and $\tau_{B\cup C}=\rho_{B\cup C}$.
Therefore $\tau\in M(\mathcal R)$. Since also ${\rm I}(A:B\cd C)_\tau=0$, we have
$A\ciph B\cd C\,[\tau]$. Hence $(\mathrm{i})$ implies both
$(\mathrm{ii})$ and $(\mathrm{iii})$. The implication $(\mathrm{ii})\Rightarrow(\mathrm{i})$ is immediate, since
every element of $M(\mathcal R)$ has trace one. Consequently, (i)$\Leftrightarrow$(ii). Suppose now that (iii) holds. Applying \eqref{eq:T-divergence-general} to this
$\omega$ gives
$$
0
=
{\rm I}(A:B\cd C)_\omega+\Delta_{\mathcal R}(\omega)
=
{\rm D}(\omega\|T(\mathcal R))+1-\trace(T(\mathcal R)).
$$
The two terms
on the right are nonnegative: ${\rm D}(\omega\|T(\mathcal R))\geq0$ by Klein's
inequality  and
$1-\trace(T(\mathcal R))\geq0$ by the trace bound proved above. Hence both
terms vanish. In particular,
${\rm D}(\omega\|T(\mathcal R))=0$,
and therefore $\omega=T(\mathcal R)$. Thus $T(\mathcal R)\in M(\mathcal R)$ and
$\trace(T(\mathcal R))=1$. This proves
$(\mathrm{iii})\Rightarrow(\mathrm{ii})$ and
$(\mathrm{iii})\Rightarrow(\mathrm{i})$. The same argument proves uniqueness. Any Markov completion
$\omega\in M(\mathcal R)$ must be equal to $T(\mathcal R)$.
\end{proof}

For a strictly positive state $\omega$, the condition
${\rm I}(A:B\mid C)_\omega=0$ is equivalent to the logarithmic identity
$$
\log\omega
=
\log\omega_{A\cup C}
+
\log\omega_{B\cup C}
-
\log\omega_C.
$$
This is one of the standard equality conditions for strong subadditivity and is
closely related to Petz's equality theorem for monotonicity of relative entropy. Theorem~\ref{thm:main-trace2} uses this equality theory in the reverse
direction. Starting only from a consistent pair
$\rho_{A\cup C},\rho_{B\cup C}$, it forms the logarithmic candidate
$T(\mathcal R)$ and shows that $\trace (T(\mathcal R))\le 1$. Moreover,
$\trace (T(\mathcal R))=1$ holds if and only if a Markov completion exists; in
that case the completion is necessarily $T(\mathcal R)$. 

The trace condition is genuinely stronger than ordinary feasibility. In
Example~\ref{ex:basic-qubit}, the marginals are locally consistent and strictly
feasible whenever $\varepsilon^2+\delta^2<1$, but
Lemma~\ref{lem:basic-qubit-no-markov} shows that they admit no Markov completion
when $\varepsilon\delta\ne0$. Equivalently, by
Lemma~\ref{lem:basic-qubit-trace-defect}, the logarithmic candidate has trace
strictly smaller than one in this case.

It should also be noted that if $\rho_{A\cup C}$, $\rho_{B\cup C}$ and $\rho_C$ all commute, the trace condition is always fulfilled as the operators are simultaneously diagonalizable. However, the trace condition may well be fulfilled without the commutation condition, see the remark after Theorem~5 in \cite{ruskai_2002}.

An alternative reconstruction of the joint density operator
combines appropriate conditional objects with the marginals. In the classical
Markov case, we have
$$
p(a,b,c)=p(a\mid c)p(b,c)=p(b\mid c)p(a,c).
$$
Thus the joint distribution is obtained by combining the
conditional factor with either of the two prescribed marginals. In the quantum
case these two reconstructions are no longer automatically the same.

For disjoint $A,B\subseteq V$ and
$\rho\in\mathbb S_1^+(\hilb_{A\cup B})$, define the \emph{direct conditional density
operator}
\begin{equation}\label{eq:direct-conditional}
\rho_{A|B}:=
\rho_B^{-1/2}\rho_{A\cup B}\rho_B^{-1/2}.
\end{equation}
This is one of several possible noncommutative conditional objects; see
\citet{leifer_poulin_2008}. Its role here is simple: it separates a joint
state into a conditional part and a marginal part. Namely,
$$
\rho_{A\cup B}=\rho_{A\cd B}\star\rho_B,$$
 where we recall that 
$M\star N:=N^{1/2}MN^{1/2}.$
The pull-out property gives $\trace_A(\rho_{A\cd B})=I_B$ and hence the conditional operator acts similarly to a Markov kernel. See also \cite{leifer2013towards} for this and an associated discussion.
We shall use the following reconstruction fact. 

\begin{lem}\label{lem:conditional-reconstruction}
Let $A,B,C$ be disjoint,
$\rho_{B\cup C}\in\mathbb S_1^+(\hilb_{B\cup C})$ and
$\sigma_{A\cup C}\in\mathbb S_1^+(\hilb_{A\cup C})$. Define
$$
\omega:=\rho_{B\cd C}\star\sigma_{A\cup C},
$$
where both factors are embedded to $\hilb_{A\cup B\cup C}$. Then
$\omega$ is a density operator with
$$
\omega_{A\cup C}=\sigma_{A\cup C}
\qquad\text{and}\qquad
\omega_{B\cd A\cup C}=\rho_{B\cd C},
$$
where the second identity uses the usual embedding of $\rho_{B\cd C}$ into
$\hilb_{A\cup B\cup C}$.
\end{lem}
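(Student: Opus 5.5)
Writing everything on $\hilb_{A\cup B\cup C}$, we have $\omega=\sigma_{A\cup C}^{1/2}\,\rho_{B\cd C}\,\sigma_{A\cup C}^{1/2}$ with $\rho_{B\cd C}=\rho_C^{-1/2}\rho_{B\cup C}\rho_C^{-1/2}$, where $\rho_C=\trace_B(\rho_{B\cup C})$. The plan is to check three things in turn: positivity of $\omega$, its $A\cup C$ marginal (which gives the trace), and its conditional operator given $A\cup C$. Each should reduce to the pull-out property (Lemma~\ref{lem:pull-out}) plus the fact $\trace_B(\rho_{B\cd C})=I_C$.

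\emph{Positivity.} Since $\rho_{B\cup C}\succ0$ and $\rho_C\succ0$, the operator $\rho_{B\cd C}$ is strictly positive definite. It remains so after tensoring with $I_A$. Conjugating by the invertible self-adjoint operator $\sigma_{A\cup C}^{1/2}\otimes I_B$ preserves strict positivity, so $\omega\succ0$.

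\emph{Marginal.} Apply Lemma~\ref{lem:pull-out}, tracing out $B$. The operator $\sigma_{A\cup C}^{1/2}$ acts only on $A\cup C$, so it pulls out on both sides:
$$\trace_B(\omega)=\sigma_{A\cup C}^{1/2}\,\trace_B(\rho_{B\cd C})\,\sigma_{A\cup C}^{1/2}.$$
By the same lemma, now with $\rho_C^{-1/2}$ on both sides, $\trace_B(\rho_{B\cd C})=\rho_C^{-1/2}\rho_C\rho_C^{-1/2}=I_C$, which embeds as $I_{A\cup C}$. Hence $\omega_{A\cup C}=\sigma_{A\cup C}$. Taking the full trace then gives $\trace(\omega)=\trace(\sigma_{A\cup C})=1$, so $\omega$ is a density operator.

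\emph{Conditional operator.} By definition \eqref{eq:direct-conditional}, $\omega_{B\cd A\cup C}=\omega_{A\cup C}^{-1/2}\,\omega\,\omega_{A\cup C}^{-1/2}$. Substituting $\omega_{A\cup C}=\sigma_{A\cup C}$, the factors $\sigma_{A\cup C}^{-1/2}\sigma_{A\cup C}^{1/2}$ cancel on each side. This leaves $I_A\otimes\rho_{B\cd C}$, the usual embedding.

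The only point requiring care is bookkeeping of the embeddings: one must verify that the pull-out lemma applies with the roles of the traced system ($B$) and the kept system ($A\cup C$) correctly assigned, and that $I_C$ embeds as $I_{A\cup C}$. No genuine obstacle is expected.
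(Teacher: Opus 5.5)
Your proposal is correct and follows essentially the same route as the paper. You pull $\sigma_{A\cup C}^{1/2}$ out of $\trace_B$ via Lemma~\ref{lem:pull-out}, use $\trace_B(\rho_{B\cd C})=I_C$ to get $\omega_{A\cup C}=\sigma_{A\cup C}$ (hence trace one), and cancel $\sigma_{A\cup C}^{\mp1/2}$ to recover $\rho_{B\cd C}$. Your explicit congruence argument for strict positivity only spells out what the paper states as ``positive by construction.''
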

Note that in this construction, $\rho_{B\cup C}$ and $\sigma_{A\cup C}$ do not have to agree on the $C$-margin.
\begin{proof}
Since $\sigma_{A\cup C}$ acts only on the subsystem not traced out, we get 
$$
\trace_B(X\star\sigma_{A\cup C})
=
\trace_B(X)\star\sigma_{A\cup C}
$$
for every positive operator $X$ on $\hilb_{A\cup B\cup C}$. Hence
$$
\omega_{A\cup C}
=
\trace_B(\rho_{B\cd C}\star\sigma_{A\cup C})
=
\trace_B(\rho_{B\cd C})\star\sigma_{A\cup C}
=
(I_A\otimes I_C)\star\sigma_{A\cup C}
=
\sigma_{A\cup C}.
$$
Thus $\omega$ has trace one, and it is positive by construction. Finally,
$$
\omega_{B\cd A\cup C}
=
\sigma_{A\cup C}^{-1/2}\omega\,\sigma_{A\cup C}^{-1/2}
=
\rho_{B\cd C},
$$
again with the usual embedding.
\end{proof}
For a consistent pair
$\mathcal R=\{\rho_{A\cup C},\rho_{B\cup C}\}$, the lemma gives two natural
one-sided reconstructions:
$$
\rho_{B\cd C}\star\rho_{A\cup C}
=
\rho_{A\cup C}^{1/2}\rho_C^{-1/2}
\rho_{B\cup C}
\rho_C^{-1/2}\rho_{A\cup C}^{1/2},
$$
and
$$
\rho_{A\cd C}\star\rho_{B\cup C}
=
\rho_{B\cup C}^{1/2}\rho_C^{-1/2}
\rho_{A\cup C}
\rho_C^{-1/2}\rho_{B\cup C}^{1/2}.
$$
Classically these two operators coincide. The next proposition says that, in the quantum case,
they coincide exactly when the  two-clique trace criterion is fulfilled.

The following result follows directly from \citep[Theorem~2.1]{zhang:13}. We include the proof for completeness.
\begin{prop}[Conditional reconstruction form of the two-clique criterion]
\label{prop:petz2}
Let
$$
K=\rho_{A\cup C}^{1/2}\rho_C^{-1/2}\rho_{B\cup C}^{1/2}.
$$
The equivalent conditions of Theorem~\ref{thm:main-trace2} hold if and only if
$K$ is normal. In that case
$$
T(\mathcal R)=KK^*=K^*K,
$$
or, equivalently,
$$
T(\mathcal R)
=
\rho_{B\cd C}\star\rho_{A\cup C}
=
\rho_{A\cd C}\star\rho_{B\cup C}.
$$
\end{prop}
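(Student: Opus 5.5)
The plan is to identify the two products $KK^*$ and $K^*K$ with the two one-sided reconstructions, and then use the equality case of monotonicity of relative entropy under partial trace, via the sandwich (Petz) formula in Appendix~\ref{app:petz-partial-trace}, in both directions. First, a direct computation gives
$$
KK^*=\rho_{A\cup C}^{1/2}\rho_C^{-1/2}\rho_{B\cup C}\rho_C^{-1/2}\rho_{A\cup C}^{1/2}=\rho_{B\cd C}\star\rho_{A\cup C},
\qquad
K^*K=\rho_{A\cd C}\star\rho_{B\cup C}.
$$
By Lemma~\ref{lem:conditional-reconstruction}, both are density operators. The first has $A\cup C$-marginal $\rho_{A\cup C}$. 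The second, by the same lemma with the roles of $A$ and $B$ exchanged, has $B\cup C$-marginal $\rho_{B\cup C}$. Consequently, normality of $K$ is exactly the statement that the two one-sided reconstructions agree. This already gives the "equivalently" reformulation in the proposition.

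The key tool I would extract from the appendix is the following characterization. For a strictly positive state $\omega$ on $A\cup B\cup C$,
$$
{\rm I}(A:B\cd C)_\omega=0 \iff \omega=\omega_{A\cup C}^{1/2}\omega_C^{-1/2}\omega_{B\cup C}\omega_C^{-1/2}\omega_{A\cup C}^{1/2}.
$$
The same holds with the roles of $A$ and $B$ exchanged. To derive it, I would write
$$
{\rm I}(A:B\cd C)_\omega={\rm D}(\omega\|\omega_{A\cup C}\otimes I_B)-{\rm D}(\omega_{B\cup C}\|\omega_C\otimes I_B).
$$
This identity is checked by expanding the logarithms; the trace correction terms in ${\rm D}$ cancel, since both reference operators have the same trace $\dim\hilb_B$. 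Thus the conditional mutual information is exactly the gap in monotonicity under $\trace_A$ with reference $\sigma=\omega_{A\cup C}\otimes I_B$, for which $\trace_A\sigma=\omega_C\otimes I_B$. The appendix's equality theorem says the gap vanishes if and only if $\omega$ is recovered by the Petz sandwich $\sigma^{1/2}(\trace_A\sigma)^{-1/2}\,\omega_{B\cup C}\,(\trace_A\sigma)^{-1/2}\sigma^{1/2}$. This sandwich is the displayed formula. I expect this to be the main obstacle: I need to make sure the appendix's statement covers reference operators that are not normalized, or rescale $\sigma$ by $1/\dim\hilb_B$ (the scalar cancels in the sandwich), and track the tensor embeddings carefully.

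For the forward direction, assume the conditions of Theorem~\ref{thm:main-trace2} hold and set $\tau=T(\mathcal R)$. Then $\tau\in M(\mathcal R)$ and $A\ciph B\cd C\,[\tau]$. Applying the characterization with $\omega=\tau$ and its marginals $\rho_{A\cup C},\rho_{B\cup C},\rho_C$ gives $\tau=\rho_{B\cd C}\star\rho_{A\cup C}=KK^*$. The symmetric version gives $\tau=\rho_{A\cd C}\star\rho_{B\cup C}=K^*K$. Hence $K$ is normal and $T(\mathcal R)=KK^*=K^*K$.

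For the converse, assume $K$ is normal and put $\omega:=KK^*=K^*K$. By the first paragraph, $\omega_{A\cup C}=\rho_{A\cup C}$ (from the form $KK^*$) and $\omega_{B\cup C}=\rho_{B\cup C}$ (from the form $K^*K$). Hence $\omega_C=\rho_C$, $\omega$ is strictly positive, and $\omega\in M(\mathcal R)$. Rewriting $\omega=KK^*$ in terms of the marginals of $\omega$ shows that $\omega$ satisfies the sandwich identity of the characterization. Therefore ${\rm I}(A:B\cd C)_\omega=0$. This is condition (iii) of Theorem~\ref{thm:main-trace2}, and the uniqueness part of that theorem gives $\omega=T(\mathcal R)$.
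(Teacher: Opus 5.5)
Your proposal is correct and follows the paper's proof essentially step for step. In one direction you apply the vanishing-conditional-mutual-information sandwich identities to the Markov completion $\tau=T(\mathcal R)$ to get $\tau=KK^*=K^*K$; in the other you read off the marginals of $KK^*$ and $K^*K$, invoke the converse sandwich characterization, and conclude with the uniqueness part of Theorem~\ref{thm:main-trace2}. The one deviation is deriving that characterization via the unnormalized reference $\omega_{A\cup C}\otimes I_B$, which works after rescaling by $1/\dim\hilb_B$ as you note. This is harmless but unnecessary, since Proposition~\ref{prop:app-conditional-independence-petz} already provides the characterization using the normalized reference $\omega_A\otimes\omega_{B\cup C}$.
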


\begin{proof}
Suppose first that the conditions of Theorem~\ref{thm:main-trace2} hold, and
set $\tau=T(\mathcal R)$. Then $\tau$ is the Markov completion of the prescribed
marginals, so ${\rm I}(A:B\cd C)_\tau=0$. By
Proposition~\ref{prop:app-petz-partial-trace}, the condition ${\rm I}(A:B\cd C)_\tau=0$ gives the
two Petz reconstruction identities
$$
\tau
=
\tau_{A\cup C}^{1/2}\tau_C^{-1/2}
\tau_{B\cup C}
\tau_C^{-1/2}\tau_{A\cup C}^{1/2}\qquad\mbox{and}\qquad
\tau
=
\tau_{B\cup C}^{1/2}\tau_C^{-1/2}
\tau_{A\cup C}
\tau_C^{-1/2}\tau_{B\cup C}^{1/2}.
$$
Since $\tau$ has marginals
$\rho_{A\cup C},\rho_{B\cup C},\rho_C$, these identities become
$
\tau=KK^*=K^*K$.
Thus $K$ is normal and $T(\mathcal R)=KK^*=K^*K$.

Conversely, suppose that $K$ is normal and set
$\omega=KK^*=K^*K$. Then $\omega\ge 0$. Using $\omega=KK^*$ gives
$$
\trace_B(\omega)
=
\rho_{A\cup C}^{1/2}\rho_C^{-1/2}
\trace_B(\rho_{B\cup C})
\rho_C^{-1/2}\rho_{A\cup C}^{1/2} =
\rho_{A\cup C}.
$$
Using $\omega=K^*K$ similarly gives
$\trace_A(\omega)=\rho_{B\cup C}$. Hence $\omega\in M(\mathcal R)$. Moreover, $\omega_C=\rho_C$, since it is the $C$-marginal of either
$\omega_{A\cup C}=\rho_{A\cup C}$ or
$\omega_{B\cup C}=\rho_{B\cup C}$. Therefore the identity $\omega=KK^*$ can be
rewritten as
$$
\omega
=
\omega_{A\cup C}^{1/2}\omega_C^{-1/2}
\omega_{B\cup C}
\omega_C^{-1/2}\omega_{A\cup C}^{1/2}.
$$
By Proposition~\ref{prop:app-petz-partial-trace}, this implies
${\rm I}(A:B\cd C)_\omega=0$. Thus $\omega$ is a Markov completion. By
Theorem~\ref{thm:main-trace2}, the Markov completion is unique and equal to
$T(\mathcal R)$. Hence
$
T(\mathcal R)=\omega=KK^*=K^*K,
$
as desired.
\end{proof}

\begin{definition}
When the equivalent conditions in Proposition~\ref{prop:petz2} hold, we say
that the pair $\rho_{A\cup C},\rho_{B\cup C}$ is \emph{Markov compatible over}
$C$, and we write
$$
\rho_{A\cup C}\wcomb\rho_{B\cup C}
$$
for the common operator $KK^*=K^*K$.
\end{definition}

\begin{rem}[Conditional interpretation]
Suppose that the pair $\rho_{A\cup C},\rho_{B\cup C}$ is Markov compatible over
$C$, and let
$
\omega=\rho_{A\cup C}\wcomb\rho_{B\cup C}.
$
Then the two one-sided conditional reconstructions agree:
$$
\omega
=
\rho_{B\cd C}\star\rho_{A\cup C}
=
\rho_{A\cd C}\star\rho_{B\cup C}.
$$
Equivalently, conditioning on the larger system adds no further information:
$$
\omega_{A\cd B\cup C}
=
\rho_{A\cd C}\otimes I_B,
\qquad
\omega_{B\cd A\cup C}
=
\rho_{B\cd C}\otimes I_A,
$$
with the usual embeddings.
\end{rem}

The proposition separates two roles of the classical factorization. The
logarithmic expression gives the variational candidate, while the sandwich
expression gives the conditional reconstruction candidate. In the commuting
case the two coincide automatically; in the noncommutative case their
coincidence is equivalent to  normality of $K$.

\subsection{Markov states on chordal graphs}

We now move from two overlapping reduced density operators to the family of
clique reductions associated with a chordal graph. Thus the local data are
density operators $\rho_C$ on the clique Hilbert spaces $\hilb_C$,
$C\in\cliques$, which are required to agree after partial trace on their
overlaps. The main point is that the two-clique picture survives in a
nontrivial way: the logarithmic construction still yields a canonical candidate
for a Markov completion, and its trace still detects whether such a completion
actually exists.

\begin{definition}\label{def:quantum_markov}
A density operator $\rho\in\mathbb S_1^+(\hilb_V)$ is \emph{quantum Markov} on
$\graph$ if for all disjoint subsets $A,B,C\subseteq V$ it holds that
$$A\gse B\cd C\implies A\ciph B\cd C\,[\rho].$$
\end{definition}
We note that this is the global variant of a range of alternative Markov properties that can be associated with an undirected graph.
For decompositions of chordal graphs, this Markov property has the usual
recursive form:
\begin{lem}\label{lem:decomp}
Let $\graph=(V,E)$ be a chordal graph and let $\rho$ be a density operator on
$\hilb_V$. If $(A,B,S)$ is a decomposition of $\graph$, then $\rho$ is quantum
Markov on $\graph$ if and only if $\rho_{A\cup S}$ and $\rho_{B\cup S}$ are
quantum Markov on the induced subgraphs $\graph_{A\cup S}$ and
$\graph_{B\cup S}$, and $A\ciph B\cd S\,[\rho]$.
\end{lem}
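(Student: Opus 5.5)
The argument is purely combinatorial on top of the semi-graphoid axioms (Q1)--(Q4) and the intersection property (Q5) of Proposition~\ref{prop:intersection}. Quantum conditional independence is therefore a graphoid. The standard proof that the global Markov property decomposes along a decomposition carries over verbatim, since it uses only these axioms and the fact that separation in the induced subgraphs relates to separation in $\graph$. I will also use that the marginal of $\rho$ on a subset $W$ determines ${\rm I}(\cdot:\cdot\mid\cdot)$ for triples inside $W$ (Lemma~\ref{lem:iterated-partial-trace}). Hence statements about $\rho_{A\cup S}$ are statements about $\rho$ restricted to triples in $A\cup S$.

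\emph{Necessity.} Suppose $\rho$ is quantum Markov on $\graph$. Since $S$ separates $A$ from $B$, we get $A\ciph B\cd S$ directly. Now take disjoint $X,Y,Z\subseteq A\cup S$ with $X\gse Y\cd Z$ in $\graph_{A\cup S}$. I claim $Z$ separates $X$ from $Y$ in $\graph$ as well. Suppose not, and take a path in $\graph$ from $X$ to $Y$ avoiding $Z$. If it leaves $A\cup S$, it enters $B$ and must exit and re-enter through $S$. Because $S$ is complete, such an excursion can be shortcut by a single edge between its first and last vertices in $S$. Doing this for every excursion gives a path inside $A\cup S$ that avoids $Z$, a contradiction. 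Hence $X\ciph Y\cd Z\,[\rho]$, which is the same as $[\rho_{A\cup S}]$. The same argument applies to $B\cup S$.

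\emph{Sufficiency.} Assume the three conditions and let $X\gse Y\cd Z$ in $\graph$ with $X,Y,Z$ disjoint. First reduce to the case $X\cup Y\cup Z=V$. Given a separation, let $\tilde X$ be the union of $X$ with all components of $\graph_{V\setminus Z}$ meeting $X$, and let $\tilde Y=V\setminus(\tilde X\cup Z)$. Then $\tilde X\gse\tilde Y\cd Z$, and reduction (Q2) recovers $X\ciph Y\cd Z$. This uses the standard fact that the global property for maximal triples implies it for all triples.

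So let $(X,Y,Z)$ partition $V$. Write $X_A=X\cap A$, $X_B=X\cap B$, and similarly for $Y$ and $Z$, with $X_S,Y_S,Z_S$ the traces on $S$. Since $S$ is complete, $X_S$ and $Y_S$ cannot both be nonempty, as they would be adjacent. Without loss of generality $Y_S=\emptyset$. The plan is the usual chain:
\begin{itemize}
\item[(a)] Markovness on $\graph_{A\cup S}$ gives $X_A\cup X_S\ciph Y_A\cd Z_A\cup Z_S$, because $Z\cap(A\cup S)$ separates there.
\item[(b)] Markovness on $\graph_{B\cup S}$ gives $X_B\cup X_S\ciph Y_B\cd Z_B\cup Z_S$.
\item[(c)] Start from $A\ciph B\cd S$. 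Weak union (Q3) and reduction (Q2) then yield statements such as $Y_A\ciph X_B\cup Y_B\cup Z_B\cd S$.
\end{itemize}
These are combined with contraction (Q4) to obtain $X\ciph Y\cd Z$. The bookkeeping is exactly that of the classical proof (e.g.\ in \citet{lauritzen:26}, for global Markov properties under decomposition, which uses the graphoid axioms). Intersection (Q5) enters where one shows that the pairwise or local properties on the pieces imply the global one. Alternatively, it enters when passing through pairwise statements in the glueing.

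\emph{Main obstacle.} The hard part is the sufficiency glueing: arranging the contraction steps correctly when $Z$ meets $S$ only partially. My first approach would be to sidestep the bookkeeping. I would show that the pairwise Markov property holds for $\rho$ on $\graph$: for non-adjacent $u,v$, $u\ciph v\cd V\setminus\{u,v\}$. Then I would invoke the classical theorem that for graphoids, pairwise, local and global Markov properties coincide, which is valid here by Proposition~\ref{prop:intersection}.

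For the pairwise property there are two cases:
\begin{itemize}
\item If $u,v\in A\cup S$, apply (a) to get $u\ciph v\cd (A\cup S)\setminus\{u,v\}$. Combine it with $A\ciph B\cd S$, weakened to $u\ciph B\cd (A\cup S)\setminus\{u\}$, using contraction.
\item If $u\in A$ and $v\in B$, $A\ciph B\cd S$ with weak union gives $u\ciph v\cd V\setminus\{u,v\}$ directly.
\end{itemize}
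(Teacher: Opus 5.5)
Your proposal is correct. The necessity argument, shortcutting each excursion into $B$ by an edge of the complete set $S$, is the standard one behind the paper's appeal to \citet[Proposition~4.19]{lauritzen:26}.

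For sufficiency you take a different route from the paper. You check only the pairwise Markov property on $\graph$ and then invoke the Pearl--Paz theorem that pairwise implies global for graphoids, which requires intersection (Proposition~\ref{prop:intersection}) and hence strict positivity of $\rho$. Two small details in your first pairwise case: label the vertices so that $u\in A$, which is possible because $S$ is complete, and finish with a weak union step moving $B$ into the conditioning set.

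The paper's cited proof uses only (Q1)--(Q4), and your abandoned chain (a)--(c) in fact closes without intersection. Take $(X,Y,Z)$ partitioning $V$ with $Y\cap S=\emptyset$.
\begin{enumerate}[(1)]
\item Contract (a) with $Y_A\ciph B\cd (A\setminus Y_A)\cup S$, which follows from $A\ciph B\cd S$ by weak union. This gives $Y_A\ciph X_A\cup X_S\cup B\cd Z_A\cup Z_S$, and weak union then gives $Y_A\ciph X\cd Z\cup Y_B$.
\item Symmetrically, contracting (b) with $Y_B\ciph A\cd (B\setminus Y_B)\cup S$, then applying weak union and reduction, gives $Y_B\ciph X\cd Z$.
\item One more contraction yields $X\ciph Y_A\cup Y_B\cd Z$.
\end{enumerate}
So your remark that intersection enters the glueing is mistaken; the bookkeeping you identify as the main obstacle is three contraction steps.

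What your route buys is that the local verification becomes trivial. What it costs is a dependence on the appendix's Petz-equality analysis and a restriction to strictly positive states. The semi-graphoid argument uses only strong subadditivity, so it works for arbitrary density operators, including singular ones for which intersection can fail. Since every state in this paper lies in $\mathbb S_1^+$, both proofs are valid here.
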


\begin{proof}
This is the standard recursive characterization of the global Markov property on
a chordal graph. The proof in \citet[Proposition~4.19]{lauritzen:26} uses only the semi-graphoid axioms (Q1)--(Q4) satisfied by the
conditional independence relation.
\end{proof}

The next proposition is the chordal analogue of strong subadditivity. It provides the
entropy characterization of the quantum Markov property that will be crucial
later.

\begin{prop}\label{prop:ssa-chordal}
Let $\graph$ be chordal, with clique set $\cliques$ and separator set $\sep$.
Then every $\rho\in\mathbb S_1^+(\hilb_V)$ satisfies
$${\rm S}(\rho)\le \sum_{C\in\cliques}{\rm S}(\rho_C)-\sum_{S\in\sep}\nu(S){\rm S}(\rho_S),$$
with equality if and only if $\rho$ is quantum Markov on $\graph$.
\end{prop}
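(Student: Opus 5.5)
We argue by induction on the number of cliques $|\cliques|$, using a leaf of a junction tree to produce a decomposition and combining strong subadditivity with Lemma~\ref{lem:decomp}.

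\emph{Base case.} If $|\cliques|=1$, then $V$ is complete and there are no separators. The inequality reads ${\rm S}(\rho)\le{\rm S}(\rho)$, so it holds with equality. In this case no nontrivial separation $A\gse B\cd C$ exists with $A,B$ nonempty, so every $\rho$ is quantum Markov. (Cases where $A$ or $B$ is empty give ${\rm I}=0$ trivially.)

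\emph{Inductive step.} Suppose $|\cliques|\ge2$ and fix a junction tree $\mathcal T$. Choose a leaf clique $C_0$ with unique neighbour $C_1$, and set $S=C_0\cap C_1$, $A=C_0\setminus S$ and $B=V\setminus C_0$.

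First, $(A,B,S)$ is a decomposition of $\graph$. Indeed, $S$ is complete. Moreover, by the junction tree property every vertex of $C_0$ lying in another clique lies in $C_1$, hence in $S$. So any edge from $A$ leads into $C_0$, and every path from $A$ to $B$ meets $S$.

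Second, the induced subgraph $\graph_{B\cup S}$ is chordal. Its cliques are $\cliques\setminus\{C_0\}$, since $S\subseteq C_1$ is not a clique (here we need $C_0\neq C_1$ maximal). The tree $\mathcal T$ with $C_0$ removed is a junction tree for it, so its separators are those of $\graph$ with the multiplicity of $S$ reduced by one. The graph $\graph_{A\cup S}=\graph_{C_0}$ is complete.

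Now write
$$
{\rm S}(\rho)={\rm S}(\rho_{C_0})+{\rm S}(\rho_{B\cup S})-{\rm S}(\rho_S)-{\rm I}(A:B\cd S)_\rho .
$$
Apply the induction hypothesis to $\rho_{B\cup S}$, using Lemma~\ref{lem:iterated-partial-trace} so that its clique and separator marginals are those of $\rho$. This bounds ${\rm S}(\rho_{B\cup S})$ by $\sum_{C\ne C_0}{\rm S}(\rho_C)-\sum_{S'}\nu'(S'){\rm S}(\rho_{S'})$. Adding ${\rm S}(\rho_{C_0})-{\rm S}(\rho_S)$ restores the full sum, and strong subadditivity gives ${\rm I}\ge0$. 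This proves the inequality.

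\emph{Equality case.} The defect equals the sum of two nonnegative terms: ${\rm I}(A:B\cd S)_\rho$ and the defect for $\rho_{B\cup S}$ on $\graph_{B\cup S}$. Equality therefore holds if and only if both vanish. By induction, this means $\rho_{B\cup S}$ is Markov on $\graph_{B\cup S}$ and $A\ciph B\cd S\,[\rho]$. Since $\rho_{A\cup S}$ is trivially Markov on the complete graph $\graph_{C_0}$, Lemma~\ref{lem:decomp} shows this is exactly the statement that $\rho$ is quantum Markov on $\graph$.

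\emph{Main obstacle.} The main difficulty is the combinatorial bookkeeping: we must verify that removing a leaf yields a junction tree of the induced subgraph with exactly the stated clique and separator multiplicities. The multiplicities $\nu$ are independent of the chosen junction tree; we take this as the standard fact from \citet{lauritzen:26}.
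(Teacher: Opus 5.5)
Your proof is correct and follows essentially the same route as the paper: induction on $|\cliques|$, strong subadditivity across a decomposition, the induction hypothesis on the induced subgraph(s), and Lemma~\ref{lem:decomp} for the equality case. The only difference is the choice of decomposition. You split off a leaf clique of a junction tree, so one piece is complete, the induction hypothesis is used only once, and the clique and separator bookkeeping is checked explicitly. The paper instead splits along a minimal separator and applies the induction hypothesis to both induced subgraphs.
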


\begin{proof}
The proof is by induction on the number $|\cliques|$ of cliques of $\graph$.
If $|\cliques|=1$, there is nothing to prove. If $|\cliques|=2$, the statement
is precisely strong subadditivity together with the definition of quantum
conditional independence.

Assume now that the result holds for all chordal graphs with at most $n$
cliques, and let $\graph$ be a chordal graph with $n+1$ cliques. Let
$(A,B,S^*)$ be a proper decomposition of $\graph$, where $S^*\in \sep$ is a minimal
separator of $\graph$. Let $\cliques_1,\sep_1$ denote the cliques and
separators of $\graph_{A\cup S^*}$, with separator multiplicities $\nu_1$, and
similarly let $\cliques_2,\sep_2,\nu_2$ correspond to $\graph_{B\cup S^*}$. Strong subadditivity gives, for every $\rho\in\mathbb S_1^+(\hilb_V)$,
\begin{equation}\label{eq:subadd-dec}
{\rm S}(\rho)\le {\rm S}(\rho_{A\cup S^*})+{\rm S}(\rho_{B\cup S^*})-{\rm S}(\rho_{S^*}).
\end{equation}
Applying the induction hypothesis to $\rho_{A\cup S^*}$ and $\rho_{B\cup S^*}$
yields
\begin{align*}
{\rm S}(\rho)
&\le {\rm S}(\rho_{A\cup S^*})+{\rm S}(\rho_{B\cup S^*})-{\rm S}(\rho_{S^*}) \\
&\le \sum_{C\in\cliques_1}{\rm S}(\rho_C)-\sum_{S\in\sep_1}\nu_1(S){\rm S}(\rho_S)
 + \sum_{C\in\cliques_2}{\rm S}(\rho_C)-\sum_{S\in\sep_2}\nu_2(S){\rm S}(\rho_S)-{\rm S}(\rho_{S^*}) \\
&= \sum_{C\in\cliques}{\rm S}(\rho_C)-\sum_{S\in\sep}\nu(S){\rm S}(\rho_S).
\end{align*}
This proves the inequality.

Suppose now that equality holds. Then equality must hold in
\eqref{eq:subadd-dec} and in the two applications of the induction hypothesis.
Hence $\rho_{A\cup S^*}$ and $\rho_{B\cup S^*}$ are quantum Markov on the
induced subgraphs, and $A\ciph B\cd S^*\,[\rho]$. By
Lemma~\ref{lem:decomp}, this implies that $\rho$ is quantum Markov on
$\graph$. Conversely, if $\rho$ is quantum Markov on $\graph$, then
Lemma~\ref{lem:decomp} implies that $\rho_{A\cup S^*}$ and
$\rho_{B\cup S^*}$ are quantum Markov on the induced subgraphs and that
$A\ciph B\cd S^*\,[\rho]$. By the induction hypothesis, equality holds in the
two induced-graph entropy bounds, and by strong subadditivity equality also
holds in \eqref{eq:subadd-dec}. Therefore equality holds in the displayed
entropy inequality for $\graph$.
\end{proof}

For later reference we introduce the entropy defect associated with the chordal
Markov entropy formula which we shall term the \emph{global information} in $\rho$ relative to the graph $\graph$.
\begin{equation}\label{eq:gI}
g{\rm I}(\graph)_\rho\;:=\;
\sum_{C\in\cliques}{\rm S}(\rho_C)
-
\sum_{S\in\sep}\nu(S){\rm S}(\rho_S)
-
{\rm S}(\rho).
\end{equation}
By Proposition~\ref{prop:ssa-chordal}, $g{\rm I}(\graph)_\rho\geq 0$, with equality if and
only if $\rho$ is quantum Markov with respect to $\graph$. Thus
$g{\rm I}(\graph)_\rho$ measures the entropy defect, or global information, not
captured by the clique and separator reductions specified by $\graph$.

This quantity interpolates between familiar information measures. If $\graph$
is complete, then $g{\rm I}(\graph)_\rho=0$ for every state $\rho$. If $\graph$ is
empty, then
$$
g{\rm I}(\graph)_\rho=\sum_{v\in V}{\rm S}(\rho_v)-{\rm S}(\rho),
$$
the quantum analogue of Watanabe's total correlation, or multiinformation
\citep{watanabe1960information,studeny:05}. In the two-clique case, with
cliques $A\cup S$ and $B\cup S$, \eqref{eq:gI} reduces to the conditional
mutual information ${\rm I}(A:B\mid S)_\rho$.

This maximum entropy viewpoint also connects $g{\rm I}(\graph)_\rho$ to notions of
connected information. In the classical theory, irreducible higher-order
correlations are defined by comparing a distribution with the maximum entropy
distribution compatible with prescribed lower-order marginals
\citep{schneidman2003network}. Quantum analogues of irreducible multiparty
correlations are defined similarly, using maximum entropy states compatible
with prescribed reduced density operators \citep{zhou2008irreducible}. The
quantity $g{\rm I}(\graph)_\rho$ follows the same philosophy, but organizes the
prescribed local data through the cliques of a chordal graph. The divergence
identity in Proposition~\ref{prop:gI-divergence-identity} below makes this
interpretation precise.

\color{black}

\subsection{The chordal logarithmic construction}

We
now turn to clique marginals on a chordal graph. The point is not merely that
the same logarithmic construction extends formally, but that the two main
features of the two-clique case survive: the trace of $T(\mathcal R)$ still
detects the existence of a Markov completion, and whenever such a completion
exists it is unique and equal to $T(\mathcal R)$.

\begin{thm}[Chordal trace criterion]\label{thm:chordal-trace-criterion}
Let $\graph$ be chordal with clique set $\cliques$, and let
$\mathcal R=\{\rho_C:C\in\cliques\}$ be a pairwise consistent family of
strictly positive clique marginals. Define
\begin{equation}\label{eq:logT-decomposable}
T(\mathcal R)
=
\exp\left\{
\sum_{C\in\cliques}\log\rho_C
-
\sum_{S\in\sep}\nu(S)\log\rho_S
\right\}.
\end{equation}
Then $\trace (T(\mathcal R))\le 1$. Moreover, the following are equivalent:
\begin{enumerate}[(i)]
\item $\trace (T(\mathcal R))=1$;
\item $T(\mathcal R)\in M(\mathcal R)$;
\item there exists $\omega\in M(\mathcal R)$ that is quantum Markov on
      $\graph$.
\end{enumerate}
When these conditions hold, $T(\mathcal R)$ is the unique quantum Markov
completion and the unique maximum entropy element of $M(\mathcal R)$. Moreover,
$$
{\rm S}(T(\mathcal R))
=
\sum_{C\in\cliques}{\rm S}(\rho_C)
-
\sum_{S\in\sep}\nu(S){\rm S}(\rho_S).
$$
\end{thm}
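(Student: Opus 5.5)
The plan is to mirror the proof of Theorem~\ref{thm:main-trace2}. The first step is a chordal version of Lemma~\ref{lem:two-clique-divergence-identity-general}: for every $\omega\in\mathbb S_1^+(\hilb_V)$,
$$
{\rm D}(\omega\|T(\mathcal R))+1-\trace(T(\mathcal R))
=
g{\rm I}(\graph)_\omega+\Delta_{\mathcal R}(\omega),
$$
where
$$
\Delta_{\mathcal R}(\omega):=\sum_{C\in\cliques}{\rm D}(\omega_C\|\rho_C)-\sum_{S\in\sep}\nu(S){\rm D}(\omega_S\|\rho_S).
$$
This is the same computation as before. One expands $\trace(\omega\log T(\mathcal R))$ by the defining property of the partial trace, then adds and subtracts $\sum_C{\rm S}(\omega_C)-\sum_S\nu(S){\rm S}(\omega_S)$. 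The separator marginals $\rho_S$ are well defined by pairwise consistency.

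Next I would show $\Delta_{\mathcal R}(\omega)\ge 0$. Orient a junction tree from a root clique. Each separator $S$ (counted with multiplicity) is then matched to the unique child clique $C$ with $S\subseteq C$, and this matching is injective. Monotonicity of relative entropy under partial trace (Proposition~\ref{prop:app-monotonicity-partial-trace}) gives ${\rm D}(\omega_C\|\rho_C)\ge{\rm D}(\omega_S\|\rho_S)$ for each matched pair. Hence $\Delta_{\mathcal R}(\omega)$ is at least the root term ${\rm D}(\omega_{C_0}\|\rho_{C_0})\ge0$.

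The trace bound $\trace(T(\mathcal R))\le1$ is the step I expect to be the main obstacle, since Lieb's three-matrix inequality does not directly handle many factors. I would prove it by induction on $|\cliques|$, using a leaf clique $C_n$ with separator $S_n\subseteq C_{n-1}$ and $R=C_n\setminus S_n$. Write $T(\mathcal R)=\exp(L'+\log\rho_{C_n}-\log\rho_{S_n})$, where $L'$ is the logarithm for the remaining cliques, embedded with $I_R$. Lieb's inequality in its form $\trace\exp(\log X-\log Y+\log Z)\le\trace\int_0^\infty X(t+Y)^{-1}Z(t+Y)^{-1}dt$ applies with $X=\exp(L')$, $Y=\rho_{S_n}$ and $Z=\rho_{C_n}$. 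Taking the partial trace over $R$ first and using the pull-out property (Lemma~\ref{lem:pull-out}) replaces $\rho_{C_n}$ by $\rho_{S_n}$. Integrating in the eigenbasis of $\rho_{S_n}$ then gives $\trace(T(\mathcal R))\le\trace(\exp L')=\trace(T(\mathcal R'))\le 1$ by induction.

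With the identity, $\Delta\ge0$ and the trace bound in hand, the equivalences follow exactly as in Theorem~\ref{thm:main-trace2}. For (i)$\Rightarrow$(ii),(iii), put $\omega=\tau=T(\mathcal R)$. This gives $g{\rm I}(\graph)_\tau=0$, so $\tau$ is Markov by Proposition~\ref{prop:ssa-chordal}, and $\Delta_{\mathcal R}(\tau)=0$. From $\Delta_{\mathcal R}(\tau)=0$ I would recover the marginals by peeling the tree. The matched differences are all nonnegative and the root term is nonnegative, so all vanish, giving $\tau_{C_0}=\rho_{C_0}$. The children are then handled inductively: their separator terms vanish because they are marginals of already-matched cliques, and so each ${\rm D}(\tau_C\|\rho_C)=0$.

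The implication (ii)$\Rightarrow$(i) is trivial. For (iii)$\Rightarrow$(i),(ii), the right side of the identity vanishes, so both nonnegative terms on the left vanish, forcing $\omega=T(\mathcal R)$. This also gives uniqueness.

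For maximum entropy, any $\omega\in M(\mathcal R)$ has ${\rm S}(\omega)\le\sum_C{\rm S}(\rho_C)-\sum_S\nu(S){\rm S}(\rho_S)$ by Proposition~\ref{prop:ssa-chordal}. Equality holds exactly when $\omega$ is Markov, that is, exactly when $\omega=T(\mathcal R)$. The same equality yields the stated entropy formula.
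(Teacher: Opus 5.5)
Your proposal is correct, and its skeleton matches the paper's: a divergence identity, nonnegativity of $\Delta_{\mathcal R}$ by peeling the junction tree, recovery of the clique marginals by the same peeling, and the Klein argument for (iii)$\Rightarrow$(i). It departs from the paper in two places.

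For the trace bound, you induct with Lieb's three-matrix inequality, removing a leaf clique and using the pull-out property. This works: it needs the junction-tree fact that $R=C_n\setminus S_n$ meets no other clique, so that $\graph_{V\setminus R}$ has cliques $\cliques\setminus\{C_n\}$. It even gives the chain $\trace(T(\mathcal R))\le\trace(T(\mathcal R'))$. However, the step you flagged as the main obstacle already follows from your first two steps. With $c=\trace(T(\mathcal R))$ and $\sigma=T(\mathcal R)/c$ one has ${\rm D}(\sigma\|T(\mathcal R))+1-c=-\log c$. Your identity evaluated at $\omega=\sigma$ then gives $-\log c=g{\rm I}(\graph)_\sigma+\Delta_{\mathcal R}(\sigma)\ge 0$. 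This is exactly the paper's route, via display \eqref{eq:trace-chordal-proof}. It needs neither Lieb's inequality nor any bookkeeping about the pruned graph.

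For maximum entropy, you observe that the chordal bound of Proposition~\ref{prop:ssa-chordal} is constant on $M(\mathcal R)$ and is attained exactly at Markov elements, hence only at $T(\mathcal R)$. The paper instead invokes the log-linear characterization of Theorem~\ref{thm:maxent_characterization}. Your argument is more elementary and self-contained. The paper's places the result in the general Pythagorean framework, which also covers non-chordal families. The resulting gap identity ${\rm S}(T(\mathcal R))-{\rm S}(\omega)={\rm D}(\omega\|T(\mathcal R))$ for $\omega\in M(\mathcal R)$ also drops out of your identity when $\trace(T(\mathcal R))=1$.
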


The proof uses the following variational identity, the chordal analogue of
Lemma~\ref{lem:two-clique-divergence-identity-general}.

\begin{lem}\label{lem:chordal_variational}
If $M(\mathcal R)\neq\emptyset$, then for every $\omega\in M(\mathcal R)$,
\begin{equation}\label{eq:rel_entropy_dec}
{\rm D}(\omega\|T(\mathcal R))+1-\trace(T(\mathcal R))
=
\sum_{C\in\cliques}{\rm S}(\rho_C)
-
\sum_{S\in\sep}\nu(S){\rm S}(\rho_S)
-
{\rm S}(\omega).
\end{equation}
\end{lem}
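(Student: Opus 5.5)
The plan is to mimic the proof of Lemma~\ref{lem:two-clique-divergence-identity-general}, which is a direct computation. Fix $\omega\in M(\mathcal R)$, so $\omega$ has trace one and $\omega_C=\rho_C$ for every $C\in\cliques$. Since every separator $S\in\sep$ is contained in some clique $C\supseteq S$, Lemma~\ref{lem:iterated-partial-trace} gives $\omega_S=\trace_{C\setminus S}(\rho_C)=\rho_S$. Pairwise consistency of $\mathcal R$ ensures that $\rho_S$ is well defined, i.e.\ independent of the chosen clique. Thus all the operators appearing in the exponent of \eqref{eq:logT-decomposable} are marginals of $\omega$.

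Next expand the divergence \eqref{eq:divergence} using $\trace(\omega)=1$:
$$
{\rm D}(\omega\|T(\mathcal R))=-{\rm S}(\omega)-\trace\bigl(\omega\log T(\mathcal R)\bigr)-1+\trace(T(\mathcal R)).
$$
By construction $\log T(\mathcal R)=\sum_{C}\log\rho_C-\sum_S\nu(S)\log\rho_S$, where each term is understood as embedded via $\otimes I$ on the complementary factors. The defining property \eqref{eq:partial-trace-def} of the partial trace gives
$$
\trace\bigl(\omega(\log\rho_C\otimes I_{V\setminus C})\bigr)=\trace(\omega_C\log\rho_C)=-{\rm S}(\rho_C),
$$
and similarly $\trace\bigl(\omega(\log\rho_S\otimes I)\bigr)=-{\rm S}(\rho_S)$. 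Substituting these, $-\trace(\omega\log T(\mathcal R))=\sum_C{\rm S}(\rho_C)-\sum_S\nu(S){\rm S}(\rho_S)$. Moving $-1+\trace(T(\mathcal R))$ to the left then yields \eqref{eq:rel_entropy_dec}.

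There is no real obstacle here. The only points requiring care are that the logarithm of the embedded operator $\rho_C\otimes I$ equals $\log\rho_C\otimes I$, which holds by the functional calculus on tensor products, and that $\log$ of the exponential returns the self-adjoint exponent. Both are standard. Note also that the hypothesis $M(\mathcal R)\neq\emptyset$ is used only to make the statement non-vacuous; consistency of the separator marginals follows from pairwise consistency as shown above.
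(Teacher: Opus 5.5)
Your proposal is correct and follows essentially the same route as the paper. Both arguments expand ${\rm D}(\omega\|T(\mathcal R))$ using $\trace(\omega)=1$, split $\trace(\omega\log T(\mathcal R))$ into clique and separator terms via the defining property of the partial trace, and then use $\omega_C=\rho_C$ and $\omega_S=\rho_S$; your appeal to Lemma~\ref{lem:iterated-partial-trace} simply makes explicit the step the paper attributes to consistency.
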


\begin{proof}
Let $\omega\in M(\mathcal R)$. Using \eqref{eq:logT-decomposable}, we get
\begin{align*}
{\rm D}(\omega\|T(\mathcal R))
&=
\trace(\omega\log\omega)-\trace(\omega\log T(\mathcal R))
+\trace(T(\mathcal R))-1 \\
&=
\trace(\omega\log\omega)
-\sum_{C\in\cliques}\trace(\omega_C\log\rho_C)
+\sum_{S\in\sep}\nu(S)\trace(\omega_S\log\rho_S)
+\trace(T(\mathcal R))-1.
\end{align*}
Since $\omega\in M(\mathcal R)$, we have $\omega_C=\rho_C$ for every
$C\in\cliques$, and by consistency also $\omega_S=\rho_S$ for every
$S\in\sep$. Therefore
$$
{\rm D}(\omega\|T(\mathcal R))+1-\trace(T(\mathcal R))
=
-{\rm S}(\omega)
+
\sum_{C\in\cliques}{\rm S}(\rho_C)
-
\sum_{S\in\sep}\nu(S){\rm S}(\rho_S),
$$
which is \eqref{eq:rel_entropy_dec}.
\end{proof}

\begin{proof}[Proof of Theorem~\ref{thm:chordal-trace-criterion}]
Set $c=\trace(T(\mathcal R))$ and $\sigma=T(\mathcal R)/c$. Then $\sigma$ is
a density operator and $\log T(\mathcal R)=(\log c)I+\log\sigma$. Multiplying
by $\sigma$ and taking the trace gives
$$
\log c
=
\trace(\sigma\log T(\mathcal R))+{\rm S}(\sigma).
$$
Using \eqref{eq:logT-decomposable}, we obtain
$$
\log c
=
\sum_{C\in\cliques}\trace(\sigma_C\log\rho_C)
-
\sum_{S\in\sep}\nu(S)\trace(\sigma_S\log\rho_S)
+
{\rm S}(\sigma).
$$
Since $\trace(\tau\log\eta)=-{\rm D}(\tau\|\eta)-{\rm S}(\tau)$ for density operators, this
becomes
\begin{align}
\log c
&=
-\Bigl[
\sum_{C\in\cliques}{\rm D}(\sigma_C\|\rho_C)
-
\sum_{S\in\sep}\nu(S){\rm D}(\sigma_S\|\rho_S)
\Bigr] \nonumber\\
&\qquad
+
\Bigl[
{\rm S}(\sigma)-\sum_{C\in\cliques}{\rm S}(\sigma_C)
+\sum_{S\in\sep}\nu(S){\rm S}(\sigma_S)
\Bigr].
\label{eq:trace-chordal-proof}
\end{align}
By Proposition~\ref{prop:ssa-chordal}, the second bracket is nonpositive. We
claim that the first bracket is nonnegative.

The claim is proved by induction on the number of cliques. If there is only one
clique, it is immediate. For the induction step, let $C_0$ be a leaf clique in
a junction tree and let $S_0$ be the unique separator connecting $C_0$ to the
rest of the tree. Removing $C_0$ gives a smaller chordal graph with clique set
$\cliques'$ and separator multiplicities $\nu'$. Then
\begin{align*}
&\sum_{C\in\cliques}{\rm D}(\sigma_C\|\rho_C)
-
\sum_{S\in\sep}\nu(S){\rm D}(\sigma_S\|\rho_S) \\
&\qquad =
\bigl({\rm D}(\sigma_{C_0}\|\rho_{C_0})-{\rm D}(\sigma_{S_0}\|\rho_{S_0})\bigr)
+
\sum_{C\in\cliques'}{\rm D}(\sigma_C\|\rho_C)
-
\sum_{S\in\sep'}\nu'(S){\rm D}(\sigma_S\|\rho_S).
\end{align*}
The first term is nonnegative by monotonicity of relative entropy under partial
trace, and the second is nonnegative by the induction hypothesis. This proves
the claim. Hence $\log c\le 0$, and therefore $\trace(T(\mathcal R))\le 1$.

We now prove the equivalences. Suppose first that $\trace(T(\mathcal R))=1$.
Then $c=1$ and $\sigma=T(\mathcal R)$. In \eqref{eq:trace-chordal-proof}, the
left-hand side is zero, while both terms on the right are nonpositive. Hence
both brackets vanish. In particular,
$$
\sum_{C\in\cliques}{\rm D}(\sigma_C\|\rho_C)
-
\sum_{S\in\sep}\nu(S){\rm D}(\sigma_S\|\rho_S)
=
0.
$$
We now argue that this forces $\sigma_C=\rho_C$ for every $C\in\cliques$. The proof
is again by induction on the number of cliques. If there is only one clique,
the displayed equality gives ${\rm D}(\sigma_C\|\rho_C)=0$, hence
$\sigma_C=\rho_C$. For the induction step, use the same leaf clique $C_0$ and
separator $S_0$ as above. The same decomposition writes the vanishing expression
as a sum of two nonnegative terms. Therefore both terms vanish. By the induction
hypothesis applied to the smaller chordal graph, $\sigma_C=\rho_C$ for all
$C\in\cliques'$. In particular, $\sigma_{S_0}=\rho_{S_0}$. Since
$$
{\rm D}(\sigma_{C_0}\|\rho_{C_0})-{\rm D}(\sigma_{S_0}\|\rho_{S_0})=0,
$$
we get ${\rm D}(\sigma_{C_0}\|\rho_{C_0})=0$, and hence
$\sigma_{C_0}=\rho_{C_0}$. Thus $\sigma_C=\rho_C$ for all cliques $C$. Since
$c=1$, this means $T(\mathcal R)\in M(\mathcal R)$. Hence
$(\mathrm{i})\Rightarrow(\mathrm{ii})$. The implication
$(\mathrm{ii})\Rightarrow(\mathrm{i})$ is immediate, since every element of
$M(\mathcal R)$ has trace one.

Next assume $T(\mathcal R)\in M(\mathcal R)$. Lemma~\ref{lem:chordal_variational}
gives
$$
{\rm D}(T(\mathcal R)\|T(\mathcal R))+1-\trace(T(\mathcal R))
=
\sum_{C\in\cliques}{\rm S}(\rho_C)
-
\sum_{S\in\sep}\nu(S){\rm S}(\rho_S)
-
{\rm S}(T(\mathcal R)).
$$
The left-hand side is zero. Hence
$$
{\rm S}(T(\mathcal R))
=
\sum_{C\in\cliques}{\rm S}(\rho_C)
-
\sum_{S\in\sep}\nu(S){\rm S}(\rho_S).
$$
By the equality characterization in Proposition~\ref{prop:ssa-chordal},
$T(\mathcal R)$ is quantum Markov on $\graph$. Thus
$(\mathrm{ii})\Rightarrow(\mathrm{iii})$.

Finally, suppose that $\omega\in M(\mathcal R)$ is quantum Markov on $\graph$.
Then $\omega_C=\rho_C$ for every $C\in\cliques$, and by consistency also
$\omega_S=\rho_S$ for every $S\in\sep$. Since $\omega$ is quantum Markov,
Proposition~\ref{prop:ssa-chordal} gives
$$
{\rm S}(\omega)
=
\sum_{C\in\cliques}{\rm S}(\rho_C)
-
\sum_{S\in\sep}\nu(S){\rm S}(\rho_S).
$$
Applying Lemma~\ref{lem:chordal_variational} gives
$$
{\rm D}(\omega\|T(\mathcal R))+1-\trace(T(\mathcal R))=0.
$$
Both terms are nonnegative: the first by Klein's inequality  and the second by the trace bound proved above.
Hence both terms vanish. In particular, ${\rm D}(\omega\|T(\mathcal R))=0$ and
$\trace(T(\mathcal R))=1$. Thus $\omega=T(\mathcal R)$. This proves
$(\mathrm{iii})\Rightarrow(\mathrm{i})$ and also uniqueness of the Markov
completion.

It remains to show the maximum entropy statement. When the equivalent
conditions hold, $T(\mathcal R)\in M(\mathcal R)$ and
$$
\log T(\mathcal R)
=
\sum_{C\in\cliques}\log\rho_C
-
\sum_{S\in\sep}\nu(S)\log\rho_S.
$$
Thus $\log T(\mathcal R)$ lies in the linear span of the local clique and
separator operator spaces. By Theorem~\ref{thm:maxent_characterization},
$T(\mathcal R)$ is the unique maximum entropy element of $M(\mathcal R)$.

Finally, the entropy formula follows from the identity already proved in the
implication $(\mathrm{ii})\Rightarrow(\mathrm{iii})$.
\end{proof}

The variational identity also identifies the global information
$g{\rm I}(\graph)_\rho$. Recall that $g{\rm I}(\graph)_\rho$ is the difference between the
chordal entropy expression determined by the clique marginals of $\rho$ and the
entropy of $\rho$ itself. The next proposition shows that this gap is the
relative entropy from $\rho$ to the logarithmic candidate, with the same trace
correction as above.

\begin{prop}[Global information as divergence]\label{prop:gI-divergence-identity}
Let $\rho\in\mathbb S_1^+(\hilb_V)$ and
$\mathcal R_\rho=\{\rho_C:C\in\cliques\}$. Then
$$
g{\rm I}(\graph)_\rho
=
{\rm D}(\rho\|T(\mathcal R_\rho))+1-\trace(T(\mathcal R_\rho)).
$$
In particular, if $\trace(T(\mathcal R_\rho))=1$ and
$\rho^\star_\graph=T(\mathcal R_\rho)$, then
$$
g{\rm I}(\graph)_\rho
=
{\rm D}(\rho\|\rho^\star_\graph)
=
{\rm S}(\rho^\star_\graph)-{\rm S}(\rho).
$$
\end{prop}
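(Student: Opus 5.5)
The identity follows from Lemma~\ref{lem:chordal_variational}, applied with $\mathcal R=\mathcal R_\rho$ and $\omega=\rho$. The family $\mathcal R_\rho=\{\rho_C:C\in\cliques\}$ consists of the clique reductions of the strictly positive state $\rho$. By Lemma~\ref{lem:iterated-partial-trace} it is pairwise consistent, and every reduction $\rho_C$ and $\rho_S$ is strictly positive, since a partial trace of a positive definite operator is positive definite. In particular $\rho\in M(\mathcal R_\rho)$, so $M(\mathcal R_\rho)\neq\emptyset$, and the lemma applies with $\omega=\rho$. It yields
$$
{\rm D}(\rho\|T(\mathcal R_\rho))+1-\trace(T(\mathcal R_\rho))
=\sum_{C\in\cliques}{\rm S}(\rho_C)-\sum_{S\in\sep}\nu(S){\rm S}(\rho_S)-{\rm S}(\rho).
$$
By \eqref{eq:gI}, the right-hand side is exactly $g{\rm I}(\graph)_\rho$. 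To be safe, I would briefly recheck the lemma's computation in this setting. It expands $\trace(\rho\log T)$ using the defining property \eqref{eq:partial-trace-def} of the partial trace, applied term by term to the clique and separator logarithms, and this uses only $\omega_C=\rho_C$ and $\omega_S=\rho_S$, which hold trivially here.

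For the second statement, assume $\trace(T(\mathcal R_\rho))=1$ and put $\rho^\star_\graph=T(\mathcal R_\rho)$. The trace correction $1-\trace(T(\mathcal R_\rho))$ vanishes, so $g{\rm I}(\graph)_\rho={\rm D}(\rho\|\rho^\star_\graph)$. For the entropy form, apply Theorem~\ref{thm:chordal-trace-criterion} to the family $\mathcal R_\rho$. Condition (i) holds, so $\rho^\star_\graph\in M(\mathcal R_\rho)$ and
$$
{\rm S}(\rho^\star_\graph)=\sum_{C\in\cliques}{\rm S}(\rho_C)-\sum_{S\in\sep}\nu(S){\rm S}(\rho_S).
$$
Substituting this into \eqref{eq:gI} gives $g{\rm I}(\graph)_\rho={\rm S}(\rho^\star_\graph)-{\rm S}(\rho)$.

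There is no real obstacle. The only point needing care is to check that the hypotheses of Lemma~\ref{lem:chordal_variational} and Theorem~\ref{thm:chordal-trace-criterion} hold for $\mathcal R_\rho$, namely consistency, strict positivity, and nonemptiness of $M(\mathcal R_\rho)$. All three follow from the fact that the data are the reductions of a single strictly positive $\rho$.
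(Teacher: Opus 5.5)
Your proposal is correct and follows essentially the same route as the paper: you apply Lemma~\ref{lem:chordal_variational} with $\mathcal R=\mathcal R_\rho$ and $\omega=\rho$ to get the first identity. For the second identity you use the vanishing trace correction together with the entropy formula of Theorem~\ref{thm:chordal-trace-criterion}; your explicit check that $\mathcal R_\rho$ is consistent and strictly positive, with $\rho\in M(\mathcal R_\rho)$, only spells out hypotheses the paper leaves implicit.
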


\begin{proof}
Apply Lemma~\ref{lem:chordal_variational} with
$\mathcal R=\mathcal R_\rho$ and $\omega=\rho$. Since $\rho$ has clique
marginals $\mathcal R_\rho$, the right-hand side of
\eqref{eq:rel_entropy_dec} is precisely $g{\rm I}(\graph)_\rho$. This gives the
first identity. If $\trace(T(\mathcal R_\rho))=1$, then
$\rho^\star_\graph=T(\mathcal R_\rho)$ is a density operator and the trace
correction vanishes. The identity
$g{\rm I}(\graph)_\rho={\rm S}(\rho^\star_\graph)-{\rm S}(\rho)$ follows from the entropy formula
in Theorem~\ref{thm:chordal-trace-criterion}.
\end{proof}

When the trace-one condition holds, the global information has the same form as
in the classical chordal case: it is the relative entropy from the state to the
canonical completion determined by its clique marginals. When the trace is
strictly smaller than one, the same formula retains a correction term recording
the failure of the logarithmic candidate to be normalized.

The examples in Section~\ref{sec:examples} show that this correction term is not
only formal. In Example~\ref{ex:basic-qubit}, the prescribed two-body marginals
are strictly feasible when $\varepsilon^2+\delta^2<1$, but
Lemma~\ref{lem:basic-qubit-trace-defect} shows that
$\trace(T(\mathcal R))<1$ whenever $\varepsilon\delta\ne0$. Thus the logarithmic
candidate fails to be normalized exactly in the genuinely noncommuting case.

\subsection{Maximum entropy completions}\label{sec:maxent}

The preceding results identify when the logarithmic candidate
$T(\mathcal R)$ is feasible and Markov in the case when $\mathcal R$ is the set  of cliques of a chordal graph. We now investigate the general
maximum entropy principle behind such logarithmic formulae. This result is not
specific to chordal graphs. For any feasible family of prescribed marginals,
the entropy maximizer is characterized by a dual condition: its logarithm lies
in the linear span of the local operator spaces that define the constraints. We emphasize that  even in the classical situation there is no simple condition for $M(\cR)$ to be non-empty unless $\gen$ is the set of cliques of a chordal graph. So here we describe properties of the entropy maximizer under the condition that a completion exists.

The geometry is the usual primal--dual geometry of relative entropy
\citep{chentsov:68,chentsov:72,csiszar:75}. Density operators form the primal affine space, while logarithms of density
operators give the corresponding dual coordinates. Thus
$$
\langle \rho-\sigma,\log\sigma-\log\tau\rangle=\trace((\rho-\sigma)(\log\sigma-\log\tau))
$$
is the natural pairing between a primal displacement $\rho-\sigma$ and a dual
displacement $\log\sigma-\log\tau$. The vanishing of this pairing is the
orthogonality condition behind the Pythagorean identity for relative entropy.

\begin{lem}[Three-point identity]\label{lem:three-point-identity}
Let $\rho,\sigma,\tau\in\mathbb S_1^+(\hilb_V)$. Then
$$
{\rm D}(\rho\|\tau)
=
{\rm D}(\rho\|\sigma)+{\rm D}(\sigma\|\tau)
+
\langle \rho-\sigma,\log\sigma-\log\tau\rangle.
$$
Equivalently, if $\rho_\alpha=\alpha\rho+(1-\alpha)\sigma$, then
$$
\left.\frac{d}{d\alpha}{\rm D}(\rho_\alpha\|\tau)\right|_{\alpha=0}
=
{\rm D}(\rho\|\tau)-{\rm D}(\rho\|\sigma)-{\rm D}(\sigma\|\tau).
$$
In particular, if
$\langle \rho-\hat\sigma,\log\hat\sigma-\log\tau\rangle=0$, then we have a Pythagorean identity
$$
{\rm D}(\rho\|\tau)={\rm D}(\rho\|\hat\sigma)+{\rm D}(\hat\sigma\|\tau).
$$
\end{lem}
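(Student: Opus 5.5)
The plan is to prove the three-point identity by direct expansion. All three operators are density operators, so the trace-correction terms in \eqref{eq:divergence} cancel, and
$$
{\rm D}(X\|Y)=\trace(X\log X)-\trace(X\log Y)
$$
for each of the three divergences involved. No operator inequality is needed; everything follows from bilinearity of $\langle\cdot,\cdot\rangle$ and cancellation of like terms.

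First I will compute the difference
$$
{\rm D}(\rho\|\tau)-{\rm D}(\rho\|\sigma)-{\rm D}(\sigma\|\tau).
$$
The term $\trace(\rho\log\rho)$ appears once with a plus sign and once with a minus sign, so it cancels. What remains is
$$
-\trace(\rho\log\tau)+\trace(\rho\log\sigma)-\trace(\sigma\log\sigma)+\trace(\sigma\log\tau).
$$
This regroups as $\trace\bigl((\rho-\sigma)\log\sigma\bigr)-\trace\bigl((\rho-\sigma)\log\tau\bigr)$, which equals $\langle\rho-\sigma,\log\sigma-\log\tau\rangle$. Here I use that $\log\sigma-\log\tau$ is self-adjoint, so the Hilbert--Schmidt pairing $\trace(MN^*)$ coincides with $\trace(MN)$. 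This gives the first identity.

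For the derivative form, I set $f(\alpha)={\rm D}(\rho_\alpha\|\tau)$. This equals $\trace(\rho_\alpha\log\rho_\alpha)-\trace(\rho_\alpha\log\tau)$, and $\rho_\alpha$ is strictly positive for $\alpha$ near $0$.
\begin{itemize}
\item The second term is affine in $\alpha$, with derivative $-\trace((\rho-\sigma)\log\tau)$.
\item For the first term I will use the standard trace-functional derivative $\frac{d}{d\alpha}\trace\,\phi(\rho_\alpha)=\trace\bigl(\phi'(\rho_\alpha)\dot\rho_\alpha\bigr)$ with $\phi(x)=x\log x$. At $\alpha=0$ this gives $\trace\bigl((\rho-\sigma)(\log\sigma+I)\bigr)$.
\end{itemize}
The identity term contributes $\trace(\rho-\sigma)=0$. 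So $f'(0)=\langle\rho-\sigma,\log\sigma-\log\tau\rangle$, and by the first identity this equals ${\rm D}(\rho\|\tau)-{\rm D}(\rho\|\sigma)-{\rm D}(\sigma\|\tau)$.

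The main point needing care is justifying this trace derivative in the noncommutative setting, since $\log\rho_\alpha$ does not commute with $\dot\rho_\alpha$. I would handle it by the Fréchet derivative formula for $\log$ (an integral representation), combined with cyclicity of the trace. Alternatively, I would diagonalize and use the Daleckii--Krein formula: under the trace only the diagonal of the divided-difference matrix survives, and that diagonal is $\phi'$ at the eigenvalues.

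The Pythagorean statement is immediate: substitute $\hat\sigma$ for $\sigma$ in the first identity and note that the pairing term vanishes.
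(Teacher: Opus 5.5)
Your proposal is correct and follows essentially the same route as the paper: direct expansion of the three divergences to isolate the pairing, the directional derivative $\trace\{H(\log X+I)\}$ of $\trace(X\log X)$ together with $\trace(\rho-\sigma)=0$ for the derivative form, and the Pythagorean identity as the case of vanishing pairing. Your remark on justifying the noncommutative trace derivative (via the Fréchet derivative of $\log$ and cyclicity, or Daleckii--Krein) fills in a step the paper simply cites as standard.
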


\begin{proof}
Expanding the three relative entropies gives
\begin{align*}
&{\rm D}(\rho\|\tau)-{\rm D}(\rho\|\sigma)-{\rm D}(\sigma\|\tau) \\
&\qquad =
\trace(\rho\log\sigma)-\trace(\rho\log\tau)
-\trace(\sigma\log\sigma)+\trace(\sigma\log\tau)  \\
&\qquad =
\langle \rho-\sigma,\log\sigma-\log\tau\rangle.
\end{align*}
This proves the first identity. For the derivative formula, write
$\rho_\alpha=\sigma+\alpha(\rho-\sigma)$. We use the standard directional
derivative
$$
\left.\frac{d}{dt}\trace\{(X+tH)\log(X+tH)\}\right|_{t=0}
=
\trace\{H(\log X+I)\},
$$
valid for strictly positive $X$ and self-adjoint $H$. Applying this with
$X=\sigma$ and $H=\rho-\sigma$, we get
\begin{align*}
\left.\frac{d}{d\alpha}{\rm D}(\rho_\alpha\|\tau)\right|_{\alpha=0}
&=
\trace\{(\rho-\sigma)(\log\sigma+I-\log\tau)\} =
\langle \rho-\sigma,\log\sigma-\log\tau\rangle,
\end{align*}
because $\trace(\rho-\sigma)=0$. The derivative formula now follows from the
first identity. The Pythagorean identity is the special case where the final
pairing vanishes.
\end{proof}

The next theorem applies this geometry to an arbitrary family of prescribed
marginals. The feasible set $M(\mathcal R)$ is an affine subset of the primal
space of density operators. Its maximum entropy element is characterized by
the dual condition that its logarithm is orthogonal to all feasible directions,
or equivalently lies in the span of the local constraint operators.

\begin{thm}\label{thm:maxent_characterization}
Let $\gen$ be a family of subsets of $V$ and 
$\mathcal R=\{\rho_A: A\in\gen\}$ a feasible family of prescribed
marginals. For $\hat\rho\in M(\mathcal R)$, the following are equivalent:
\begin{enumerate}[(i)]
\item $\hat\rho$ is the unique maximum entropy element of $M(\mathcal R)$;
\item $\hat\rho$ has a log-linear expansion of the form
$$
\log\hat\rho
=
\lambda I+\sum_{A\in\gen}(M_A\otimes I_{V\setminus A})
$$
for some $\lambda\in\R$ and self-adjoint operators
$M_A\in\mathbb S(\hilb_A)$.
\end{enumerate}
\end{thm}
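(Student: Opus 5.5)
The plan is to use the three-point identity (Lemma~\ref{lem:three-point-identity}) together with the fact that $M(\mathcal R)$ is the intersection of the density operators with an affine subspace. Write
$$
\mathcal L=\operatorname{span}\{M_A\otimes I_{V\setminus A}: A\in\gen,\ M_A\in\mathbb S(\hilb_A)\}+\R I .
$$
A self-adjoint $H$ with $\trace(H)=0$ is a feasible direction, meaning $\hat\rho+tH\in M(\mathcal R)$ for small $t$, exactly when $\trace_{V\setminus A}(H)=0$ for all $A\in\gen$. Here we use that $\hat\rho\succ0$, so small perturbations stay positive. By the defining property \eqref{eq:partial-trace-def}, these $H$ form precisely the Hilbert--Schmidt orthogonal complement $\mathcal L^\perp$ inside $\mathbb S(\hilb_V)$. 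Since $\R I\subseteq\mathcal L$, the trace-zero condition is automatic. Hence $\mathcal L=(\mathcal L^\perp)^\perp$, and condition (ii) is equivalent to
$$
\trace\bigl(H\log\hat\rho\bigr)=0 \quad\text{for all feasible directions } H .
$$

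For (ii)$\Rightarrow$(i), take any $\omega\in M(\mathcal R)$. Then $H=\omega-\hat\rho$ lies in $\mathcal L^\perp$, so $\trace(\omega\log\hat\rho)=\trace(\hat\rho\log\hat\rho)$. Therefore
$$
{\rm S}(\hat\rho)-{\rm S}(\omega)=\trace(\omega\log\omega)-\trace(\omega\log\hat\rho)={\rm D}(\omega\|\hat\rho)\ge0 ,
$$
with equality only if $\omega=\hat\rho$ by Klein's inequality. This gives maximality and uniqueness at once. Equivalently, this is the Pythagorean case of Lemma~\ref{lem:three-point-identity}, taking $\tau$ to be the maximally mixed state.

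For (i)$\Rightarrow$(ii), suppose $\hat\rho$ maximizes ${\rm S}$ on $M(\mathcal R)$ and let $H\in\mathcal L^\perp$. Since $\hat\rho\succ0$, the path $\hat\rho+tH$ stays in $M(\mathcal R)$ for $|t|$ small, so the function $t\mapsto {\rm S}(\hat\rho+tH)$ has an interior maximum at $t=0$. The directional derivative formula used in the proof of Lemma~\ref{lem:three-point-identity} gives
$$
0=\frac{d}{dt}{\rm S}(\hat\rho+tH)\Big|_{t=0}=-\trace\{H(\log\hat\rho+I)\}=-\trace(H\log\hat\rho).
$$
Thus $\log\hat\rho\in(\mathcal L^\perp)^\perp=\mathcal L$, which is (ii) after absorbing constants into $\lambda$.

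The main point to handle carefully is that "maximum entropy element" is taken over $M(\mathcal R)$, which the paper defines with strictly positive states. Because we are given $\hat\rho\in M(\mathcal R)$, it is strictly positive, and this is exactly what makes two-sided perturbations legitimate. The remaining step is the finite-dimensional duality $\mathcal L=(\mathcal L^\perp)^\perp$. Its only real content is identifying $\mathcal L^\perp$ with the kernel of the family of partial-trace maps, and that follows directly from \eqref{eq:partial-trace-def} applied to self-adjoint $M_A$.
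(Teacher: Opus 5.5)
Your proof is correct and follows essentially the same route as the paper. You identify the feasible directions with the orthogonal complement of $\R I+\operatorname{span}\{M_A\otimes I_{V\setminus A}\}$, obtain (ii)$\Rightarrow$(i) from ${\rm S}(\hat\rho)-{\rm S}(\omega)={\rm D}(\omega\|\hat\rho)$, and obtain (i)$\Rightarrow$(ii) from the first-order optimality condition. The only differences are cosmetic: you phrase the duality as $\mathcal L=(\mathcal L^\perp)^\perp$ where the paper uses $\ker(\Phi)^\perp=\operatorname{im}(\Phi^*)$, and you compute the entropy gap directly rather than through the Pythagorean identity with $\tau=I/d$.
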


\begin{proof}
Consider the space of feasible directions that keep the prescribed marginals
fixed,
$$
\mathcal V
=
\{H\in\mathbb S(\hilb_V):\trace(H)=0,\ H_A=0\text{ for all }A\in\gen\}.
$$
We first identify its orthogonal complement under the trace pairing. We claim
that
$$
\mathcal V^\perp
=
\left\{
\lambda I+\sum_{A\in\gen}(M_A\otimes I_{V\setminus A}):
\lambda\in\R,\ M_A\in\mathbb S(\hilb_A)
\right\}.
$$
Every operator on the right is orthogonal to every $H\in\mathcal V$, because
$$
\langle
H,\lambda I+\sum_{A\in\gen}(M_A\otimes I_{V\setminus A})
\rangle
=
\lambda\trace(H)+\sum_{A\in\gen}\trace(H_A M_A)
=
0.
$$
Conversely, define the linear map
$$
\Phi:\mathbb S(\hilb_V)\longrightarrow
\R\oplus\bigoplus_{A\in\gen}\mathbb S(\hilb_A),
\qquad
\Phi(H)=(\trace(H),(H_A)_{A\in\gen}).
$$
Then $\mathcal V=\ker(\Phi)$. The elementary finite-dimensional identity
$(\ker(\Phi))^\perp=\operatorname{im}(\Phi^*)$ gives the reverse inclusion. Indeed,
the adjoint map satisfies
$$
\Phi^*(\lambda,(M_A)_{A\in\gen})
=
\lambda I+\sum_{A\in\gen}(M_A\otimes I_{V\setminus A}),
$$
since
$$
\left\langle H,\Phi^*(\lambda,(M_A)_{A\in\gen})\right\rangle
=
\lambda\trace(H)+\sum_{A\in\gen}\trace(H_A M_A).
$$
This proves the displayed formula for $\mathcal V^\perp$.

Suppose now that $\hat\rho\in M(\mathcal R)$ has the log-linear form in (ii). By the formula
for $\mathcal V^\perp$, this means $\log\hat\rho\in\mathcal V^\perp$. Let
$\omega\in M(\mathcal R)$. Then $\omega-\hat\rho\in\mathcal V$, and hence
$\langle \omega-\hat\rho,\log\hat\rho\rangle=0$.
Since also $\trace(\omega-\hat\rho)=0$, this is equivalent to
$$
\langle \omega-\hat\rho,\log\hat\rho-\log(I/d)\rangle=0,
\qquad d=\dim\hilb_V.
$$
The Pythagorean identity in Lemma~\ref{lem:three-point-identity}, with
$\rho=\omega$, $\sigma=\hat\rho$, and $\tau=I/d$, gives
$$
{\rm D}(\omega\|I/d)
=
{\rm D}(\omega\|\hat\rho)+{\rm D}(\hat\rho\|I/d).
$$
Since ${\rm D}(\eta\|I/d)=\log d-{\rm S}(\eta)$, this is equivalent to
${\rm S}(\hat\rho)-{\rm S}(\omega)={\rm D}(\omega\|\hat\rho)\ge 0$.
Equality holds if and only if $\omega=\hat\rho$. Thus $\hat\rho$ is the unique
maximum entropy element of $M(\mathcal R)$; proving (ii)$\Rightarrow$(i).

Conversely, suppose that $\hat\rho$ is the maximum entropy element of
$M(\mathcal R)$. For every $H\in\mathcal V$ and all sufficiently small real
$\epsilon$, the operator $\hat\rho+\epsilon H$ remains in the same affine
marginal space. The first-order optimality condition gives
$$
0
=
\left.\frac{d}{d\epsilon}{\rm S}(\hat\rho+\epsilon H)\right|_{\epsilon=0}
=
-\trace\{H(\log\hat\rho+I)\}.
$$
Since $\trace(H)=0$, this gives
$
\langle H,\log\hat\rho\rangle=0$ for all $H\in\mathcal V$.
Hence $\log\hat\rho\in\mathcal V^\perp$, which is exactly the log-linear form
in (ii).
\end{proof}

The proof gives the useful identity
$$
{\rm S}(\hat\rho)-{\rm S}(\omega)={\rm D}(\omega\|\hat\rho),
\qquad \omega\in M(\mathcal R).
$$
Thus every other element in $M(\mathcal R)$ has smaller entropy than $\hat \rho$, and the entropy gap is exactly
the relative entropy from $\omega$ to the maximum entropy completion
$\hat\rho$. This also gives uniqueness, since the gap is zero only when
$\omega=\hat\rho$.

In the chordal setting, the log-linear form in
Theorem~\ref{thm:maxent_characterization} specializes to the operator
$T(\mathcal R)$ studied above. Hence the trace-one condition in
Theorem~\ref{thm:chordal-trace-criterion} identifies exactly when this formal
logarithmic candidate is normalized, feasible, Markov, and the
maximum entropy completion.

\begin{rem}
For classical positive distributions on a chordal graph, every pairwise
consistent family of clique marginals has a Markov extension, given by the
junction-tree formula, and this extension is also the maximum entropy
completion. The quantum situation is different: the maximum entropy completion
may exist and have log-linear form without being quantum Markov. This failure is illustrated in Example~\ref{ex:basic-qubit} and
Lemma~\ref{lem:basic-qubit-no-markov}; see also the
remarks after Theorem~4.7 in \cite{leifer_poulin_2008}.
\end{rem}

\section{Examples and counterexamples}\label{sec:examples}

The preceding results show that the quantum marginal problem is controlled by
phenomena with no classical analogue. We now illustrate these obstructions
using elementary Pauli expansions. The first examples are on three qubits and
separate local consistency, feasibility, Markov feasibility, and maximum
entropy. The final example uses the chordal graph in Figure~\ref{fig:chordal}
and illustrates the chordal trace criterion beyond the three-chain case.
\color{black}

The examples separate several properties that coincide, or nearly coincide, in
the classical decomposable case:
\begin{enumerate}[(i)]
\item local consistency of overlapping marginals;
\item feasibility, that is, existence of some global completion;
\item Markov feasibility, that is, existence of a quantum conditionally
independent completion;
\item maximum entropy completion.
\end{enumerate}
They illustrate that local consistency need not imply feasibility; feasibility need
not imply Markov feasibility; and the maximum entropy completion need not be
Markov.

\subsection{Pauli preliminaries}

We use the standard Pauli matrices
$$
I=\begin{pmatrix}1&0\\0&1\end{pmatrix},\quad
X=\begin{pmatrix}0&1\\1&0\end{pmatrix},\quad
Y=\begin{pmatrix}0&-i\\ i&0\end{pmatrix},\quad
Z=\begin{pmatrix}1&0\\0&-1\end{pmatrix}.
$$
They are Hermitian, satisfy $X^2=Y^2=Z^2=I$, and obey
$$
XY=-YX=iZ,\qquad
YZ=-ZY=iX,\qquad
ZX=-XZ=iY.
$$
With respect to the trace pairing $\langle A,B\rangle=\trace(AB)$ on
self-adjoint matrices, the four matrices $I,X,Y,Z$ form an orthogonal basis (over $\R$) of
$\mathbb S(\C^2)$.

On $(\C^2)^{\otimes n}$, a \emph{Pauli word} is a tensor product
$$
W=P_1\otimes\cdots\otimes P_n,
\qquad
P_j\in\{I,X,Y,Z\}.
$$
Each Pauli word is Hermitian and satisfies $W^2=I$. The $4^n$ Pauli words form
an orthogonal basis of $\mathbb S((\C^2)^{\otimes n})$.

We shall repeatedly use two elementary facts. First, if
$$
\rho=\frac1{2^n}\left(I+\sum_W a_WW\right),
$$
then tracing out a tensor factor annihilates every Pauli word acting nontrivially on
that factor. Second, pairwise anticommuting Pauli words behave like orthogonal
Euclidean directions.

\begin{lem}\label{lem:Pauli-ball}
Let $W_1,\dots,W_m$ be pairwise anticommuting Pauli words. Set
$$
T=\sum_{i=1}^m a_iW_i,
\qquad
r=\left(\sum_i a_i^2\right)^{1/2},
\qquad
\rho=\frac1{2^n}(I+T).
$$
Then
$$
T^2=r^2I.
$$
Consequently, $\rho\succ0$ if and only if $r<1$.
\end{lem}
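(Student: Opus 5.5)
We prove $T^2=r^2I$ by expanding the square and using the anticommutation relations; the positivity criterion then follows from the spectral theorem applied to the self-adjoint operator $T$. Expanding,
$$
T^2=\sum_{i=1}^m a_i^2W_i^2+\sum_{i<j}a_ia_j\bigl(W_iW_j+W_jW_i\bigr).
$$
Each Pauli word satisfies $W_i^2=I$, since it is a tensor product of Pauli matrices, each of which squares to $I$. For $i\neq j$ we have $W_iW_j+W_jW_i=0$ by pairwise anticommutation. Hence $T^2=\bigl(\sum_i a_i^2\bigr)I=r^2I$.

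For the positivity claim, $T$ is a real linear combination of Hermitian words, so it is self-adjoint and diagonalizable with real eigenvalues. Let $\mu$ be an eigenvalue with eigenvector $v$. Then $T^2v=\mu^2v=r^2v$, so $\mu\in\{-r,r\}$. The eigenvalues of $\rho=2^{-n}(I+T)$ are therefore contained in $\{2^{-n}(1-r),\,2^{-n}(1+r)\}$.

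If $r<1$, both values are positive, so $\rho\succ0$. For the converse, note that $m\ge1$ is required for $r>0$. In that case the eigenvalue $-r$ actually occurs. Indeed, for $m\ge1$ the operator $T$ is traceless, because every non-identity Pauli word has trace zero. If only $+r$ occurred we would have $T=rI$ with trace $2^nr\neq0$, a contradiction. So $\rho$ has the eigenvalue $2^{-n}(1-r)$, and $\rho\succ0$ forces $r<1$.

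The degenerate case $m=0$ gives $r=0$ and $\rho=2^{-n}I\succ0$, which is consistent with the claim. There is a further subtle point. The words must be non-identity for the trace argument to work, and this is implicit: the identity word commutes with everything, so it cannot anticommute with any other word. If $m=1$ and $W_1=I$, the anticommutation hypothesis is vacuous, but this is excluded by the paper's intended usage, where $\rho$ is written as $2^{-n}(I+\sum_W a_WW)$ with $W$ ranging over non-identity words. I would state this assumption explicitly. This is the only step that requires care; the rest is routine.
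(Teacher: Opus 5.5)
Your proof is correct and follows the same route as the paper: expand $T^2$, use anticommutation to cancel the cross terms, and conclude that the spectrum of $T$ lies in $\{\pm r\}$. Your tracelessness argument shows that the eigenvalue $-r$ actually occurs, a step the paper's phrase ``the eigenvalues of $T$ are $\pm r$'' leaves implicit. You are also right to exclude the degenerate case $m=1$, $W_1=I$, where the ``only if'' direction genuinely fails (e.g.\ $a_1=1$ gives $r=1$ but $\rho\succ0$).
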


\begin{proof}
Anticommutation cancels all cross-terms in $T^2$. Thus $T^2=r^2I$, so the
eigenvalues of $T$ are $\pm r$. Hence the eigenvalues of $\rho$ are
$2^{-n}(1\pm r)$.
\end{proof}

\begin{lem}\label{lem:Pauli-spectral}
Under the assumptions of Lemma~\ref{lem:Pauli-ball}, suppose $r>0$. Then
$$
\Pi_\pm=\frac12\left(I\pm\frac1rT\right)
$$
are the spectral projections of $T$, and
$$
\rho
=
\frac{1+r}{2^n}\Pi_+
+
\frac{1-r}{2^n}\Pi_-.
$$
If moreover $r<1$, then
\begin{equation}\label{eq:Pauli-log-formula}
\log\rho
=
\log\left(\frac{\sqrt{1-r^2}}{2^n}\right)I
+
\frac{\operatorname{arctanh}(r)}{r}\,T.
\end{equation}
\end{lem}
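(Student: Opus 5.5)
The plan is to work directly from Lemma~\ref{lem:Pauli-ball}, which gives $T^2=r^2I$. First I show that $\Pi_\pm$ are the spectral projections of $T$.

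\begin{itemize}
\item \emph{Projections.} Self-adjointness is clear because $T$ is Hermitian. Idempotence follows from the expansion
$$\Pi_\pm^2=\tfrac14\left(I\pm\tfrac2rT+\tfrac1{r^2}T^2\right)=\tfrac14\left(2I\pm\tfrac2rT\right)=\Pi_\pm.$$
\item \emph{Completeness and orthogonality.} We have $\Pi_++\Pi_-=I$ by inspection. Also $\Pi_+\Pi_-=\tfrac14(I-T^2/r^2)=0$.
\item \emph{Eigenvalues.} Since $T\Pi_\pm=\tfrac12(T\pm r I)=\pm r\,\Pi_\pm$, the projections $\Pi_\pm$ project onto the eigenspaces of $T$ for the eigenvalues $\pm r$. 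These are the only possible eigenvalues, and $T=r\Pi_+-r\Pi_-$.
\end{itemize}

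Next I read off the spectral decomposition of $\rho$. Write $I=\Pi_++\Pi_-$ and $T=r(\Pi_+-\Pi_-)$. Substituting into $\rho=2^{-n}(I+T)$ gives
$$\rho=\frac{1+r}{2^n}\Pi_++\frac{1-r}{2^n}\Pi_-.$$

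For the logarithm, assume $r<1$, so both eigenvalues are positive. Functional calculus then gives
$$\log\rho=\log\tfrac{1+r}{2^n}\,\Pi_++\log\tfrac{1-r}{2^n}\,\Pi_-.$$
Substitute $\Pi_\pm=\tfrac12(I\pm T/r)$ and collect terms.
\begin{itemize}
\item The coefficient of $I$ is $\tfrac12\bigl(\log(1+r)+\log(1-r)\bigr)-\log 2^n=\log\bigl(\sqrt{1-r^2}/2^n\bigr)$.
\item The coefficient of $T$ is $\tfrac1{2r}\bigl(\log(1+r)-\log(1-r)\bigr)=\operatorname{arctanh}(r)/r$.
\end{itemize}
Together these give \eqref{eq:Pauli-log-formula}.

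There is no real obstacle here. The only points needing care are that $r>0$, so that $\Pi_\pm$ is well defined, and that $r<1$, so that both logarithms are finite. The case where one eigenspace might be trivial is harmless: $\operatorname{tr}\Pi_\pm=2^{n-1}$ because Pauli words are traceless, and in any case the formula holds as an operator identity regardless.
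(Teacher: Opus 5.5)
Your proposal is correct and follows the paper's approach. The paper's proof is the one-line remark that everything follows from $T^2=r^2I$ and $\operatorname{arctanh}(r)=\tfrac12\log\tfrac{1+r}{1-r}$, and you have written out the projection identities, the spectral decomposition of $\rho$, and the functional-calculus computation of $\log\rho$ explicitly. One minor caveat: your aside that $\trace\Pi_\pm=2^{n-1}$ needs the $W_i$ to be non-identity words, which is automatic when $m\ge 2$, but as you note the operator identities hold regardless.
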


\begin{proof}
Everything follows immediately with some algebra from $T^2=r^2I$ and the fact that $\operatorname{arctanh}(r)=\tfrac12\log(\tfrac{1+r}{1-r})$.
\end{proof}

\subsection{Local consistency need not imply feasibility}
This fact is well known, but we illustrate the fact with a simple example.
\begin{ex}\label{ex:basic-qubit}
Let $\hilb=(\C^2)^{\otimes3}$ and define
$$
W_1=X\otimes X\otimes I,
\qquad
W_2=I\otimes Z\otimes Z.
$$
The words $W_1$ and $W_2$ anticommute, because they anticommute on the middle
tensor factor. For $|\varepsilon|,|\delta|<1$, prescribe
$$
\rho_{12}=\frac14(I+\varepsilon X\otimes X),
\qquad
\rho_{23}=\frac14(I+\delta Z\otimes Z).
$$
Both marginals are strictly positive. They are also consistent on the overlap,
since $(\rho_{12})_2=(\rho_{23})_2=I/2$.
\end{ex}

\begin{lem}\label{lem:basic-qubit-feasibility}
In Example~\ref{ex:basic-qubit}, the prescribed marginals are feasible if
$\varepsilon^2+\delta^2\le1$, and are not feasible if
$\varepsilon^2+\delta^2>1$.
\end{lem}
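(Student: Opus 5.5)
We need to show: feasible iff $\varepsilon^2+\delta^2\le1$ (with strict positivity issues at the boundary; the statement says feasible if $\le1$, presumably allowing a positive semidefinite completion, not necessarily strictly positive).

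For sufficiency we use an explicit candidate. Take
$$\omega=\tfrac18\bigl(I+\varepsilon W_1+\delta W_2\bigr).$$
Its marginals are the prescribed ones, since partial traces kill every Pauli word acting nontrivially on a traced factor. Indeed, $W_2$ has $Z$ on factor 3, so tracing out factor 3 leaves $\rho_{12}$. Similarly, $W_1$ has $X$ on factor 1, so tracing out factor 1 leaves $\rho_{23}$. Because $W_1$ and $W_2$ anticommute, Lemma~\ref{lem:Pauli-ball} with $r=(\varepsilon^2+\delta^2)^{1/2}$ gives eigenvalues $(1\pm r)/8$. Hence $\omega\succeq0$ when $r\le1$ and $\omega\succ0$ when $r<1$. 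If feasibility is read within $\mathbb S_1^+$, the boundary case $r=1$ only gives a positive semidefinite state; I would note this, and otherwise interpret feasibility as existence of a density operator.

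For necessity, suppose $\omega$ is any density operator with these marginals. Its expectations must match, $\trace(\omega W_1)=\varepsilon$ and $\trace(\omega W_2)=\delta$, since these are determined by the two-body marginals ($\trace(\rho_{12}\,X\otimes X)=\varepsilon$, and likewise for $\delta$). Now for unit $(u,v)$, set $A=uW_1+vW_2$. By anticommutation $A^2=I$, so $\|A\|\le1$, and therefore
$$u\varepsilon+v\delta=\trace(\omega A)\le1.$$
Choosing $(u,v)=(\varepsilon,\delta)/r$ gives $r\le1$. This contradicts $\varepsilon^2+\delta^2>1$.

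The only step needing care is this anticommutation-based uncertainty bound. It reduces directly to Lemma~\ref{lem:Pauli-ball} applied to $A$ with coefficients $(u,v)$, which gives eigenvalues $\pm1$. Everything else is bookkeeping of Pauli partial traces.
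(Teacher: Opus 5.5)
Your proof is correct and follows the paper's: the same explicit completion $\frac18(I+\varepsilon W_1+\delta W_2)$ gives sufficiency, and necessity uses the same operator-norm bound on $\trace(\omega\,\cdot)$ for an anticommuting combination of $W_1,W_2$. You normalize the test operator to $(\varepsilon W_1+\delta W_2)/r$, whereas the paper uses $T=\varepsilon W_1+\delta W_2$ with $\|T\|=r$; this is the same inequality rescaled, and your remark that the boundary case gives only a singular completion matches the paper's comment after the proof.
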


\begin{proof}
If $\varepsilon^2+\delta^2\le1$, then
$$
\sigma=\frac18(I+\varepsilon W_1+\delta W_2)
$$
is positive semidefinite by Lemma~\ref{lem:Pauli-ball}. Its marginals are
$\rho_{12}$ and $\rho_{23}$. Conversely, let $\omega$ be any completion of $\rho_{12},\rho_{23}$. Then
$$
\trace(\omega W_1)=\trace(\trace_3(\omega) X\otimes X)=\tfrac14\trace((I+\varepsilon X\otimes X) X\otimes X)=\varepsilon.$$ Similarly,
$\trace(\omega W_2)=\delta$.
Set
$
T=\varepsilon W_1+\delta W_2.
$
Since $T^2=(\varepsilon^2+\delta^2)I$,
$
\|T\|=(\varepsilon^2+\delta^2)^{1/2}.
$
Hence, using the fact that $T\preceq \|T\| I$ and $\trace(\omega)=1$,
$$
\varepsilon^2+\delta^2
=
\trace(\omega T)
\le
\|T\|
=
(\varepsilon^2+\delta^2)^{1/2},
$$
which is impossible when $\varepsilon^2+\delta^2>1$.
\end{proof}

Thus local consistency does not imply feasibility. Moreover, on the boundary
$\varepsilon^2+\delta^2=1$, every completion is singular.

\subsection{Feasibility need not imply Markov feasibility}
Again, this fact is also well known, but for illustration we provide a simple example that feasibility does
not imply the existence of a quantum Markov completion.
We now stay in the strictly feasible regime
$\varepsilon^2+\delta^2<1$.  Set
$$
K
=
\rho_{12}^{1/2}\rho_2^{-1/2}\rho_{23}^{1/2}.
$$
Since $\rho_2=I/2$,
$$
K=\sqrt2\,\rho_{12}^{1/2}\rho_{23}^{1/2}.
$$
By the characterization of quantum Markov states, a Markov completion would
have to equal both $KK^\ast$ and $K^\ast K$. Thus $K$ must be normal.

\begin{lem}\label{lem:normal-product}
Let $A,B\succ0$. Then $A^{1/2}B^{1/2}$ is normal if and only if $AB=BA$.
\end{lem}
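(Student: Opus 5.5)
Set $N=A^{1/2}B^{1/2}$, so that $N^*=B^{1/2}A^{1/2}$. The direction ``commuting implies normal'' is immediate. If $AB=BA$, then $A^{1/2}$ and $B^{1/2}$ are polynomials in $A$ and $B$ respectively, hence they commute. So $N$ is self-adjoint, equal to $(AB)^{1/2}$, and in particular normal.

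For the converse, assume $NN^*=N^*N$. Written out, this says
$$
A^{1/2}BA^{1/2}=B^{1/2}AB^{1/2}.
$$
I will extract commutation from this identity via the polar decomposition of $N$. Since $N$ is invertible, $N=U|N|$ with $U$ unitary and $|N|=(N^*N)^{1/2}$. When $N$ is normal, $U$ commutes with $|N|$, so $N^*=|N|U^*=U^*|N|$. Using $N^*=B^{1/2}A^{1/2}$, I want to show that $N$ has a positive spectrum, so that $U=I$.

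The cleanest route to this is a similarity argument:
$$
N=A^{1/2}B^{1/2}=A^{1/2}\bigl(B^{1/2}A^{1/2}\bigr)A^{-1/2}.
$$
Also $A^{1/2}B^{1/2}$ is similar to $A^{1/4}B^{1/2}A^{1/4}$, which is positive definite. Hence $N$ is diagonalizable with strictly positive eigenvalues. A normal matrix whose eigenvalues are all real is self-adjoint, and here they are positive, so $N\succ0$. Thus $N=N^*$, that is,
$$
A^{1/2}B^{1/2}=B^{1/2}A^{1/2}.
$$
Commutation of the square roots then gives $AB=A^{1/2}A^{1/2}B^{1/2}B^{1/2}=BA$.

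The only step needing care is the claim that $N$ has positive spectrum, together with the fact that normal plus real spectrum implies Hermitian. Both follow from the similarity $A^{-1/4}NA^{1/4}=A^{1/4}B^{1/2}A^{1/4}$ and from the spectral theorem for normal operators. I expect no further obstacle. In the context of Example~\ref{ex:basic-qubit}, this then shows $K$ is normal iff $\rho_{12}$ and $\rho_{23}$ commute, which fails when $\varepsilon\delta\neq0$ because $X\otimes X\otimes I$ and $I\otimes Z\otimes Z$ anticommute.
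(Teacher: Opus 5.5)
Your proof is correct and follows the paper's argument. The paper also shows that $A^{1/2}B^{1/2}$ is similar to a positive definite matrix (conjugating by $B^{1/4}$ to get $B^{1/4}A^{1/2}B^{1/4}$, where you conjugate by $A^{1/4}$), so normality forces it to be self-adjoint, hence $A^{1/2}B^{1/2}=B^{1/2}A^{1/2}$ and $AB=BA$; your polar-decomposition paragraph is an unneeded detour that can be dropped.
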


\begin{proof}
The operator $A^{1/2}B^{1/2}$ is similar to
$B^{1/4}A^{1/2}B^{1/4}$, which is positive definite. Hence, if
$A^{1/2}B^{1/2}$ is normal, it is unitarily diagonalizable with positive
spectrum (in particular, real), and so it is self-adjoint. Thus
$A^{1/2}B^{1/2}=B^{1/2}A^{1/2}$, equivalently $AB=BA$. The converse is
immediate.
\end{proof}

Applying Lemma~\ref{lem:normal-product} with
$A=\rho_{12}$ and $B=\rho_{23}$, viewed as operators on
$\mathcal H_1\otimes\mathcal H_2\otimes\mathcal H_3$, we conclude that
$K$ is normal if and only if $\rho_{12}$ and $\rho_{23}$ commute. In the present example,
$
\rho_{12}=\frac14(I+\varepsilon W_1)$ and $
\rho_{23}=\frac14(I+\delta W_2)$,
and $W_1W_2=-W_2W_1$. Hence
$$
[\rho_{12},\rho_{23}]
=
\frac{\varepsilon\delta}{16}[W_1,W_2].
$$
Therefore $\rho_{12}$ and $\rho_{23}$ commute if and only if
$\varepsilon\delta=0$.

The following result follows immediately from Proposition~\ref{prop:petz2} and the lemma above.

\begin{lem}\label{lem:basic-qubit-no-markov}
Assume $\varepsilon^2+\delta^2<1$ in
Example~\ref{ex:basic-qubit}. Then the prescribed marginals admit a quantum
Markov completion if and only if $\varepsilon\delta=0$.
\end{lem}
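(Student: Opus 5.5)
The plan is to reduce everything to Proposition~\ref{prop:petz2}, with $A=\{1\}$, $C=\{2\}$, $B=\{3\}$, and then to Lemma~\ref{lem:normal-product}.

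First, the hypotheses of Theorem~\ref{thm:main-trace2} hold. The assumption $\varepsilon^2+\delta^2<1$ in particular gives $|\varepsilon|,|\delta|<1$, so $\rho_{12}$ and $\rho_{23}$ are strictly positive by Lemma~\ref{lem:Pauli-ball}. They are consistent, with $\rho_2=I/2$. By Theorem~\ref{thm:main-trace2}, a quantum Markov completion exists if and only if condition (iii) holds. By Proposition~\ref{prop:petz2}, this happens if and only if $K=\rho_{12}^{1/2}\rho_2^{-1/2}\rho_{23}^{1/2}$ is normal.

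Second, I simplify $K$. Since $\rho_2^{-1/2}=\sqrt2\,I$, we get $K=\sqrt2\,\rho_{12}^{1/2}\rho_{23}^{1/2}$, with both factors embedded in $\hilb_1\otimes\hilb_2\otimes\hilb_3$. The embedded operators $\rho_{12}\otimes I_3$ and $I_1\otimes\rho_{23}$ are positive definite. Their square roots are the embeddings of $\rho_{12}^{1/2}$ and $\rho_{23}^{1/2}$, because the functional calculus commutes with tensoring by the identity. Normality is unaffected by the positive scalar $\sqrt2$. So Lemma~\ref{lem:normal-product} shows that $K$ is normal if and only if $\rho_{12}$ and $\rho_{23}$ commute as embedded operators.

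Third, I compute the commutator. Write $\rho_{12}=\tfrac14(I+\varepsilon W_1)$ and $\rho_{23}=\tfrac14(I+\delta W_2)$. The identity terms drop out, leaving $[\rho_{12},\rho_{23}]=\tfrac{\varepsilon\delta}{16}[W_1,W_2]$. Since $W_1W_2=-W_2W_1$, this equals $\tfrac{\varepsilon\delta}{8}W_1W_2$. The product $W_1W_2$ is a nonzero multiple of a Pauli word, hence invertible and in particular nonzero. So the commutator vanishes exactly when $\varepsilon\delta=0$.

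Chaining the three equivalences proves the claim. As a consistency check, when $\varepsilon\delta=0$ one of the marginals is $I/4$, and the product $\tfrac18(I+\varepsilon W_1+\delta W_2)$ is visibly Markov. There is no real obstacle here. The only point needing care is the second step: Lemma~\ref{lem:normal-product} must be applied to the embedded operators on the full three-qubit space, and the square roots must be checked to commute with the embedding.
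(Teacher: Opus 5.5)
Your proposal is correct and follows the paper's argument essentially verbatim: you reduce via Proposition~\ref{prop:petz2} to normality of $K=\sqrt2\,\rho_{12}^{1/2}\rho_{23}^{1/2}$, apply Lemma~\ref{lem:normal-product} to the embedded operators, and compute $[\rho_{12},\rho_{23}]=\tfrac{\varepsilon\delta}{16}[W_1,W_2]$. Your extra remarks make explicit two points the paper leaves implicit: that the square roots commute with the embedding, and that $W_1W_2$ is an invertible multiple of a Pauli word, so the commutator is genuinely nonzero when $\varepsilon\delta\neq0$.
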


The same obstruction appears in the logarithmic candidate
$T(\mathcal R)$.

\begin{lem}\label{lem:basic-qubit-trace-defect}
Let $\mathcal R=\{\rho_{12},\rho_{23}\}$ be the marginal family from
Example~\ref{ex:basic-qubit}, and assume $\varepsilon^2+\delta^2<1$. Set
$$
b=\operatorname{arctanh}(\varepsilon),
\qquad
d=\operatorname{arctanh}(\delta),
\qquad
r=(b^2+d^2)^{1/2}.
$$
Then
$$
\trace(T(\mathcal R))
=
\sqrt{(1-\varepsilon^2)(1-\delta^2)}\,\cosh(r).
$$
Thus,
$
\trace(T(\mathcal R))=1$ if and only if $\varepsilon\delta=0.
$
\end{lem}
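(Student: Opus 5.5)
We compute $\log T(\mathcal R)$ explicitly with Lemma~\ref{lem:Pauli-spectral}, observe that it is a Pauli expansion in two anticommuting words, and apply Lemma~\ref{lem:Pauli-ball} once more to exponentiate it. Here $\rho_2=I/2$, so $\log\rho_2=-(\log 2)I$. For $\rho_{12}=\tfrac14(I+\varepsilon X\otimes X)$, formula \eqref{eq:Pauli-log-formula} with $n=2$, $T=\varepsilon X\otimes X$ and $r=|\varepsilon|$ gives
$$
\log\rho_{12}=\log\Bigl(\tfrac{\sqrt{1-\varepsilon^2}}{4}\Bigr)I+b\,X\otimes X,
$$
since $\operatorname{arctanh}(|\varepsilon|)/|\varepsilon|\cdot\varepsilon=\operatorname{arctanh}(\varepsilon)=b$. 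The case $\varepsilon=0$ is trivially consistent with this. Similarly, $\log\rho_{23}=\log(\sqrt{1-\delta^2}/4)I+d\,Z\otimes Z$. Embedding these into three qubits, we get
$$
\log\rho_{12}+\log\rho_{23}-\log\rho_2=\log\Bigl(\tfrac{\sqrt{(1-\varepsilon^2)(1-\delta^2)}}{8}\Bigr)I+bW_1+dW_2 .
$$

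The scalar part commutes with everything and factors out of the exponential. For the traceless part $S=bW_1+dW_2$, the words $W_1,W_2$ anticommute, so Lemma~\ref{lem:Pauli-ball} gives $S^2=r^2I$. Hence
$$
\exp(S)=\cosh(r)I+\frac{\sinh(r)}{r}S ,
$$
where the second term is read as $0$ when $r=0$. Since Pauli words other than $I$ are traceless, $\trace\exp(S)=8\cosh(r)$. Multiplying by the scalar factor yields $\trace(T(\mathcal R))=\sqrt{(1-\varepsilon^2)(1-\delta^2)}\cosh(r)$.

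For the equivalence, note that $\sqrt{1-\varepsilon^2}=1/\cosh b$ and $\sqrt{1-\delta^2}=1/\cosh d$. Therefore
$$
\trace(T(\mathcal R))=\frac{\cosh\sqrt{b^2+d^2}}{\cosh b\,\cosh d}.
$$
This step is the main obstacle: we must show $\cosh\sqrt{b^2+d^2}\le\cosh b\cosh d$, with equality iff $bd=0$. One route is Theorem~\ref{thm:main-trace2} combined with Lemma~\ref{lem:basic-qubit-no-markov}, which already gives $\trace(T(\mathcal R))\le 1$ with equality iff $\varepsilon\delta=0$.

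For a self-contained argument instead, expand both sides in power series:
$$
\cosh\sqrt{x+y}=\sum_k\frac{(x+y)^k}{(2k)!},\qquad
\cosh\sqrt x\,\cosh\sqrt y=\sum_{j,l}\frac{x^jy^l}{(2j)!(2l)!},
$$
with $x=b^2$ and $y=d^2$. Comparing coefficients of $x^jy^l$ requires
$$
\binom{j+l}{j}\le\binom{2j+2l}{2j},
$$
which holds, with strict inequality when $j,l\ge1$. Equality of the two sums therefore forces $xy=0$, that is, $\varepsilon\delta=0$ (because $b=0$ iff $\varepsilon=0$).
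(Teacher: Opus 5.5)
Your proof is correct, and your computation of $\trace(T(\mathcal R))$ is the same as the paper's. Both obtain the logarithms from \eqref{eq:Pauli-log-formula}, the scalar $\log 2$ from $\rho_2=I/2$, and $e^{bW_1+dW_2}=\cosh(r)I+\frac{\sinh r}{r}(bW_1+dW_2)$ from anticommutation. Your remarks on the sign via oddness of $\operatorname{arctanh}$, and on the case $\varepsilon=0$ where Lemma~\ref{lem:Pauli-spectral} does not literally apply, add care the paper's proof omits.

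The two proofs differ only in the scalar inequality $\cosh\sqrt{b^2+d^2}\le\cosh b\cosh d$. The paper settles it in one line by strict convexity of $t\mapsto\cosh\sqrt t$. That is implicitly a midpoint argument: $\cosh b\cosh d=\tfrac12[\cosh(b+d)+\cosh(b-d)]$, and $(b+d)^2$, $(b-d)^2$ average to $b^2+d^2$ and differ exactly when $bd\neq0$.

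Your term-by-term comparison via $\binom{j+l}{j}\le\binom{2j+2l}{2j}$ is longer but entirely elementary. It also locates the strictness concretely in the $xy$ coefficient ($1/12$ versus $1/4$). To make it complete, add two short points. First, justify the binomial inequality: choosing $j$ of $j+l$ disjoint pairs gives an injection into the $2j$-subsets of a $(2j+2l)$-set, and it misses the subsets that split a pair when $j,l\ge1$. Second, note that the converse is trivial, since $d=0$ gives $r=|b|$.

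Your alternative route through Theorem~\ref{thm:main-trace2} and Lemma~\ref{lem:basic-qubit-no-markov} is valid and not circular. That lemma rests on Proposition~\ref{prop:petz2} and Lemma~\ref{lem:normal-product}, not on the present one. However, it spends Lieb's three-matrix inequality and Petz recovery on an elementary scalar inequality. The example would then no longer serve as an independent check of the trace bound and criterion it is meant to illustrate.
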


\begin{proof}
By \eqref{eq:Pauli-log-formula},
$$
\log\rho_{12}
=
\log(\tfrac{\sqrt{1-\varepsilon^2}}{4})I+bW_1,\qquad
\log\rho_{23}
=
\log(\tfrac{\sqrt{1-\delta^2}}{4})I+dW_2.
$$
Since $\rho_2=I/2$, the exponent defining $T(\mathcal R)$ equals
$$\log(\tfrac{\sqrt{(1-\varepsilon^2)(1-\delta^2)}}{8})
I
+
bW_1+dW_2.
$$
Because $W_1$ and $W_2$ anticommute,
$
(bW_1+dW_2)^2=r^2I$.
Hence
$$
e^{bW_1+dW_2}
=
\cosh(r)I+\frac{\sinh(r)}{r}(bW_1+dW_2).
$$
Taking traces gives the formula. Using
$\varepsilon=\tanh b$ and $\delta=\tanh d$,
$$
\trace(T(\mathcal R))
=
\frac{\cosh r}{\cosh b\,\cosh d}.
$$
If $bd=0$ (equiv. $\varepsilon\delta=0$), this equals one. If $bd\ne0$, strict convexity of
$t\mapsto\cosh\sqrt t$ gives
$
\cosh r
<
\cosh b\,\cosh d.
$
Hence $\trace(T(\mathcal R))<1$.
\end{proof}

The feasible completion
$$
\sigma=\frac18(I+\varepsilon W_1+\delta W_2)
$$
is also the unique maximum entropy completion. Indeed,
$\log\sigma$ lies in the span of
$I,W_1,W_2$, where $W_1$ is supported on subsystem $12$ and $W_2$ on subsystem
$23$. Thus $\sigma$ has the required log-linear form. By
Lemma~\ref{lem:basic-qubit-no-markov}, this maximum entropy completion
fails to be Markov whenever $\varepsilon\delta\ne0$.

\begin{rem}
The preceding example shows that, when the overlap marginal is maximally mixed,
noncommutation of the two prescribed marginals prevents Markov feasibility. This
does not mean that Markov states always have commuting overlapping marginals.
A simple example, already implicit in the discussion after Theorem~5 of
\citet{ruskai_2002}, is obtained as follows. Let
$\sigma_1\in\mathbb S_1^+(\hilb_1)$ and
$\sigma_{23}\in\mathbb S_1^+(\hilb_2\otimes\hilb_3)$, and set
$\sigma_2=\trace_3(\sigma_{23})$. Define
$$
\rho=\sigma_1\otimes\sigma_{23}.
$$
Then
$$
\rho_{12}=\sigma_1\otimes\sigma_2,
\qquad
\rho_{23}=\sigma_{23},
\qquad
\rho_2=\sigma_2,
$$
and ${\rm I}(1:3\mid2)_\rho=0$, since system $1$ is independent of systems $2,3$.
Moreover,
$$
\rho_{12}^{1/2}\rho_2^{-1/2}\rho_{23}^{1/2}
=
\sigma_1^{1/2}\otimes\sigma_{23}^{1/2},
$$
and hence the Petz reconstruction gives
$$
\rho_{12}^{1/2}\rho_2^{-1/2}\rho_{23}\rho_2^{-1/2}\rho_{12}^{1/2}
=
\sigma_1\otimes\sigma_{23}
=
\rho.
$$
However, $\rho_{12}$ and $\rho_{23}$ need not commute as operators on
$\hilb_1\otimes\hilb_2\otimes\hilb_3$, since
$$
[\rho_{12},\rho_{23}]
=
\sigma_1\otimes[\sigma_2\otimes I_3,\sigma_{23}],
$$
which may be nonzero.
\end{rem}

\subsection{A Pauli example on a chordal graph}

We now illustrate the chordal trace criterion on the graph in
Figure~\ref{fig:chordal}. The cliques are
$\{1,2,3\}$, $\{2,4,5\}$, $\{2,5,6\}$, and $\{2,6,7\}$, and the separators are $\{2\}$, $\{2,5\}$, and $\{2,6\}$.

We use the following notation. If $P\in\{X,Y,Z\}$, then $P_i$ denotes $P$
acting on the $i$th tensor factor and the identity on all other tensor factors
under consideration. Thus, for example, on the full seven-qubit system,
$$
X_1X_2X_3
=
X\otimes X\otimes X\otimes I\otimes I\otimes I\otimes I,
$$
whereas on the subsystem $\{1,2,3\}$ it denotes $X\otimes X\otimes X$.
Consider now the Pauli words
$$
W_1=X_1X_2X_3,\qquad
W_2=Y_2X_4X_5,\qquad
W_3=Y_2Y_5X_6,\qquad
W_4=Z_2X_6X_7.
$$
They are pairwise anticommuting and satisfy $W_j^2=I$. For parameters
$|a_j|<1$, prescribe the clique marginals
$$
\rho_{123}=\frac18(I+a_1W_1),\qquad
\rho_{245}=\frac18(I+a_2W_2),
$$
$$
\rho_{256}=\frac18(I+a_3W_3),\qquad
\rho_{267}=\frac18(I+a_4W_4),
$$
where each $W_j$ is interpreted on the corresponding three-qubit clique.

All separator marginals are maximally mixed, so the family is locally
consistent. Moreover, if $\sum_{j=1}^4 a_j^2<1$, then it is feasible by the
strictly positive global state
$$
\sigma
=
\frac1{2^7}\left(I+\sum_{j=1}^4 a_jW_j\right),
$$
where now each $W_j$ is interpreted as an operator on the full seven-qubit
system. This follows directly from Lemma~\ref{lem:Pauli-ball}.

We now compute the logarithmic chordal candidate. For this graph,
$$
T(\mathcal R)
=
\exp\{
\log\rho_{123}
+\log\rho_{245}
+\log\rho_{256}
+\log\rho_{267}
-\log\rho_2
-\log\rho_{25}
-\log\rho_{26}
\},
$$
where each marginal is embedded into the full seven-qubit system by tensoring
with identities. Since all separator marginals are maximally mixed,
$\rho_2=\tfrac12I$, $\rho_{25}=\tfrac14 I$, and $\rho_{26}=\tfrac14 I$. By the
logarithmic formula in Lemma~\ref{lem:Pauli-spectral}, applied with one Pauli
word at a time,
$$
\log\{\tfrac18 (I+aW)\}
=
\log(\tfrac{\sqrt{1-a^2}}{8}) I
+
\operatorname{arctanh}(a) W .
$$
Set $b_j=\operatorname{arctanh}(a_j)$ and
$r=(b_1^2+\cdots+b_4^2)^{1/2}$. Substitution gives
$$
T(\mathcal R)
=
\frac{1}{2^7}
\left\{\prod_{j=1}^4(1-a_j^2)^{1/2}\right\}
\exp\left(\sum_{j=1}^4 b_jW_j\right).
$$
Here the scalar power of $2$ is $2^{-12}2^5=2^{-7}$: the four clique
marginals contribute $2^{-12}$, while subtracting the three separator terms
contributes $2^{1+2+2}=2^5$.

The remaining calculation is the same as in
Lemma~\ref{lem:basic-qubit-trace-defect}, with four pairwise anticommuting
Pauli words instead of two. Since
$$
\left(\sum_{j=1}^4 b_jW_j\right)^2=r^2I,
$$
we have
$$
\exp\left(\sum_{j=1}^4 b_jW_j\right)
=
\cosh(r)I
+
\frac{\sinh(r)}{r}\sum_{j=1}^4 b_jW_j,
$$
with the usual interpretation when $r=0$. Since each $W_j$ is traceless,
$$
\trace(T(\mathcal R))
=
\left\{\prod_{j=1}^4(1-a_j^2)^{1/2}\right\}\cosh(r)
=
\frac{\cosh(r)}{\prod_{j=1}^4\cosh(b_j)}.
$$
As in the proof of Lemma~\ref{lem:basic-qubit-trace-defect}, this quantity is
equal to one if at most one of the parameters $a_j$ is nonzero, and is
strictly less than one if at least two of them are nonzero. Hence, by
Theorem~\ref{thm:chordal-trace-criterion}, this locally consistent family has
no quantum Markov completion whenever at least two parameters are nonzero, even
though it has a strictly positive global completion whenever
$\sum_{j=1}^4 a_j^2<1$.

\color{black}

\section*{Acknowledgements}

The authors have benefited from comments of Matthias Christandl and Thomas C. Fraser.

PZ acknowledges the support of Ayudas Fundación BBVA a Proyectos de Investigación Científica 2021, the Spanish Ministry of Economy
and Competitiveness grant PID2022-138268NB-I00, financed by MCIN/AEI/10.13039/501100011033,
FSE+MTM2015-67304-P, FEDER, EU, and Canadian NSERC grant RGPIN-2023-03481. 
PZ is also affiliated with the Serra H\'{u}nter S\`{e}nior Program, and Barcelona School of Economics.

\appendix

\small

\section{A weighted proof of Petz recovery for the partial trace}
\label{app:petz-partial-trace}

This appendix gives a finite-dimensional proof of the equality case in
monotonicity of relative entropy for the partial trace. The result is standard
and goes back to \citet{petz_1986}; see also
\citet{petz2003monotonicity}. We include the proof because it gives exactly the
form of the Petz recovery formula used in the main text, and because the same
argument gives a direct proof of the intersection property for  quantum conditional independence in the strictly
positive case.

The proof is a Hilbert-space version of the relative modular operator
argument. We work with operators as vectors, but use a weighted inner product
adapted to the reference state. In this representation, the partial trace
corresponds to an isometric inclusion of operator spaces, and monotonicity
becomes a comparison between the maximum of a quadratic form over a Hilbert
space and the maximum over a closed subspace.

\subsection{The weighted operator space}

Let $\rho\in\mathbb S_1^+(\hilb)$. On $\cL(\hilb)$ define the weighted inner
product
$$
\langle X,Y\rangle_\rho:=\trace(X^*Y\rho),
\qquad X,Y\in\cL(\hilb).
$$
We write $\cL(\hilb)_\rho$ when we want to emphasize this Hilbert-space
structure. This notation will be used in its full strength in Section~\ref{app:monotonicity}.  For $\tau\in\mathbb S_1^+(\hilb)$, define
$$
\Delta_{\tau,\rho}:\cL(\hilb)_\rho\to\cL(\hilb)_\rho,
\qquad
\Delta_{\tau,\rho}(X):=\tau X\rho^{-1}.
$$

\begin{lem}\label{lem:app-delta-basic}
The operator $\Delta_{\tau,\rho}$ is strictly positive and self-adjoint on
$\cL(\hilb)_\rho$. Moreover,
$$
\langle X,\Delta_{\tau,\rho}Y\rangle_\rho
=
\trace(X^*\tau Y)
\qquad\text{for all }X,Y\in\cL(\hilb).
$$
\end{lem}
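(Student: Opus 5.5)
The plan is to verify the identity first and derive positivity and self-adjointness from it. For the identity, expand the definitions and use cyclicity of the trace:
$$
\langle X,\Delta_{\tau,\rho}Y\rangle_\rho
=\trace\bigl(X^*(\tau Y\rho^{-1})\rho\bigr)
=\trace(X^*\tau Y).
$$
The factor $\rho^{-1}$ cancels against the weight $\rho$ in the inner product. Strict positivity of $\rho$ ensures that $\rho^{-1}$ exists, so $\Delta_{\tau,\rho}$ is well defined. Before using it, I would also note in one line that $\langle\cdot,\cdot\rangle_\rho$ is a genuine inner product: it is sesquilinear, and $\langle X,X\rangle_\rho=\trace(\rho^{1/2}X^*X\rho^{1/2})>0$ for $X\neq0$.

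For self-adjointness, the identity gives $\langle X,\Delta_{\tau,\rho}Y\rangle_\rho=\trace(X^*\tau Y)$. On the other side, $\langle \Delta_{\tau,\rho}X,Y\rangle_\rho=\overline{\langle Y,\Delta_{\tau,\rho}X\rangle_\rho}=\overline{\trace(Y^*\tau X)}$. Since $\tau$ is self-adjoint,
$$
\overline{\trace(Y^*\tau X)}=\trace\bigl((Y^*\tau X)^*\bigr)=\trace(X^*\tau Y),
$$
so the two expressions agree.

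For strict positivity, set $X=Y$ in the identity:
$$
\langle X,\Delta_{\tau,\rho}X\rangle_\rho=\trace(X^*\tau X)=\trace\bigl((\tau^{1/2}X)^*(\tau^{1/2}X)\bigr)\ge0,
$$
with equality only if $\tau^{1/2}X=0$. Because $\tau\succ0$, this forces $X=0$. In finite dimensions this means the spectrum of $\Delta_{\tau,\rho}$ is strictly positive, so $\Delta_{\tau,\rho}$ is invertible with inverse $X\mapsto\tau^{-1}X\rho$.

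I do not expect a real obstacle here. The only point needing care is the order of factors in the weighted inner product: the convention $\trace(X^*Y\rho)$ places $\rho$ on the right, which is what makes $\Delta_{\tau,\rho}$ multiply by $\rho^{-1}$ on the right so that the cancellation goes through.
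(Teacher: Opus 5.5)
Your proposal is correct and follows the paper's proof essentially step for step: the same cancellation of $\rho^{-1}$ against the weight gives the identity, and strict positivity comes from $\trace(X^*\tau X)=\trace\bigl((\tau^{1/2}X)^*(\tau^{1/2}X)\bigr)$. The one difference is in self-adjointness. You obtain it from conjugate symmetry of the weighted inner product together with $\tau=\tau^*$, whereas the paper rewrites $\trace(X^*\tau Y)$ directly as $\trace\bigl((\tau X\rho^{-1})^*Y\rho\bigr)$ using cyclicity. Your extra checks, that $\langle\cdot,\cdot\rangle_\rho$ is a genuine inner product and that the inverse is $X\mapsto\tau^{-1}X\rho$, are correct and harmless.
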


\begin{proof}
We have
$$
\langle X,\Delta_{\tau,\rho}Y\rangle_\rho
=
\trace\{X^*(\tau Y\rho^{-1})\rho\}
=
\trace(X^*\tau Y).
$$
To compute the adjoint, suppose that $\Delta_{\tau,\rho}^*$ is the adjoint
with respect to $\langle\cdot,\cdot\rangle_\rho$. Then, for every $Y$,
$$
\<\Delta^*_{\tau,\rho}X,Y\>_\rho=\<X,\Delta_{\tau,\rho}Y\>=\trace(X^*\tau Y)=\trace((\tau X\rho^{-1})^*Y\rho)=\<\tau X \rho^{-1},Y\>_\rho, 
$$
which is just $\<\Delta_{\tau,\rho}X,Y\>_\rho$.
Thus $\Delta_{\tau,\rho}$ is self-adjoint. Finally,
$$
\langle X,\Delta_{\tau,\rho}X\rangle_\rho
=
\trace(X^*\tau X)
=
\trace\{(\tau^{1/2}X)^*(\tau^{1/2}X)\}\ge 0.
$$
Since $\tau$ is strictly positive, equality holds only when $X=0$, so
$\Delta_{\tau,\rho}$ is strictly positive.
\end{proof}

\begin{lem}\label{lem:app-relative-entropy-weighted}
For $\rho,\tau\in\mathbb S_1^+(\hilb)$,
$$
{\rm D}(\rho\|\tau)
=
-\langle I,(\log\Delta_{\tau,\rho})I\rangle_\rho .
$$
\end{lem}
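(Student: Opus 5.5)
The plan is to compute $\log\Delta_{\tau,\rho}$ explicitly. Left multiplication by $\tau$ and right multiplication by $\rho^{-1}$ are commuting operators on $\cL(\hilb)$. Write
$$
L_\tau(X)=\tau X, \qquad R_{\rho^{-1}}(X)=X\rho^{-1}.
$$
Then $\Delta_{\tau,\rho}=L_\tau R_{\rho^{-1}}$. Each factor is self-adjoint and strictly positive on $\cL(\hilb)_\rho$:
$$
\langle X,\tau Y\rangle_\rho=\trace(X^*\tau Y\rho)=\langle \tau X,Y\rangle_\rho,
\qquad
\langle X,Y\rho^{-1}\rangle_\rho=\trace(X^*Y)=\langle X\rho^{-1},Y\rangle_\rho .
$$
Positivity follows as in Lemma~\ref{lem:app-delta-basic}.

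The key step is to diagonalize everything concretely. Take spectral decompositions $\tau=\sum_i t_i P_i$ and $\rho=\sum_j r_j Q_j$. The operators $X\mapsto P_iXQ_j$ are mutually orthogonal projections on $\cL(\hilb)_\rho$ summing to the identity. On the range of $X\mapsto P_iXQ_j$, the operator $\Delta_{\tau,\rho}$ acts as the scalar $t_i/r_j$. Hence
$$
(\log\Delta_{\tau,\rho})X=\sum_{i,j}(\log t_i-\log r_j)P_iXQ_j=(\log\tau)X-X(\log\rho).
$$
Equivalently, $\log\Delta_{\tau,\rho}=L_{\log\tau}-R_{\log\rho}$, because commuting positive operators satisfy $\log(AB)=\log A+\log B$.

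Applying this to $X=I$ gives $(\log\Delta_{\tau,\rho})I=\log\tau-\log\rho$. Therefore
$$
-\langle I,(\log\Delta_{\tau,\rho})I\rangle_\rho=-\trace\bigl((\log\tau-\log\rho)\rho\bigr)=\trace(\rho\log\rho)-\trace(\rho\log\tau).
$$
This is ${\rm D}(\rho\|\tau)$ in the form \eqref{eq:divergence}, since both arguments have trace one.

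The only step requiring care is justifying that the functional calculus of $\Delta_{\tau,\rho}$, taken with respect to the weighted inner product, agrees with the block formula above. For this it suffices to check two things: the maps $X\mapsto P_iXQ_j$ are self-adjoint idempotents in $\langle\cdot,\cdot\rangle_\rho$, and they are mutually orthogonal. Both follow from cyclicity of trace together with the fact that $Q_j$ commutes with $\rho$:
$$
\trace\bigl((P_iXQ_j)^*Y\rho\bigr)=\trace\bigl(X^*P_iYQ_j\rho\bigr).
$$
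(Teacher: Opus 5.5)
Your proof is correct and follows the paper's argument. Both write $\Delta_{\tau,\rho}$ as the product of the commuting positive operators $L_\tau$ and $R_\rho^{-1}$, deduce $(\log\Delta_{\tau,\rho})X=(\log\tau)X-X\log\rho$, and evaluate at $X=I$; your explicit diagonalization via the weighted-orthogonal projections $X\mapsto P_iXQ_j$ just spells out the simultaneous functional calculus that the paper invokes in one line.
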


\begin{proof}
Let $L_\tau(X)=\tau X$ and $R_\rho(X)=X\rho$. Then
$\Delta_{\tau,\rho}=L_\tau R_\rho^{-1}$. Since left and right multiplication
commute, $L_\tau$ and $R_\rho$ are commuting positive operators on the operator
space. Hence they admit a simultaneous functional calculus. In particular,
the logarithm of the product $L_\tau R_\rho^{-1}$ is the sum of the logarithms:
$$
\log\Delta_{\tau,\rho}
=
\log(L_\tau R_\rho^{-1})
=
\log L_\tau-\log R_\rho .
$$
Thus, for every operator $X$,
$$
(\log\Delta_{\tau,\rho})(X)
=
(\log\tau)X-X\log\rho .
$$
Taking $X=I$ gives
$
-(\log\Delta_{\tau,\rho})(I)=\log\rho-\log\tau .
$
Therefore, using $\langle X,Y\rangle_\rho=\trace(X^\ast Y\rho)$,
$$
-\langle I,(\log\Delta_{\tau,\rho})I\rangle_\rho
=
\trace\{(\log\rho-\log\tau)\rho\}
=
{\rm D}(\rho\|\tau),
$$
as claimed.
\end{proof}

\subsection{A variational formula for relative entropy}

We use the scalar identity
$$
-\log x
=
\int_0^\infty
\left(\frac{1}{x+t}-\frac{1}{1+t}\right)dt,
\qquad x>0.
$$
By functional calculus and Lemma~\ref{lem:app-relative-entropy-weighted},
\begin{equation}\label{eq:app-relative-entropy-integral}
{\rm D}(\rho\|\tau)
=
\int_0^\infty
\left(
\langle I,(tI+\Delta_{\tau,\rho})^{-1}I\rangle_\rho
-
\frac{1}{1+t}
\right)dt.
\end{equation}

The following elementary variational identity is the only optimization fact
used in the proof. It is just a completion of the square.

\begin{lem}\label{lem:app-shifted-inverse-variational}
Let $\mathcal K$ be a finite-dimensional Hilbert space and let $M$ be a
strictly positive self-adjoint operator on $\mathcal K$. Then, for every
$v\in\mathcal K$,
$$
\langle v,M^{-1}v\rangle
=
\max_{x\in\mathcal K}
\left\{
\langle x,v\rangle+\langle v,x\rangle-\langle x,Mx\rangle
\right\}.
$$
The maximum is attained at the unique point $x=M^{-1}v$.
\end{lem}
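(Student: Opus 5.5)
The plan is to complete the square in the Hilbert space $\mathcal K$. Since $M$ is strictly positive and self-adjoint, it is invertible, and $M^{-1}$ is again strictly positive and self-adjoint. Put $x_0:=M^{-1}v$, so that $v=Mx_0$. For arbitrary $x\in\mathcal K$, write $x=x_0+h$ and expand the objective
$$
F(x):=\langle x,v\rangle+\langle v,x\rangle-\langle x,Mx\rangle .
$$

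Substituting $v=Mx_0$ and using self-adjointness, we have $\langle x,v\rangle=\langle x,Mx_0\rangle$ and $\langle v,x\rangle=\langle Mx_0,x\rangle=\langle x_0,Mx\rangle$. This gives the identity
$$
F(x)=\langle x_0,Mx_0\rangle-\langle x-x_0,M(x-x_0)\rangle .
$$
To check it, expand $\langle x-x_0,M(x-x_0)\rangle=\langle x,Mx\rangle-\langle x,Mx_0\rangle-\langle x_0,Mx\rangle+\langle x_0,Mx_0\rangle$ and compare term by term.

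Finally, $\langle x_0,Mx_0\rangle=\langle M^{-1}v,v\rangle=\langle v,M^{-1}v\rangle$, and the latter is real because $M^{-1}$ is self-adjoint. Since $M$ is strictly positive, $\langle h,Mh\rangle\ge 0$, with equality only when $h=0$. Hence $F(x)\le\langle v,M^{-1}v\rangle$ for every $x$, with equality exactly when $x=x_0=M^{-1}v$. This yields both the maximum value and the uniqueness of the maximizer.

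The only point requiring care is the sesquilinear convention in the cross terms. Each cross term on its own is complex, but their sum $2\,\mathrm{Re}\langle x,v\rangle$ is real, so the maximum is taken over a real-valued function and makes sense. I do not expect any real obstacle beyond this.
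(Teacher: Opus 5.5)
Your proof is correct and follows the paper's argument exactly: the paper also completes the square, writing the objective as $\langle M^{-1}v,v\rangle-\langle x-M^{-1}v,M(x-M^{-1}v)\rangle$, and reads off the maximum value and the unique maximizer from strict positivity of $M$. Your additional remarks on the sesquilinear convention and the realness of the objective are points the paper leaves implicit.
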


\begin{proof}
For every $x\in\mathcal K$,
$$
\langle x,v\rangle+\langle v,x\rangle-\langle x,Mx\rangle
=
\langle M^{-1}v,v\rangle
-
\langle x-M^{-1}v,M(x-M^{-1}v)\rangle.
$$
The first term on the right does not depend on $x$. The second term is nonpositive and vanishes exactly when
$x=M^{-1}v$.
\end{proof}

For $t>0$, define the quadratic function
\begin{equation}\label{eq:app-ft-definition}
f_t^{\rho,\tau}(X)
:=
\langle X,I\rangle_\rho+
\langle I,X\rangle_\rho-
\langle X,(tI+\Delta_{\tau,\rho})X\rangle_\rho,
\qquad X\in\cL(\hilb).
\end{equation}
In trace notation,
\begin{equation}\label{eq:app-ft-trace-form}
f_t^{\rho,\tau}(X)
=
\trace(X^*\rho)+\trace(X\rho)
-
t\trace(X^*X\rho)-\trace(X^*\tau X).
\end{equation}

\begin{prop}
\label{prop:app-relative-entropy-max-formula}
For $\rho,\tau\in\mathbb S_1^+(\hilb)$,
$$
{\rm D}(\rho\|\tau)
=
\int_0^\infty
\left(
\max_{X\in\cL(\hilb)} f_t^{\rho,\tau}(X)-\frac{1}{1+t}
\right)dt.
$$
For each $t>0$, the maximum is attained at the unique point
\begin{equation}\label{eq:app-optimizer-definition}
X_t^{\rho,\tau}
=
(tI+\Delta_{\tau,\rho})^{-1}I.
\end{equation}
\end{prop}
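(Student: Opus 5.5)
The plan is to combine the integral representation \eqref{eq:app-relative-entropy-integral} with the variational identity of Lemma~\ref{lem:app-shifted-inverse-variational}, applied pointwise in $t$. Fix $t>0$ and work in the Hilbert space $\mathcal K=\cL(\hilb)_\rho$ with inner product $\langle\cdot,\cdot\rangle_\rho$. The operator to invert will be $M=tI+\Delta_{\tau,\rho}$, and the vector will be $v=I$.

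The first step is to check the hypotheses of Lemma~\ref{lem:app-shifted-inverse-variational}. By Lemma~\ref{lem:app-delta-basic}, $\Delta_{\tau,\rho}$ is self-adjoint and strictly positive on $\mathcal K$. Hence $M=tI+\Delta_{\tau,\rho}$ is also self-adjoint and strictly positive for $t\ge0$, and in particular for $t>0$. Since $\mathcal K$ is finite dimensional, the lemma applies. It gives
$$
\langle I,(tI+\Delta_{\tau,\rho})^{-1}I\rangle_\rho=\max_{X}\{\langle X,I\rangle_\rho+\langle I,X\rangle_\rho-\langle X,(tI+\Delta_{\tau,\rho})X\rangle_\rho\}=\max_X f_t^{\rho,\tau}(X).
$$
The maximum is attained uniquely at $X=M^{-1}I$, which is \eqref{eq:app-optimizer-definition}.

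Next, substituting this identity into \eqref{eq:app-relative-entropy-integral} gives the stated formula for ${\rm D}(\rho\|\tau)$. For completeness I would also confirm the trace form \eqref{eq:app-ft-trace-form}. From $\langle X,I\rangle_\rho=\trace(X^*\rho)$ and $\langle I,X\rangle_\rho=\trace(X\rho)$, together with $t\langle X,X\rangle_\rho=t\trace(X^*X\rho)$ and $\langle X,\Delta_{\tau,\rho}X\rangle_\rho=\trace(X^*\tau X)$ from Lemma~\ref{lem:app-delta-basic}, the four terms match.

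The one genuinely delicate point is the integral representation \eqref{eq:app-relative-entropy-integral} itself, which the paper states via functional calculus. I would justify it by diagonalizing $\Delta_{\tau,\rho}$ in an orthonormal eigenbasis of $\mathcal K$, with eigenvalues $\mu_k>0$. Write $I=\sum_k c_k e_k$ and note $\sum_k|c_k|^2=\langle I,I\rangle_\rho=\trace\rho=1$. The integrand then becomes $\sum_k|c_k|^2\bigl(\frac1{\mu_k+t}-\frac1{1+t}\bigr)$. Each summand is integrable on $(0,\infty)$, since it is $O(t^{-2})$ at infinity and bounded near $0$. Integrating term by term gives $-\sum_k|c_k|^2\log\mu_k=-\langle I,\log\Delta_{\tau,\rho} I\rangle_\rho$, which equals ${\rm D}(\rho\|\tau)$ by Lemma~\ref{lem:app-relative-entropy-weighted}. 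This step is also what makes the integral in the proposition well defined, so no further convergence argument is needed.
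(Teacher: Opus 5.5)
Your proposal is correct and follows the paper's proof: apply Lemma~\ref{lem:app-shifted-inverse-variational} on $\cL(\hilb)_\rho$ with $M=tI+\Delta_{\tau,\rho}$ and $v=I$, then substitute the resulting identity into \eqref{eq:app-relative-entropy-integral}. Your eigenbasis check of that integral representation is a correct filling-in of what the paper leaves to functional calculus, and it rightly uses $\langle I,I\rangle_\rho=\trace(\rho)=1$ to absorb the $1/(1+t)$ term.
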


\begin{proof}
Apply Lemma~\ref{lem:app-shifted-inverse-variational} with
$\mathcal K=\cL(\hilb)_\rho$, $M=tI+\Delta_{\tau,\rho}$, and $v=I$. This gives
$$
\max_{X\in\cL(\hilb)} f_t^{\rho,\tau}(X)
=
\langle I,(tI+\Delta_{\tau,\rho})^{-1}I\rangle_\rho,
$$
with unique maximizer $X_t^{\rho,\tau}$ in \eqref{eq:app-optimizer-definition}. Substituting this
identity into \eqref{eq:app-relative-entropy-integral} gives the result.
\end{proof}

\subsection{Monotonicity of the partial trace}\label{app:monotonicity}

Let $\rho,\tau\in\mathbb S_1^+(\hilb_A\otimes\hilb_B)$, and write
$\rho_A=\trace_B(\rho)$ and $\tau_A=\trace_B(\tau)$. Define
$$
J_A:\cL(\hilb_A)_{\rho_A}\to\cL(\hilb_A\otimes\hilb_B)_\rho,
\qquad
J_A(X)=X\otimes I_B.
$$
Throughout this appendix, $J_A^*$ denotes the adjoint of $J_A$ with respect to
the weighted inner products on the domain and codomain.

\begin{lem}\label{lem:app-JA-basic}
The map $J_A$ has the following properties:
\begin{enumerate}[(i)]
\item $J_A$ is an isometry:
$$
\langle J_A(X),J_A(Y)\rangle_\rho
=
\langle X,Y\rangle_{\rho_A}
\qquad\text{for all }X,Y\in\cL(\hilb_A).
$$
\item It maps identity to identity: $J_A(I_A)=I_{A\cup B}$.
\item $J_A$ is multiplicative and preserves adjoints:
$$
J_A(XY)=J_A(X)J_A(Y),
\qquad
J_A(X^*)=J_A(X)^* .
$$
\item It connects $\Delta_{\tau_A,\rho_A}$ and $\Delta_{\tau,\rho}$:
$$
J_A^*\Delta_{\tau,\rho} J_A=\Delta_{\tau_A,\rho_A}.
$$
\end{enumerate}
\end{lem}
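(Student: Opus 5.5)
We verify the four items in order, each by a direct trace computation. The only tools are the pull-out property (Lemma~\ref{lem:pull-out}) and the defining identity \eqref{eq:partial-trace-def} of the partial trace, together with Lemma~\ref{lem:app-delta-basic}.

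For (i), write $\langle J_A(X),J_A(Y)\rangle_\rho=\trace\{(X^*\otimes I_B)(Y\otimes I_B)\rho\}=\trace\{(X^*Y\otimes I_B)\rho\}$. The defining property \eqref{eq:partial-trace-def}, applied with $M=X^*Y$ and the roles of the two factors swapped, turns this into $\trace(X^*Y\rho_A)=\langle X,Y\rangle_{\rho_A}$. Item (ii) is immediate from $I_A\otimes I_B=I_{A\cup B}$. For (iii), we use $(X\otimes I_B)(Y\otimes I_B)=XY\otimes I_B$ and $(X\otimes I_B)^*=X^*\otimes I_B$.

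Item (iv) is the substantive one, and we prove it through its quadratic form. Fix $X,Y\in\cL(\hilb_A)$. By the definition of the adjoint $J_A^*$ with respect to the weighted inner products, and then Lemma~\ref{lem:app-delta-basic} on the big space, we get
\begin{align*}
\langle X,J_A^*\Delta_{\tau,\rho}J_A Y\rangle_{\rho_A}
&=\langle J_A X,\Delta_{\tau,\rho}J_AY\rangle_\rho \\
&=\trace\{(X^*\otimes I_B)\tau(Y\otimes I_B)\}.
\end{align*}
Next we apply the pull-out property twice to move $X^*\otimes I_B$ and $Y\otimes I_B$ outside $\trace_B$, which gives $\trace_B\{(X^*\otimes I_B)\tau(Y\otimes I_B)\}=X^*\tau_AY$. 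Taking the full trace, the right-hand side equals $\trace(X^*\tau_AY)$, and Lemma~\ref{lem:app-delta-basic} on the small space identifies this with $\langle X,\Delta_{\tau_A,\rho_A}Y\rangle_{\rho_A}$. Since this holds for all $X,Y$ and $\langle\cdot,\cdot\rangle_{\rho_A}$ is nondegenerate (because $\rho_A\succ0$), we conclude $J_A^*\Delta_{\tau,\rho}J_A=\Delta_{\tau_A,\rho_A}$.

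The only point needing care is in (iv). The adjoint $J_A^*$ is taken with respect to the weighted inner products, so we must not try to compute it explicitly; it equals a Petz-type map, not simply $\trace_B$. Working with the sesquilinear form avoids this entirely. A second detail is that Lemma~\ref{lem:pull-out} is stated with the traced-out factor written first. It applies verbatim after reordering the tensor factors, or equivalently by checking on simple tensors $\tau=U\otimes V$, which is how we would verify it if needed.
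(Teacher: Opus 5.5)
Your proposal is correct and follows the paper's proof essentially step for step. Items (i)--(iii) are direct tensor and partial-trace computations, and (iv) is proved by evaluating the sesquilinear form $\langle J_AX,\Delta_{\tau,\rho}J_AY\rangle_\rho=\trace\{(X^*\otimes I_B)\tau(Y\otimes I_B)\}=\trace(X^*\tau_AY)$ via Lemma~\ref{lem:app-delta-basic} on both spaces and then concluding for all $X,Y$.
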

\begin{proof}
For (i),
$$
\langle J_A(X),J_A(Y)\rangle_\rho
=
\trace\{(X^*Y\otimes I_B)\rho\}
=
\trace(X^*Y\rho_A)
=
\langle X,Y\rangle_{\rho_A}.
$$
Part (ii) is immediate from the definition $J_A(X)=X\otimes I_B$.

For (iii), let $X,Y\in\cL(\hilb_A)$. Then
$$
J_A(XY)=XY\otimes I_B
=
(X\otimes I_B)(Y\otimes I_B)
=
J_A(X)J_A(Y),
$$
and
$$
J_A(X^*)=X^*\otimes I_B
=
(X\otimes I_B)^*
=
J_A(X)^*.
$$

For (iv), by Lemma~\ref{lem:app-delta-basic},
\begin{align*}
\langle J_A(X),\Delta_{\tau,\rho} J_A(Y)\rangle_\rho
&=
\trace\{(X^*\otimes I_B)\tau(Y\otimes I_B)\} \\
&=
\trace(X^*\tau_A Y) \\
&=
\langle X,\Delta_{\tau_A,\rho_A}Y\rangle_{\rho_A}.
\end{align*}
Since this holds for all $X,Y\in\cL(\hilb_A)$, we get
$
J_A^*\Delta_{\tau,\rho}J_A=\Delta_{\tau_A,\rho_A}.
$
\end{proof}

We conclude the following result.

\begin{lem}\label{lem:app-restriction-weighted}
For every $t>0$ and every $X\in\cL(\hilb_A)$,
$$
f_t^{\rho,\tau}(J_A X)=f_t^{\rho_A,\tau_A}(X).
$$
\end{lem}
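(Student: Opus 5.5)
The plan is to expand both sides of the identity using the trace form \eqref{eq:app-ft-trace-form} and to reduce every term to a trace on $\hilb_A$ via the defining property of the partial trace \eqref{eq:partial-trace-def}. Alternatively, one can work directly in the weighted Hilbert-space form \eqref{eq:app-ft-definition} and use Lemma~\ref{lem:app-JA-basic}. I would present the weighted version and mention the trace computation as a check.

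In the weighted form, I would treat the three terms of $f_t^{\rho,\tau}(J_AX)$ separately.

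\begin{itemize}
\item \emph{The two linear terms.} By Lemma~\ref{lem:app-JA-basic}(ii), $I=J_A(I_A)$. Hence
$$\langle J_AX,I\rangle_\rho=\langle J_AX,J_AI_A\rangle_\rho=\langle X,I_A\rangle_{\rho_A}$$
by the isometry property (i). The term $\langle I,J_AX\rangle_\rho$ is handled in the same way.
\item \emph{The $t$-term.} By (i) again, $t\langle J_AX,J_AX\rangle_\rho=t\langle X,X\rangle_{\rho_A}$.
\item \emph{The $\Delta$-term.} We have $\langle J_AX,\Delta_{\tau,\rho}J_AX\rangle_\rho=\langle X,J_A^*\Delta_{\tau,\rho}J_AX\rangle_{\rho_A}$. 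By (iv) this equals $\langle X,\Delta_{\tau_A,\rho_A}X\rangle_{\rho_A}$.
\end{itemize}

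Summing the three pieces gives exactly $f_t^{\rho_A,\tau_A}(X)$.

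As a direct check in trace notation:
\begin{itemize}
\item $\trace((X^*\otimes I_B)\rho)=\trace(X^*\rho_A)$, and likewise for $X$.
\item $\trace((X^*X\otimes I_B)\rho)=\trace(X^*X\rho_A)$.
\item $\trace((X^*\otimes I_B)\tau(X\otimes I_B))=\trace((X X^*\otimes I_B)\tau)=\trace(X^*\tau_A X)$, using cyclicity of the trace.
\end{itemize}

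There is no real obstacle here. The only point requiring care is the $\Delta$-term, where one must use the adjoint relation (iv), or equivalently cyclicity of the trace, rather than trying to compute $\Delta_{\tau,\rho}(X\otimes I_B)=\tau(X\otimes I_B)\rho^{-1}$ directly. That operator is not of product form, and only its weighted pairing with $J_AX$ reduces.
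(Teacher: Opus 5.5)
Your proof is correct and follows the paper's argument: you handle the two linear terms and the $t$-term with parts (i) and (ii) of Lemma~\ref{lem:app-JA-basic}, and the $\Delta$-term with the adjoint relation (iv). Your direct trace-notation check, including the cyclicity step for the $\tau$-term, is also correct but not needed.
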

\begin{proof}
Recall that
$$
f_t^{\rho,\tau}(Z)
=
\langle Z,I_{A\cup B}\rangle_\rho
+
\langle I_{A\cup B},Z\rangle_\rho
-
t\langle Z,Z\rangle_\rho
-
\langle Z,\Delta_{\tau,\rho} Z\rangle_\rho .
$$
We apply this with $Z=J_A X$. By parts (i) and (ii) of
Lemma~\ref{lem:app-JA-basic},
$$
\langle J_A X,I_{A\cup B}\rangle_\rho
=
\langle X,I_A\rangle_{\rho_A},
\qquad
\langle I_{A\cup B},J_A X\rangle_\rho
=
\langle I_A,X\rangle_{\rho_A},
$$
and
$$
\langle J_A X,J_A X\rangle_\rho
=
\langle X,X\rangle_{\rho_A}.
$$
By part (iv) of Lemma~\ref{lem:app-JA-basic},
$$
\langle J_A X,\Delta_{\tau,\rho}J_A X\rangle_\rho
=
\langle X,J_A^*\Delta_{\tau,\rho}J_A X\rangle_{\rho_A}
=
\langle X,\Delta_{\tau_A,\rho_A}X\rangle_{\rho_A}.
$$
Substituting these identities into the definition of $f_t^{\rho,\tau}$ gives
$f_t^{\rho,\tau}(J_A X)=f_t^{\rho_A,\tau_A}(X)$.
\end{proof}

\begin{prop}[Monotonicity for the partial trace]
\label{prop:app-monotonicity-partial-trace}
Let $\rho,\tau\in\mathbb S_1^+(\hilb_A\otimes\hilb_B)$. Then
$$
{\rm D}(\rho\|\tau)\ge {\rm D}(\rho_A\|\tau_A).
$$
\end{prop}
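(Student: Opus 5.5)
The plan is to use the variational formula of Proposition~\ref{prop:app-relative-entropy-max-formula} together with the isometric inclusion $J_A$. The idea is that restricting the maximization in the variational formula to a subspace can only decrease the maximum. This gives a pointwise inequality in $t$ between the integrands for ${\rm D}(\rho\|\tau)$ and ${\rm D}(\rho_A\|\tau_A)$, and integrating over $t$ then yields the claim.

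First fix $t>0$. By Proposition~\ref{prop:app-relative-entropy-max-formula} applied to the pair $\rho_A,\tau_A$ on $\hilb_A$, we have
$$
\max_{X\in\cL(\hilb_A)} f_t^{\rho_A,\tau_A}(X)=\langle I_A,(tI+\Delta_{\tau_A,\rho_A})^{-1}I_A\rangle_{\rho_A}.
$$
Lemma~\ref{lem:app-restriction-weighted} gives $f_t^{\rho_A,\tau_A}(X)=f_t^{\rho,\tau}(J_AX)$. Since $J_A X$ ranges over a subset of $\cL(\hilb_A\otimes\hilb_B)$, it follows that
$$
\max_{X\in\cL(\hilb_A)} f_t^{\rho_A,\tau_A}(X)=\max_{X}f_t^{\rho,\tau}(J_AX)\le \max_{Z\in\cL(\hilb_A\otimes\hilb_B)} f_t^{\rho,\tau}(Z).
$$
Subtracting $1/(1+t)$ from both sides gives the pointwise inequality between the integrands of the two representations in Proposition~\ref{prop:app-relative-entropy-max-formula}.

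Next, integrate over $t\in(0,\infty)$. The only point needing care is that both integrals converge absolutely, so that the inequality survives integration and no $\infty-\infty$ issue arises. Convergence follows from the spectral representation \eqref{eq:app-relative-entropy-integral}. Diagonalize the strictly positive self-adjoint operator $\Delta_{\tau,\rho}$ on the finite-dimensional space $\cL(\hilb)_\rho$, with eigenvalues $\mu_k>0$ and weights $w_k=|\langle e_k,I\rangle_\rho|^2$ satisfying $\sum_k w_k=\langle I,I\rangle_\rho=\trace\rho=1$. The integrand then equals
$$
\sum_k w_k\bigl(\tfrac{1}{\mu_k+t}-\tfrac{1}{1+t}\bigr).
$$
This is bounded near $t=0$ and is $O(t^{-2})$ as $t\to\infty$, so the integral is finite and equals $-\sum_k w_k\log\mu_k$. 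The same holds on $\hilb_A$. Hence integrating the pointwise inequality gives ${\rm D}(\rho_A\|\tau_A)\le{\rm D}(\rho\|\tau)$.

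The main obstacle is essentially already resolved by the earlier lemmas: the substantive ingredient is Lemma~\ref{lem:app-restriction-weighted}, which rests on $J_A^*\Delta_{\tau,\rho}J_A=\Delta_{\tau_A,\rho_A}$. What remains is the integrability bookkeeping just described. I would also record the form of the optimizers $X_t^{\rho,\tau}$ and $J_AX_t^{\rho_A,\tau_A}$ for later use in the equality case. Equality in the monotonicity inequality forces equality of the maxima for almost every $t$, hence by continuity for every $t$. By uniqueness of the maximizer, this means $X_t^{\rho,\tau}=J_AX_t^{\rho_A,\tau_A}$, which is the starting point for the Petz recovery formula.
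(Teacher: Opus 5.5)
Your proof is correct and is essentially the paper's argument: by Lemma~\ref{lem:app-restriction-weighted} the marginal variational problem is the restriction of the global one to $\im(J_A)$, so the integrands in Proposition~\ref{prop:app-relative-entropy-max-formula} compare pointwise in $t$, and integrating gives the inequality. Your extra check that both integrals converge (spectral weights of $\Delta_{\tau,\rho}$ summing to $\trace\rho=1$, integrand $O(t^{-2})$ at infinity) is a detail the paper leaves implicit. Your remarks on the equality case match Proposition~\ref{prop:app-equality-criterion}.
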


\begin{proof}
By Proposition~\ref{prop:app-relative-entropy-max-formula},
\begin{equation}\label{eq:diffD}
{\rm D}(\rho\|\tau)-{\rm D}(\rho_A\|\tau_A)
=
\int_0^\infty
\left(
\max_{Z\in\cL(\hilb_A\otimes\hilb_B)}f_t^{\rho,\tau}(Z)
-
\max_{X\in\cL(\hilb_A)}f_t^{\rho_A,\tau_A}(X)
\right)dt .
\end{equation}
By Lemma~\ref{lem:app-restriction-weighted},
$$
\max_{X\in\cL(\hilb_A)}f_t^{\rho_A,\tau_A}(X)
=
\max_{X\in\cL(\hilb_A)}f_t^{\rho,\tau}(J_A X).
$$
The right-hand side is the maximum of $f_t^{\rho,\tau}$ over the subspace
$\im(J_A)$. This is no larger than the maximum over the full space
$\cL(\hilb_A\otimes\hilb_B)$. Integrating over $t>0$ proves the claim.
\end{proof}

\begin{rem}
Proposition~\ref{prop:app-monotonicity-partial-trace} is the special case of
the data-processing inequality for the trace-preserving completely positive map
$\trace_B$. The general theorem says that
${\rm D}(\Phi(\rho)\|\Phi(\tau))\le {\rm D}(\rho\|\tau)$ for every trace-preserving
completely positive map $\Phi$; see \citet{lindblad1975completely}. The point of Proposition~\ref{prop:app-monotonicity-partial-trace} is not to reprove the full theorem, but to isolate the
finite-dimensional geometry of the partial trace. In this form, equality can be
read off from the variational maximizers.
\end{rem}

\subsection{Equality in monotonicity and Petz recovery}

The previous proof compares two variational problems: the global maximization
over $\cL(\hilb_A\otimes\hilb_B)$ and the restricted maximization over
$\im(J_A)$. Since the global quadratic problem has a unique maximizer, equality
can hold only if the global maximizer already belongs to the subspace
$\im(J_A)$.

\begin{prop}[Equality criterion for the partial trace]
\label{prop:app-equality-criterion}
Let $\rho,\tau\in\mathbb S_1^+(\hilb_A\otimes\hilb_B)$, and let $X_t^{\rho,\tau}$ be as in \eqref{eq:app-optimizer-definition}. Then
$$
{\rm D}(\rho\|\tau)={\rm D}(\rho_A\|\tau_A)
$$
if and only if, for every $t>0$ there exists $Y_t\in\cL(\hilb_A)$ such that
$
X_t^{\rho,\tau}=J_A(Y_t)=Y_t\otimes I_B.
$
In this case,
$$
Y_t=(tI+\Delta_{\tau_A,\rho_A})^{-1}I_A.
$$
\end{prop}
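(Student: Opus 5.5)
The plan is to compare the two variational problems in Proposition~\ref{prop:app-relative-entropy-max-formula} pointwise in $t$. Write
$$
g(t):=\max_{Z}f_t^{\rho,\tau}(Z)-\max_{X}f_t^{\rho_A,\tau_A}(X),
$$
so that \eqref{eq:diffD} reads ${\rm D}(\rho\|\tau)-{\rm D}(\rho_A\|\tau_A)=\int_0^\infty g(t)\,dt$. By Lemma~\ref{lem:app-restriction-weighted}, $g(t)\ge0$ for every $t>0$, since the second maximum is the maximum of $f_t^{\rho,\tau}$ over $\im(J_A)$.

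\emph{Continuity of $g$.} Both maxima equal $\langle I,(tI+\Delta)^{-1}I\rangle$ for the respective $\Delta$. Each $\Delta$ is strictly positive and self-adjoint on a finite-dimensional space, so this expression is a finite sum of terms $c_k/(t+\lambda_k)$ with $c_k\ge0$ and $\lambda_k>0$. It is therefore continuous, indeed real-analytic, on $(0,\infty)$. Integrability near $\infty$ is already implicit in \eqref{eq:app-relative-entropy-integral}. Hence vanishing of the integral of a nonnegative continuous function forces $g(t)=0$ for all $t>0$, and conversely $g\equiv0$ gives equality of the relative entropies. So equality of relative entropies is equivalent to $g(t)=0$ for all $t>0$.

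\emph{Pointwise equivalence.} Fix $t$. By Lemma~\ref{lem:app-shifted-inverse-variational}, $f_t^{\rho,\tau}$ has the unique maximizer $X_t^{\rho,\tau}$. If $X_t^{\rho,\tau}=J_A(Y_t)$, then the maximum over the full space is attained inside $\im(J_A)$, so $g(t)=0$. Conversely, suppose $g(t)=0$. Take the maximizer $Y_t$ of $f_t^{\rho_A,\tau_A}$, which exists and equals $(tI+\Delta_{\tau_A,\rho_A})^{-1}I_A$ by Proposition~\ref{prop:app-relative-entropy-max-formula} applied on $\hilb_A$. Lemma~\ref{lem:app-restriction-weighted} gives $f_t^{\rho,\tau}(J_AY_t)=\max f_t^{\rho_A,\tau_A}=\max f_t^{\rho,\tau}$, so $J_AY_t$ is a global maximizer. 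Uniqueness of the global maximizer then yields $X_t^{\rho,\tau}=J_A(Y_t)$.

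\emph{Identification of $Y_t$.} The formula for $Y_t$ also follows in the forward direction. If $X_t^{\rho,\tau}=J_A(Y)$ for some $Y$, then $f_t^{\rho_A,\tau_A}(Y)=f_t^{\rho,\tau}(J_AY)$ is the global maximum, which bounds the restricted maximum from above. So $Y$ maximizes $f_t^{\rho_A,\tau_A}$, and by uniqueness in Lemma~\ref{lem:app-shifted-inverse-variational}, $Y=(tI+\Delta_{\tau_A,\rho_A})^{-1}I_A$. Note also that $J_A$ is injective, being an isometry, so $Y_t$ is determined by $X_t^{\rho,\tau}$.

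The only delicate point is passing from vanishing of the integral to vanishing for every $t$. The rational-function form of both maxima handles this cleanly, so there is no measure-zero exceptional set to worry about.
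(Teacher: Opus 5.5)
Your proof is correct and follows the paper's argument step for step. The shared steps are: nonnegativity of the integrand because the marginal maximum is the maximum of $f_t^{\rho,\tau}$ over $\im(J_A)$; continuity in $t$ to pass from a vanishing integral to vanishing for every $t>0$; uniqueness of the global maximizer $X_t^{\rho,\tau}$ for the pointwise equivalence; and uniqueness of the marginal maximizer to identify $Y_t$. Your only addition is an explicit justification of continuity, writing each maximum as a finite sum $\sum_k c_k/(t+\lambda_k)$ with $c_k\ge 0$ and $\lambda_k>0$, which fills in a step the paper merely asserts.
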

\begin{proof}
Using \eqref{eq:diffD} and Lemma~\ref{lem:app-restriction-weighted}, we can write
$$
{\rm D}(\rho\|\tau)-{\rm D}(\rho_A\|\tau_A)
=
\int_0^\infty
\Bigl(
\max_{Z\in\cL(\hilb_A\otimes\hilb_B)}f_t^{\rho,\tau}(Z) -
\max_{X\in\cL(\hilb_A)}f_t^{\rho,\tau}(J_A X)
\Bigr)dt .
$$
The second maximum is the maximum of $f_t^{\rho,\tau}$ over the subspace
$\im(J_A)$. Hence the integrand is nonnegative for every $t>0$. Since it is also
continuous in $t$, the integral vanishes if and only if the integrand vanishes
for every $t>0$.

Fix $t>0$. By Proposition~\ref{prop:app-relative-entropy-max-formula}, the full maximum of
$f_t^{\rho,\tau}$ is attained at the unique point $X_t^{\rho,\tau}$. Therefore the full maximum and the restricted maximum over
$\im(J_A)$ agree if and only if $X_t^{\rho,\tau}\in\im(J_A)$. Equivalently, there
exists $Y_t\in\cL(\hilb_A)$ such that
$$
X_t^{\rho,\tau}=J_A(Y_t)=Y_t\otimes I_B .
$$
This proves the stated equivalence.

It remains only to identify $Y_t$. If $X_t^{\rho,\tau}=J_A(Y_t)$, then
Lemma~\ref{lem:app-restriction-weighted} implies that $Y_t$ maximizes
$f_t^{\rho_A,\tau_A}$. By uniqueness of the marginal maximizer,
$
Y_t=(tI+\Delta_{\tau_A,\rho_A})^{-1}I_A .
$
\end{proof}
We now convert the equality criterion into the Petz recovery formula.

\begin{prop}[Petz recovery for the partial trace]
\label{prop:app-petz-partial-trace}
Let $\rho,\tau\in\mathbb S_1^+(\hilb_A\otimes\hilb_B)$. Then
$
{\rm D}(\rho\|\tau)={\rm D}(\rho_A\|\tau_A)
$
if and only if
\begin{equation}\label{eq:app-petz-recovery}
\rho=
\tau^{1/2}
\bigl(
\tau_A^{-1/2}\rho_A\tau_A^{-1/2}\otimes I_B
\bigr)
\tau^{1/2}.
\end{equation}
\end{prop}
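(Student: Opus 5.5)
The plan is to derive both directions from the equality criterion in Proposition~\ref{prop:app-equality-criterion}, passing from resolvents of the relative modular operator to its square root.

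\emph{Forward direction.} Assume ${\rm D}(\rho\|\tau)={\rm D}(\rho_A\|\tau_A)$. By Proposition~\ref{prop:app-equality-criterion}, for every $t>0$
$$
(tI+\Delta_{\tau,\rho})^{-1}I_{A\cup B}=J_A\bigl((tI+\Delta_{\tau_A,\rho_A})^{-1}I_A\bigr).
$$
I will integrate this identity against the weight $t^{-1/2}/\pi$ over $(0,\infty)$. This uses the scalar identity
$$
x^{-1/2}=\frac1\pi\int_0^\infty \frac{t^{-1/2}}{x+t}\,dt,\qquad x>0.
$$
Applied through the functional calculus of the positive self-adjoint operators $\Delta_{\tau,\rho}$ and $\Delta_{\tau_A,\rho_A}$ (Lemma~\ref{lem:app-delta-basic}), and using that $J_A$ is linear, this gives
$$
\Delta_{\tau,\rho}^{-1/2}(I)=J_A\bigl(\Delta_{\tau_A,\rho_A}^{-1/2}(I_A)\bigr).
$$
In finite dimensions the integral converges absolutely on each eigenvalue, so the interchange is harmless.

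Next I evaluate both sides. As in the proof of Lemma~\ref{lem:app-relative-entropy-weighted}, $\Delta_{\tau,\rho}=L_\tau R_\rho^{-1}$ with commuting factors, so $\Delta_{\tau,\rho}^{-1/2}(X)=\tau^{-1/2}X\rho^{1/2}$. The identity therefore reads
$$
\tau^{-1/2}\rho^{1/2}=\tau_A^{-1/2}\rho_A^{1/2}\otimes I_B .
$$
Hence $\rho^{1/2}=\tau^{1/2}(\tau_A^{-1/2}\rho_A^{1/2}\otimes I_B)$. Writing $\rho=\rho^{1/2}(\rho^{1/2})^*$ then yields exactly \eqref{eq:app-petz-recovery}.

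\emph{Converse direction.} Assume \eqref{eq:app-petz-recovery} holds and define the Petz map
$$
\mathcal P(X):=\tau^{1/2}\bigl(\tau_A^{-1/2}X\tau_A^{-1/2}\otimes I_B\bigr)\tau^{1/2},\qquad X\in\cL(\hilb_A).
$$
\begin{enumerate}[(a)]
\item $\mathcal P$ is completely positive, being a composition of congruences and the map $X\mapsto X\otimes I_B$.
\item $\mathcal P$ is trace preserving: by the defining property of the partial trace, $\trace\mathcal P(X)=\trace\bigl(\tau_A^{-1/2}X\tau_A^{-1/2}\tau_A\bigr)=\trace X$.
\item $\mathcal P(\tau_A)=\tau$, and by hypothesis $\mathcal P(\rho_A)=\rho$.
\end{enumerate}
Data processing for $\mathcal P$ then gives ${\rm D}(\rho\|\tau)\le{\rm D}(\rho_A\|\tau_A)$. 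Combined with Proposition~\ref{prop:app-monotonicity-partial-trace}, this forces equality.

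The main obstacle is that the appendix only proves monotonicity for the partial trace, not for general channels. To stay self-contained I would reduce to that case via a Stinespring dilation. Write $\mathcal P(X)=\trace_E\bigl(U(X\otimes\phi)U^*\bigr)$ for a fixed state $\phi$ and a unitary $U$. Relative entropy is invariant under unitary conjugation and under tensoring both arguments with the same $\phi$, and Proposition~\ref{prop:app-monotonicity-partial-trace} handles the final partial trace. The strict positivity assumed there can be arranged by taking $\phi$ faithful, for example maximally mixed. If this detour proves awkward, the fallback is simply to cite Lindblad's general data-processing theorem, already recalled in the remark above.
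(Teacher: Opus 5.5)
Your proof is correct once you take your own fallback for the converse. The forward direction follows the paper's route with one different technical step. The self-contained Stinespring detour contains a false claim.

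\textbf{Forward direction.} Your chain matches the paper's: equality criterion, then intertwining of resolvents, then intertwining of $\Delta^{-1/2}$, then $\rho^{1/2}=\tau^{1/2}(\tau_A^{-1/2}\rho_A^{1/2}\otimes I_B)$, then the Petz formula. The only difference is how you get from resolvents to $\Delta^{-1/2}$.
\begin{itemize}
\item The paper expands $(tI+\Delta)^{-1}$ in powers of $t^{-1}$ for large $t$. Matching coefficients gives $\Delta_{\tau,\rho}^kI=J_A(\Delta_{\tau_A,\rho_A}^kI_A)$ for all $k$. It then uses a polynomial that interpolates $s^{-1/2}$ on the union of the two spectra.
\item You integrate the resolvent identity against $t^{-1/2}/\pi$. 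The integral converges at both ends and commutes with the linear map $J_A$, so this is valid.
\end{itemize}
Your route is shorter. The paper's moment matching has the small advantage of giving $f(\Delta_{\tau,\rho})I=J_A(f(\Delta_{\tau_A,\rho_A})I_A)$ for every function $f$ at once.

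\textbf{Converse.} Your argument is the paper's: the Petz map is completely positive and trace preserving, sends $\tau_A\mapsto\tau$ and $\rho_A\mapsto\rho$, and data processing finishes. The paper simply cites Lindblad's general data-processing inequality here, which is your fallback.

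\textbf{The flaw in the Stinespring detour.} You cannot just choose the ancilla state to be faithful while keeping a unitary dilation of $\mathcal P$.
\begin{itemize}
\item Suppose $\phi=I/k$ on a $k$-dimensional ancilla and $U$ is unitary. Then $\trace_E\bigl(U(I_A\otimes\phi)U^*\bigr)=\frac{\dim\hilb_E}{k}\,I$.
\item So such a dilation only realizes channels that send $I_A$ to a multiple of the identity.
\item But $\mathcal P(I_A)=\tau^{1/2}(\tau_A^{-1}\otimes I_B)\tau^{1/2}$. This is a multiple of $I$ only when $\tau=\tau_A\otimes I_B/\dim\hilb_B$.
\end{itemize}

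\textbf{A correct self-contained version.}
\begin{enumerate}
\item Take the usual dilation with a pure ancilla $|0\rangle\langle0|$, and replace it by $\phi_\epsilon=(1-\epsilon)|0\rangle\langle0|+\epsilon I/k$.
\item This dilates the perturbed channel $\mathcal P_\epsilon=(1-\epsilon)\mathcal P+\epsilon\,\trace_E\bigl(U(\,\cdot\,\otimes I/k)U^*\bigr)$.
\item The states $U(\rho_A\otimes\phi_\epsilon)U^*$ and $U(\tau_A\otimes\phi_\epsilon)U^*$ are strictly positive. Their relative entropy equals ${\rm D}(\rho_A\|\tau_A)$.
\item Proposition~\ref{prop:app-monotonicity-partial-trace} then gives ${\rm D}(\mathcal P_\epsilon(\rho_A)\|\mathcal P_\epsilon(\tau_A))\le{\rm D}(\rho_A\|\tau_A)$.
\item Let $\epsilon\to0$. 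Use $\mathcal P_\epsilon(\rho_A)\to\rho$, $\mathcal P_\epsilon(\tau_A)\to\tau$, and continuity of ${\rm D}$ on pairs of strictly positive states.
\end{enumerate}
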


\begin{proof}
Assume first that equality holds in monotonicity. By
Proposition~\ref{prop:app-equality-criterion}, for every $t>0$,
\begin{equation}\label{eq:app-inverse-intertwining}
(tI+\Delta_{\tau,\rho})^{-1}I_{A\cup B}
=
J_A\bigl((tI+\Delta_{\tau_A,\rho_A})^{-1}I_A\bigr).
\end{equation}
From this, we first derive the same identity with the function $s^{-1/2}$ in
place of $s\mapsto (t+s)^{-1}$. For all sufficiently large $t$,
$$
(tI+\Delta)^{-1}
=
t^{-1}\sum_{k=0}^\infty (-1)^k t^{-k}\Delta^k .
$$
Applying this expansion to both sides of
\eqref{eq:app-inverse-intertwining}, we obtain two vector-valued power series in
$t^{-1}$ that agree for all sufficiently large $t$. Comparing coefficients gives
\begin{equation}\label{eq:app-power-intertwining}
\Delta_{\tau,\rho}^k I_{A\cup B}
=
J_A\bigl(\Delta_{\tau_A,\rho_A}^k I_A\bigr),
\qquad k\ge 0.
\end{equation}
Hence the same identity holds with $p(\Delta)$ in place of $\Delta^k$, for every
polynomial $p$. Since the spaces are finite dimensional and the spectra of
$\Delta_{\tau,\rho}$ and $\Delta_{\tau_A,\rho_A}$ are contained in $(0,\infty)$,
we can choose a polynomial $p$ such that $p(s)=s^{-1/2}$ on
$
{\rm spec}(\Delta_{\tau,\rho})\cup{\rm spec}(\Delta_{\tau_A,\rho_A}).
$
By functional calculus,
$
p(\Delta_{\tau,\rho})=\Delta_{\tau,\rho}^{-1/2}
$
and
$
p(\Delta_{\tau_A,\rho_A})=\Delta_{\tau_A,\rho_A}^{-1/2}.
$
Thus \eqref{eq:app-power-intertwining} gives
\begin{equation}\label{eq:app-minus-half-intertwining}
\Delta_{\tau,\rho}^{-1/2}I_{A\cup B}
=
J_A\bigl(\Delta_{\tau_A,\rho_A}^{-1/2}I_A\bigr).
\end{equation}
Now
$$
\Delta_{\tau,\rho}^{-1/2}
=
L_{\tau^{-1/2}}R_{\rho^{1/2}},
\qquad
\Delta_{\tau_A,\rho_A}^{-1/2}
=
L_{\tau_A^{-1/2}}R_{\rho_A^{1/2}}.
$$
Applying both sides to the identity operators, \eqref{eq:app-minus-half-intertwining}
becomes
$$
\tau^{-1/2}\rho^{1/2}
=
J_A(\tau_A^{-1/2}\rho_A^{1/2}).
$$
Multiplying on the left by $\tau^{1/2}$ gives
\begin{equation}\label{eq:app-amplitude-identity}
\rho^{1/2}
=
\tau^{1/2}M,
\qquad
M:=J_A(\tau_A^{-1/2}\rho_A^{1/2}).
\end{equation}
Therefore
$$
\rho
=
\rho^{1/2}(\rho^{1/2})^*
=
\tau^{1/2}MM^*\tau^{1/2}.
$$
It remains to compute $MM^*$. By part (iii) of
Lemma~\ref{lem:app-JA-basic},
\begin{align*}
MM^*
&=
J_A(\tau_A^{-1/2}\rho_A^{1/2})
J_A(\tau_A^{-1/2}\rho_A^{1/2})^* \\
&=
J_A(\tau_A^{-1/2}\rho_A^{1/2})
J_A(\rho_A^{1/2}\tau_A^{-1/2}) \\
&=
J_A(\tau_A^{-1/2}\rho_A\tau_A^{-1/2}).
\end{align*}
Thus
$
\rho
=
\tau^{1/2}
J_A(\tau_A^{-1/2}\rho_A\tau_A^{-1/2})
\tau^{1/2},
$
which is \eqref{eq:app-petz-recovery}. 

Conversely, assume \eqref{eq:app-petz-recovery}. Define
$$
\mathcal R_\tau(W)
:=
\tau^{1/2}
J_A(\tau_A^{-1/2}W\tau_A^{-1/2})
\tau^{1/2},
\qquad W\in\cL(\hilb_A).
$$
This map is completely positive. It is trace preserving because
\begin{align*}
\trace\{\mathcal R_\tau(W)\}
&=
\trace\{J_A(\tau_A^{-1/2}W\tau_A^{-1/2})\tau\} =
\trace\{(\tau_A^{-1/2}W\tau_A^{-1/2})\tau_A\} =
\trace(W).
\end{align*}
Moreover,
$
\mathcal R_\tau(\tau_A)=\tau$ and
$\mathcal R_\tau(\rho_A)=\rho$.
By monotonicity under the partial trace,
$
{\rm D}(\rho\|\tau)\ge {\rm D}(\rho_A\|\tau_A).
$
By the general data-processing inequality for trace-preserving completely
positive maps, applied to $\mathcal R_\tau$,
$$
{\rm D}(\rho_A\|\tau_A)
\ge
{\rm D}(\mathcal R_\tau(\rho_A)\|\mathcal R_\tau(\tau_A))
=
{\rm D}(\rho\|\tau).
$$
Thus equality holds.
\end{proof}

\subsection{Conditional independence as Petz recovery}
\label{sec:app-conditional-independence-petz}

We next specialize Proposition~\ref{prop:app-petz-partial-trace} to conditional
mutual information. The equality case of strong subadditivity, or equivalently
the structure of states with vanishing conditional mutual information, was
characterized by \citet{hayden_jozsa_petz_winter_2004}. The argument below does
not use that decomposition; it uses only the partial-trace recovery formula
proved above.

Let $\sigma\in\mathbb S_1^+(\hilb_A\otimes\hilb_B\otimes\hilb_C)$ and set
$
\tau=\sigma_A\otimes\sigma_{B\cup C}.
$
Then
$
\tau_{A\cup C}=\sigma_A\otimes\sigma_C.
$
Moreover,
\begin{align*}
{\rm D}(\sigma\|\tau)
&=
\trace(\sigma\log\sigma)
-
\trace\{\sigma(\log\sigma_A+\log\sigma_{B\cup C})\} =
-{\rm S}(\sigma)+{\rm S}(\sigma_A)+{\rm S}(\sigma_{B\cup C}),
\end{align*}
and
$
{\rm D}(\sigma_{A\cup C}\|\tau_{A\cup C})
=
-{\rm S}(\sigma_{A\cup C})+{\rm S}(\sigma_A)+{\rm S}(\sigma_C).
$
Thus
\begin{equation}\label{eq:app-cmi-relative-entropy-difference}
{\rm D}(\sigma\|\tau)
-
{\rm D}(\sigma_{A\cup C}\|\tau_{A\cup C})
=
{\rm I}(A:B\cd C)_\sigma.
\end{equation}
Therefore ${\rm I}(A:B\cd C)_\sigma=0$ is exactly equality in monotonicity for the
partial trace over $B$, applied to the pair $(\sigma,\tau)$.

\begin{prop}\label{prop:app-conditional-independence-petz}
Let $\sigma\in\mathbb S_1^+(\hilb_A\otimes\hilb_B\otimes\hilb_C)$. Then
${\rm I}(A:B\cd C)_\sigma=0$ if and only if
\begin{equation}\label{eq:app-conditional-independence-petz}
\sigma
=
\sigma_{B\cup C}^{1/2}\sigma_C^{-1/2}
\sigma_{A\cup C}
\sigma_C^{-1/2}\sigma_{B\cup C}^{1/2},
\end{equation}
where operators on $B\cup C$ act trivially on $A$, and operators on
$A\cup C$ act trivially on $B$.
\end{prop}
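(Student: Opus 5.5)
The plan is to apply Proposition~\ref{prop:app-petz-partial-trace} to the pair $(\sigma,\tau)$ with $\tau=\sigma_A\otimes\sigma_{B\cup C}$, partial trace over $B$. The identity \eqref{eq:app-cmi-relative-entropy-difference} shows that ${\rm I}(A:B\cd C)_\sigma=0$ holds exactly when ${\rm D}(\sigma\|\tau)={\rm D}(\sigma_{A\cup C}\|\tau_{A\cup C})$. This is equality in monotonicity for $\trace_B$, with the kept system being $A\cup C$. Both $\sigma$ and $\tau$ are strictly positive, so the proposition applies. It says that equality holds if and only if
$$
\sigma=\tau^{1/2}\bigl(\tau_{A\cup C}^{-1/2}\sigma_{A\cup C}\tau_{A\cup C}^{-1/2}\otimes I_B\bigr)\tau^{1/2}.
$$

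The remaining work is to simplify this expression. We have $\tau^{1/2}=\sigma_A^{1/2}\otimes\sigma_{B\cup C}^{1/2}$, where the two tensor factors act on disjoint systems and therefore commute. We also have $\tau_{A\cup C}^{-1/2}=\sigma_A^{-1/2}\otimes\sigma_C^{-1/2}$. Substituting these, the $A$-only factors $\sigma_A^{\pm1/2}$ appear as follows:
$$
\sigma=\sigma_{B\cup C}^{1/2}\,\sigma_A^{1/2}\sigma_A^{-1/2}\,\sigma_C^{-1/2}\sigma_{A\cup C}\sigma_C^{-1/2}\,\sigma_A^{-1/2}\sigma_A^{1/2}\,\sigma_{B\cup C}^{1/2}.
$$
To reach this form, first commute $\sigma_A^{1/2}$ (embedded) past $\sigma_{B\cup C}^{1/2}$ to bring it adjacent to $\sigma_A^{-1/2}$. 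Then note that $\sigma_A^{-1/2}$ and $\sigma_C^{-1/2}$ act on different factors and commute, so $\tau_{A\cup C}^{-1/2}$ can be written in the order $\sigma_A^{-1/2}\sigma_C^{-1/2}$ on the left and $\sigma_C^{-1/2}\sigma_A^{-1/2}$ on the right. The $\sigma_A$ factors then cancel, leaving
$$
\sigma=\sigma_{B\cup C}^{1/2}\sigma_C^{-1/2}\sigma_{A\cup C}\sigma_C^{-1/2}\sigma_{B\cup C}^{1/2},
$$
which is \eqref{eq:app-conditional-independence-petz}.

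Both directions then follow, because every step is an equivalence: the relative-entropy identity, the Petz criterion, and the algebraic rewriting. The only real obstacle is the bookkeeping of embeddings. Each operator must be read as tensored with identities on the missing systems, so that the commutation claims are legitimate. In particular, $\sigma_A\otimes I_{B\cup C}$ commutes with $I_A\otimes\sigma_{B\cup C}$, but it does not commute with $\sigma_{A\cup C}$. The cancellation therefore has to happen outside the middle factor, which is exactly where the $\sigma_A^{\pm1/2}$ factors sit.
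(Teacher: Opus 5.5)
Your proposal is correct and follows the paper's proof essentially verbatim. You reduce ${\rm I}(A:B\cd C)_\sigma=0$ to equality in monotonicity for $\trace_B$ with $\tau=\sigma_A\otimes\sigma_{B\cup C}$, invoke Proposition~\ref{prop:app-petz-partial-trace}, and cancel the $\sigma_A^{\pm1/2}$ factors that sit outside the middle term; your explicit check of which embedded factors commute is the bookkeeping the paper leaves implicit.
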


\begin{proof}
By \eqref{eq:app-cmi-relative-entropy-difference},
${\rm I}(A:B\cd C)_\sigma=0$ is equivalent to
$
{\rm D}(\sigma\|\tau)={\rm D}(\sigma_{A\cup C}\|\tau_{A\cup C})
$,
where $\tau=\sigma_A\otimes\sigma_{B\cup C}$.
Applying Proposition~\ref{prop:app-petz-partial-trace} with the retained system
$A\cup C$ and the traced system $B$ gives
$$
\sigma
=
\tau^{1/2}
\left(
\tau_{A\cup C}^{-1/2}
\sigma_{A\cup C}
\tau_{A\cup C}^{-1/2}
\otimes I_B
\right)
\tau^{1/2}.
$$
Since
$
\tau^{1/2}=\sigma_A^{1/2}\otimes\sigma_{B\cup C}^{1/2}
$
and
$
\tau_{A\cup C}^{-1/2}=\sigma_A^{-1/2}\otimes\sigma_C^{-1/2},
$
the factors $\sigma_A^{1/2}$ and $\sigma_A^{-1/2}$ cancel on the left and on
the right. With the convention that all operators are embedded into
$\cL(\hilb_A\otimes\hilb_B\otimes\hilb_C)$, this gives
\eqref{eq:app-conditional-independence-petz}.
\end{proof}

\subsection{Proof of Proposition~\ref{prop:intersection}}
\label{sec:proof-inters}

We prove the intersection property by applying the equality criterion for the
partial trace, Proposition~\ref{prop:app-equality-criterion}. Let
$\rho\in\mathbb S_1^+(\hilb_{A\cup B\cup C\cup D})$ be strictly positive and
assume
$$
A\ciph B\cd(C\cup D)\,[\rho]
\qquad\text{and}\qquad
A\ciph D\cd(B\cup C)\,[\rho].
$$
Set
$
\tau=\rho_A\otimes\rho_{B\cup C\cup D}.
$
Then
$$
\tau_{A\cup C\cup D}=\rho_A\otimes\rho_{C\cup D},
\qquad
\tau_{A\cup B\cup C}=\rho_A\otimes\rho_{B\cup C},
\qquad
\tau_{A\cup C}=\rho_A\otimes\rho_C .
$$
We first relate the two assumed conditional independences to equality cases in
monotonicity. Since $\log\tau=\log\rho_A\otimes I_{B\cup C\cup D}
+I_A\otimes\log\rho_{B\cup C\cup D}$, we have
$$
{\rm D}(\rho\|\tau)
=
-{\rm S}(\rho)+{\rm S}(\rho_A)+{\rm S}(\rho_{B\cup C\cup D}).
$$
Similarly,
$$
{\rm D}(\rho_{A\cup C\cup D}\|\tau_{A\cup C\cup D})
=
-{\rm S}(\rho_{A\cup C\cup D})+{\rm S}(\rho_A)+{\rm S}(\rho_{C\cup D}).
$$
Therefore
$$
{\rm D}(\rho\|\tau)
-
{\rm D}(\rho_{A\cup C\cup D}\|\tau_{A\cup C\cup D})
=
{\rm S}(\rho_{A\cup C\cup D})+{\rm S}(\rho_{B\cup C\cup D})  
-{\rm S}(\rho_{C\cup D})-{\rm S}(\rho),
$$
which is equal to ${\rm I}(A:B\cd C\cup D)_\rho$.
Hence the assumption $A\ciph B\cd(C\cup D)\,[\rho]$ is exactly equality in
Proposition~\ref{prop:app-monotonicity-partial-trace} for the partial trace
over $B$, applied to the pair $(\rho,\tau)$.

The same computation after tracing out $D$ gives
$$
{\rm D}(\rho\|\tau)
-
{\rm D}(\rho_{A\cup B\cup C}\|\tau_{A\cup B\cup C})
=
{\rm I}(A:D\cd B\cup C)_\rho .
$$
Thus the assumption $A\ciph D\cd(B\cup C)\,[\rho]$ is exactly equality in
Proposition~\ref{prop:app-monotonicity-partial-trace} for the partial trace
over $D$, again applied to the pair $(\rho,\tau)$. For $t>0$, let
$$
X_t:=X_t^{\rho,\tau}
=
(tI+\Delta_{\tau,\rho})^{-1}I_{A\cup B\cup C\cup D},
$$
as in \eqref{eq:app-optimizer-definition}. By
Proposition~\ref{prop:app-equality-criterion}, equality for the partial trace
over $B$ implies that $X_t$ is constant in the $B$-coordinate:
$$
X_t\in
\cL(\hilb_A\otimes\hilb_C\otimes\hilb_D)\otimes I_B .
$$
Similarly, equality for the partial trace over $D$ implies that $X_t$ is
constant in the $D$-coordinate:
$$
X_t\in
\cL(\hilb_A\otimes\hilb_B\otimes\hilb_C)\otimes I_D .
$$
Equivalently, after identifying all spaces inside
$\cL(\hilb_A\otimes\hilb_B\otimes\hilb_C\otimes\hilb_D)$,
\begin{align*}
X_t
&\in
\bigl(\cL(\hilb_A\otimes\hilb_C\otimes\hilb_D)\otimes I_B\bigr)
\cap
\bigl(\cL(\hilb_A\otimes\hilb_B\otimes\hilb_C)\otimes I_D\bigr).
\end{align*}
This intersection is
$
\cL(\hilb_A\otimes\hilb_C)\otimes I_B\otimes I_D .
$
To see this, choose bases of $\cL(\hilb_B)$ and $\cL(\hilb_D)$ whose first
elements are $I_B$ and $I_D$. Expanding an operator in the corresponding
tensor-product basis, membership in the first subspace forces all non-identity
basis coefficients in the $B$-coordinate to vanish, while membership in the
second subspace forces all non-identity basis coefficients in the $D$-coordinate
to vanish. Hence membership in both subspaces forces identity components in
both the $B$- and $D$-coordinates. Thus, for every $t>0$,
$$
X_t\in
\cL(\hilb_A\otimes\hilb_C)\otimes I_B\otimes I_D .
$$
Applying Proposition~\ref{prop:app-equality-criterion} once more, now to the
partial trace over $B\cup D$, gives
$$
{\rm D}(\rho\|\tau)={\rm D}(\rho_{A\cup C}\|\tau_{A\cup C}).
$$
Finally, using $\tau_{A\cup C}=\rho_A\otimes\rho_C$, we compute
$$
{\rm D}(\rho\|\tau)-{\rm D}(\rho_{A\cup C}\|\tau_{A\cup C})
=
{\rm I}(A:B\cup D\cd C)_\rho .
$$
Therefore ${\rm I}(A:B\cup D\cd C)_\rho=0$, which means
$
A\ciph (B\cup D)\cd C\,[\rho].
$
This proves the intersection property.

\end{document}